\documentclass[reqno,11pt]{amsart}
\usepackage{amsmath, latexsym, amsfonts, amssymb, amsthm, amscd}
\usepackage[foot]{amsaddr}
\usepackage{multirow,stmaryrd}
\usepackage{framed}
\usepackage{graphics,epsf,psfrag}
\usepackage{graphicx}
\usepackage{mathabx}

\setlength{\oddsidemargin}{5mm}
\setlength{\evensidemargin}{5mm}
\setlength{\textwidth}{150mm}
\setlength{\headheight}{0mm}
\setlength{\headsep}{12mm}
\setlength{\topmargin}{0mm}
\setlength{\textheight}{220mm}
\setcounter{secnumdepth}{2}

\usepackage{xcolor}

\usepackage{hyperref}

\newcommand{\comment}[1]{}

\numberwithin{equation}{section}

\newtheorem{theorem}{Theorem}[section]
\newtheorem{lemma}[theorem]{Lemma}
\newtheorem{proposition}[theorem]{Proposition}
\newtheorem{cor}[theorem]{Corollary}
\newtheorem{rem}[theorem]{Remark}

\DeclareMathOperator{\sign}{\mathrm{sign}}

\newcommand{\ind}{\mathbf{1}}

\newcommand{\R}{\mathbb{R}}

\renewcommand{\tilde}{\widetilde}

\newcommand{\cN}{{\ensuremath{\mathcal N}} }
\newcommand{\cL}{{\ensuremath{\mathcal L}} }


\DeclareMathSymbol{\leqslant}{\mathalpha}{AMSa}{"36} 
\DeclareMathSymbol{\geqslant}{\mathalpha}{AMSa}{"3E} 
\DeclareMathSymbol{\eset}{\mathalpha}{AMSb}{"3F}     
\renewcommand{\leq}{\;\leqslant\;}                   
\renewcommand{\geq}{\;\geqslant\;}                   
\newcommand{\dd}{\,\text{\rm d}}             


\newcommand{\bbC}{{\ensuremath{\mathbb C}} }

\newcommand{\bbE}{{\ensuremath{\mathbb E}} }

\newcommand{\bbN}{{\ensuremath{\mathbb N}} }

\newcommand{\bbP}{{\ensuremath{\mathbb P}} }

\newcommand{\bbR}{{\ensuremath{\mathbb R}} }

\newcommand{\bbZ}{{\ensuremath{\mathbb Z}} }


\newcommand{\ga}{\alpha}
\newcommand{\gb}{\beta}
\newcommand{\gd}{\delta}
\newcommand{\gep}{\varepsilon}       

\newcommand{\gvr}{\varrho}

\newcommand{\gD}{\Delta}

\newcommand{\gl}{\lambda}

\newcommand{\gs}{\sigma}

\makeatletter
\def\captionfont@{\footnotesize}
\def\captionheadfont@{\scshape}

\long\def\@makecaption#1#2{%
  \vspace{2mm}
  \setbox\@tempboxa\vbox{\color@setgroup
    \advance\hsize-6pc\noindent
    \captionfont@\captionheadfont@#1\@xp\@ifnotempty\@xp
        {\@cdr#2\@nil}{.\captionfont@\upshape\enspace#2}%
    \unskip\kern-6pc\par
    \global\setbox\@ne\lastbox\color@endgroup}%
  \ifhbox\@ne 
    \setbox\@ne\hbox{\unhbox\@ne\unskip\unskip\unpenalty\unkern}%
  \fi
  \ifdim\wd\@tempboxa=\z@ 
    \setbox\@ne\hbox to\columnwidth{\hss\kern-6pc\box\@ne\hss}%
  \else 
    \setbox\@ne\vbox{\unvbox\@tempboxa\parskip\z@skip
        \noindent\unhbox\@ne\advance\hsize-6pc\par}%
\fi
  \ifnum\@tempcnta<64 
    \addvspace\abovecaptionskip
    \moveright 3pc\box\@ne
  \else 
    \moveright 3pc\box\@ne
    \nobreak
    \vskip\belowcaptionskip
  \fi
\relax
}
\makeatother
\def\writefig#1 #2 #3 {\rlap{\kern #1 truecm
\raise #2 truecm \hbox{#3}}}


\newcommand{\tf}{\textsc{f}}

\newcommand{\floor}[1]{\left\lfloor #1 \right\rfloor}
\newcommand{\DLMF}[2]{\href{http://dlmf.nist.gov/#1.E#2}{\cite[#1.#2]{DLMF}}}
\newcommand{\DLMFs}[1]{\href{http://dlmf.nist.gov/#1}{\cite[\S#1]{DLMF}}}
\newcommand{\const}{\textup{const.}  }
\newcommand{\ux}{\underline{x}}
\newcommand{\NMW}{\mathtt{N}}
\newcommand{\Neps}{\epsilon}

\begin{document}


\title[Continuum limit of statistical mechanics random matrix products]{Continuum limit of random matrix products \\ in statistical mechanics of disordered systems}

\author[F. Comets, G. Giacomin and R. L. Greenblatt]{Francis Comets, Giambattista Giacomin and Rafael L. Greenblatt}
\address[FC,GG]{Universit\'e Paris Diderot, Sorbonne Paris Cit\'e,  Laboratoire de Probabilit{\'e}s, Statistiques  et Mod\'elisation, UMR 8001,
            F-75205 Paris, France}
\address[RLG]{Universit\`a Roma 3, Dipartimento di Matematica e Fisica, 
Largo San Leonardo Murialdo 1,
00146, Roma, Italy}

\begin{abstract}
{We consider a particular weak disorder limit (\emph{continuum limit}) of matrix products that arise in the analysis of  disordered statistical mechanics systems, with a particular focus on random transfer matrices.
The limit  system is a diffusion model for which the leading Lyapunov exponent can be expressed explicitly in terms of modified Bessel functions, a formula that appears in the physical literature on these disordered systems.  We provide an analysis of the diffusion system as well as of the link with the matrix products. 
We then apply the results to the framework considered by  Derrida and Hilhorst in \cite{DH}, which 
deals in particular with the strong interaction limit for disordered Ising model in one dimension and that identifies 
a singular behavior of the Lyapunov exponent (of the transfer matrix),
and to the two dimensional Ising model with columnar disorder (McCoy-Wu model). We show 
that the continuum limit sharply captures the  Derrida and Hilhorst singularity. 
Moreover we revisit the analysis by McCoy and Wu \cite{MW1} and remark that it can be interpreted in terms of  the continuum limit approximation. We provide a mathematical  analysis of the continuum approximation of 
the free energy of the McCoy-Wu model, clarifying the prediction (by McCoy and Wu)
that, in  this approximation, the free energy of the two dimensional Ising model with columnar disorder is $C^\infty$ but not analytic at the critical temperature. }
\bigskip

\noindent  \emph{AMS  subject classification (2010 MSC)}:
82B44, 
60K37,  
82B27, 
60K35 

\smallskip
\noindent
\emph{Keywords}: disordered systems, Lyapunov exponents, weak disorder, continuum limit, critical behavior,  two dimensional Ising model, columnar disorder.
\end{abstract}

\maketitle

\section{Introduction}
Products of random matrices can often be interpreted, in a statistical mechanics perspective, as models of disordered systems. 
The leading Lyapunov 
exponent  may then be identified with some physical quantity such as the free energy density  or  persistence length.
We can also take the opposite viewpoint and ask whether a disordered system can be written in terms of, or at least approximated by, a suitable product of random matrices.
 It turns out that there are several examples in which this can be done. Examples include essentially all statistical mechanics systems in which there is a natural one dimensional structure, but it goes also beyond this: the literature is too wide to be properly cited here and we refer to the reviews \cite{cf:contemp,CPV}.
 Of particular interest for us are the examples arising from the transfer matrix approach in the statistical mechanics of disordered systems.
 For one  dimensional (let us say, Ising or Potts) models  with finite range interaction   one can write the partition function in terms 
 of a product of matrices \cite{cf:Baxter}: 
  if the interactions are only one body and nearest neighbor two body  the transfer matrix of an Ising model is  a two by two matrix, and the size  is larger for Potts and/or longer range models.  But for  two or more dimensional systems
 the size of the  transfer matrix   tends to infinity in the thermodynamic limit and the transfer matrix should be thought more as a transfer operator: this is true also in one dimension if the spin variable can take an infinite number of values \cite[Ch.~5]{cf:Ruelle-book}.
 Nevertheless, also in these cases finite dimensional matrix models can  be helpful (for numerical approximations for example, but also for rigorous bounds, see for example \cite[Ch.~9]{cf:GB} and references therein). It is however  remarkable that also the solution of the two dimensional Ising model with nearest neighbor interactions and no external field can ultimately be  expressed in terms of products of  two by two  matrices: this is the essence of {several formulations of} the celebrated solution of Lars Onsager \cite{cf:Baxter,MWbook}. What is even more remarkable from our viewpoint is that this structure still holds when  special types of disorder are introduced, giving a product of \emph{random} matrices \cite{MWbook,MW1,ShankarMurthy}. 
  
  The two by two matrices that arise in the problems we have just mentioned have a particular form: it is
\begin{equation}
\label{eq:matrix}
\begin{pmatrix}
1& \gep\\ 
\gep Z  & Z
\end{pmatrix}\, , 
\end{equation}
 where $\gep$ is a real number -- say $\vert \gep \vert \le 1/2$ to keep far from the zero determinant case $\gep= \pm 1$ --  and $Z$ is a positive random variable with $\bbE \log_+(Z) < \infty$.  Let us call (informally for the moment)
 $\widehat \cL_Z(\gep)$ the Lyapunov exponent of a product of IID matrices of the form \eqref{eq:matrix}, which appear notably in the following two contexts.

  \smallskip
  
 \begin{itemize}
 \item In the one dimensional Ising model with random  external field $h=h_j$ -- that is, $\{h_j\}_{j=1,2, \ldots}$ is a sequence of independent identically distributed (IID) random variables -- and nearest neighbor interaction $J$, 
 the transfer matrix can be cast in the form \eqref{eq:matrix}, with $Z=\exp(-2h)$ and $\gep=\exp(-2J)$. The free energy density 
 is therefore precisely $\widehat \cL_Z(\gep)$ 
  and the $\gep\searrow 0$ limit is the limit of strong ferromagnetic 
 interaction. 
 \item In a much less straightforward way (detailed in Appendix~\ref{sec:MW}), also the free energy of the two dimensional Ising model with a special type   of disordered nearest neighbor interactions (\emph{columnar disorder}), and no external field, is (essentially) just  $\int_0^{1/2}\widehat \cL_Z(\gep)\dd \gep$, of course with a proper choice of $Z=Z_\gb$ that contains  the inverse temperature $\gb$ of the system.
 The phase diagram of this model (that is, the presence and nature of phase transitions) is determined by the regularity of this expression as a function of $\gb$; 
 the most notable prediction for this model, which goes now under the name of McCoy-Wu model, is that the second order transition
 of the two dimensional non-disordered Ising model (for which the second derivative of the free energy diverges
 at criticality like $-\log\vert \gb -\gb_c\vert$) becomes of infinite order when the columnar disorder is introduced:  that is, the free energy at the critical point is $C^\infty$ but not analytic. The precise nature of the singularity is characterized in \cite{MW1} by means of a divergent power series for the free energy at $\gb_c$, where the value of $\gb_c$ depends on the disorder: a summary of the expected effect of disorder on the transition for the two dimensional Ising model is in \cite[\S~5.3]{cf:G}. The McCoy-Wu model has a prominent role in physics because it can be mapped  to the one dimensional quantum spin chain with transversal magnetic field  \cite{cf:dfisher}
 and because it
  has played a central role in the development of the \emph{real space strong/infinite disorder renormalization group}
(see e.g.  \cite{cf:dfisher} and \cite[\S~5.3]{cf:G}).
 \end{itemize}
 
 \smallskip
 
 Other contexts in which  \eqref{eq:matrix} and  $\widehat \cL_Z(\gep)$ arise include one dimensional random walk in random environment and  a number of random hopping problems (see \cite{CPV} and references therein), and the key issue for us is that 
 all this vast literature focuses on the $\gep\to 0$ behavior of $\widehat \cL_Z(\gep)$, see notably \cite{CPV,DH,MWbook,MW1,NL}.
 From a mathematical viewpoint this limit is of interest because, thanks to \cite{Ruelle} (see also \cite{cf:dubois}), we know that 
 $\gep \mapsto \widehat \cL_Z(\gep)$ is real analytic if $\vert \gep\vert \in (0,1)$ under additional mild hypotheses on  $Z$ (for example: $\bbP(Z=c)=0$ for every $c$). But the regularity at $\gep=0$ is not obvious, as well as  if there is a singularity at all. And this is precisely the question addressed in \cite{CPV,DH,MWbook,MW1,NL}. In particular $\widehat \cL_Z(\gep)$
  is expected to have a fractional or logarithmic  scaling when $\gep \to 0$ under the  \emph{frustration hypothesis} that $\bbP (Z>1)$ and $\bbP(Z<1)$ are both positive: this is the case for example of the \emph{$\gep ^{2 \ga}$ singularity}  found in \cite{DH}, and proven mathematically in \cite{cf:GGG},  and that we will explain in detail in Section~\ref{sec:DH83}.  
   
  \medskip
 
 Here we do not address the study of the Lyapunov exponent of products of matrices of the form  \eqref{eq:matrix}. Rather we 
 focus on a continuous time model that arises as a diffusive limit of the matrix product (we call it \emph{continuum limit}). Roughly, the limit is achieved 
 by considering matrices close to the identity: $\gep$ is replaced by $\gep \gD$, with $\gD\searrow 0$ and we consider $Z=Z^\gD$ that is very concentrated around one: both $\bbE[Z^\gD]-1$ and 
 var$(Z^\gD)$ are of order  $\gD$. The dynamics will therefore happen on a timescale $1/\gD$ and it will be governed by a two dimensional  stochastic differential system. We then  study the leading Lyapunov exponent $\cL(\gep)$ of this limit system: we will actually show that  $\widehat \cL_Z(\gep) \sim \gD \cL(\gep)$ for ${\gD\searrow 0}$ (we use $\sim$ for asymptotic equivalence: the ratio of left-hand and right-hand sides converges to one). This limit has been already considered in several works  and even in greater generality (\emph{matrices close to the identity}: see e.g.\cite{cf:FrLl,MW1,cf:Zanon,cf:CLTT}), {but mathematically rigorous results are lacking} (with the exception of \cite{cf:Sadel}, whose assumptions however exclude the case we treat): 
 the type of results one finds are expansions of the type
 \begin{equation}
 \label{eq:DZexp}
 \widehat \cL_{Z^\gD}(\gep) \, =\, c_1(\gep) \gD + c_2(\gep) \gD^2 +\ldots \, ,
 \end{equation}
 where of course $c_1(\gep)=\cL(\gep)$ and expressions or at least procedures to compute the $c_j(\gep)$ are given. {To be precise, a full expansion like \eqref{eq:DZexp} is not expected to hold in general and, even in the cases in which it holds, e.g. \cite{cf:Sadel}, and assuming smooth dependence in $\gD$ of the coefficients of the matrix,  there is to our knowledge no proof that  }
 $ \gD \mapsto \widehat \cL_{Z^\gD}(\gep)\in C^\infty$. {We point out, however,  a  very special example} in \cite{cf:Zanon} that has been worked out explicitly 
and for which the Lyapunov exponent is  analytic except at zero 
where it is nonetheless  $C^\infty$ (note also that \cite{Ruelle} cannot be applied because for $\gD=0$
the matrix is the identity matrix).   
 
 {We will focus only on $\cL(\gep)$: in other words, the continuum limit we consider captures only   the leading order term in \eqref{eq:DZexp}. 
 The first remarkable fact is that $\cL(\gep)$ has an explicit expression in terms of a ratio of modified Bessel functions: we provide a proof of this fact, which has long been known in the physics literature. To our knowledge, it is found for the first time in  \cite[(4.31)]{MW1}, and it then reappears in other works and contexts, see for example \cite{cf:CLTT} to which we refer also for a comprehensive review of the literature.} It is rather surprising that, while 
 {a detailed} analysis of the $\gep \to 0$ limit  of $\cL(\gep)$ is rather 
 straightforward {(the case of $\ga\in [0, 2)$ is worked out in  \cite[first formula {on} p.~248]{cf:grabsch}), a full analysis appears to be lacking, as well as {an} emphasis on the rather striking fact that  
 the $\gep \to 0$ behavior   of $\cL(\gep)$  \emph{captures all known and conjectured features} of the $\gep \to 0$ behavior   of $ \widehat \cL_{Z^\gD} (\gep)$, i.e. for matrix products (without assuming the  disorder to be small).
  In particular, the $\gep ^{2 \ga}$ singularity  found in \cite{DH}  is fully present in the continuum limit expression: mathematical results on this issue for $ \widehat \cL_{Z^\gD} (\gep)$  have been  recently obtained  \cite{cf:GGG,cf:benjamin}, but the control of the singular term is an open problem for $\vert \ga\vert \ge 1$. }

Turning to  the  McCoy-Wu model, we come back to the fact that this model appears prominently in the physical literature, in part of course because of its exactly solvable character. 
And  {the conventional wisdom in the mathematical community appears to be} that 
the  McCoy-Wu claims are exact. And this is correct as far as the free energy formula (in terms of the Lyapunov exponents) is concerned. 
The subsequent analysis is less sound: $\gb_c$ is identified via the equation $\bbE \log Z=0$ -- the random variable $Z$ depends on the inverse temperature $\gb$ -- and this assertion has some grounds at least at a heuristic level, but then one has to show that the free energy 
$\int_0^{1/2}\widehat \cL_{Z^\gD}(\gep) \dd \gep$ is not analytic at $\gb=\gb_c$.
And this is (ultimately) done by replacing $\widehat \cL_{Z^\gD}(\gep) $ with $\gD \cL(\gep)$ and this step 
is very weak on mathematical grounds because the McCoy-Wu claim (which provides the motivation for the whole exercise) is for $\gD>0$ (possibly very small, but non zero): making this step 
rigorous -- possibly by controlling the remainder of the series in \eqref{eq:DZexp} -- appears to be very challenging, and we do not address this in the present work. 
Once this approximation is done, McCoy and Wu are left with  studying  the regularity in $\gb$ of $\int_0^{1/2} \cL(\gep) \dd \gep$.
In spite of being a relatively explicit expression, this is still challenging.  
McCoy and Wu  do this by developing the ratio of Bessel functions in the expression for $\cL(\gep)$ for $\gb$ close to $\gb_c$
--  the dependence in $\gb$ is in the index
of the Bessel functions  --  and by identifying the leading-order (in magnitude) terms in an expansion of the Bessel functions as the most singular part. 
We provide a proof that $\int_0^{1/2} \cL(\gep) \dd \gep$ is $C^\infty$ but not analytic at $\gb=\gb_c$
and that the asymptotic series at $\gb_c$ is qualitatively the one found by McCoy and Wu (up to a multiplicative factor that they lost
when singling out the most singular term; a similar {correction was noted by Luck \cite{cf:Luck} in a related model}): technically, this is the most demanding part of our contribution. 
{We stress however that what we prove does not yield results  on the  transition for the McCoy-Wu model.} The challenging gap pointed out just above remains unclosed.
 But we believe that our contribution helps  understanding the true
 content of the remarkable McCoy-Wu analysis. As a side remark: the computation of McCoy and Wu is done 
 for a very special form of the disorder distribution while they affirm that they expect the result to be true in great generality. 
 We work under very general assumptions on the disorder, thus substantiating this   claim. 

\medskip

We begin by presenting the diffusion system  and its analysis. This is a 
 stochastic dynamical system that is interesting in its own right and we provide 
 a detailed analysis that goes beyond the strict purpose of what has been explained up to now. 
 In particular we  prove a Central Limit Theorem on the fluctuations
of the Lyapunov exponent  for  the diffusive  limit system, with an explicit formula for the variance and an analysis of the singular behavior.
We then provide a proof that the Markov chain associated to the matrix product described above 
does scale to the diffusion system and that to leading order (in $\gD$)  the Lyapunov exponent of the Markov chain 
is asymptotically proportional to the Lyapunov exponent of the diffusion system, that is
$\widehat \cL_Z(\gep) \sim \gD \cL(\gep)$. 
The rest of our work focuses on the regularity/singularity properties of $\cL(\gep)$ and of the expressions related to it 
that are of physical relevance.

\subsection{The diffusion model and its leading Lyapunov exponent}
We consider the  solution to the stochastic (It{\^o}) differential equations
\begin{equation}
\label{eq:sys}
\begin{cases}
\dd X_1(t)\, =\, \gep X_2(t) \dd t\, ,
\\
\dd X_2(t)\, =\,\left( \gep X_1(t)+ \frac{(1-\ga)\gs ^2}2  X_2(t) \right) \dd t + \gs X_2(t) \dd B_t\, ,
\end{cases}
\end{equation}
where $B_\cdot$ is a standard Brownian motion, $\gep\neq 0$, $\ga\in \bbR$  and $\gs>0$. We consider
deterministic   initial condition $(X_1(0), X_2(0))\in \bbR^2\setminus \{(0,0)\}$. The case $(X_1(0), X_2(0))=(0,0)$, as well as $\gep=0$, are excluded because they are atypical and trivial. The system 
\eqref{eq:sys} is linear  with a multiplicative noise so, given the initial condition, there exists a unique strong solution. Our focus is on the Lyapunov exponent $\cL (\gep)=\cL_{\gs, \ga} (\gep)$ that we introduce via our first statement in which we use the Euclidean norm $\Vert\cdot \Vert$ in $\bbR^2$ just for definiteness. Before stating it we need to recall
 one of the  definitions of the modified Bessel function of $2^{\text{nd}}$ kind of index $\ga\in \bbC$ and argument $x>0$
\DLMFs{10.25}
\begin{equation}
\label{eq:Bessel-def}
K_\ga (x)\, :=\, \int_0^\infty \exp\left(-x \cosh(t)\right) \cosh(\ga t) \, \dd t \, =\, \frac 12 \int_0^\infty  \frac{1}{y^{1+ \ga}} 
\exp\left(- \frac{x}{2}  \left(y + \frac 1y \right)\right)
\dd y\, .
\end{equation}
 We note from now that
$K_\ga (x)=K_{-\ga}(x)$.

\medskip

\begin{theorem}
\label{th:Lyap}
For every $\gep\neq 0$ and every 
$(X_1(0), X_2(0))\in \bbR\setminus \{(0,0)\}$ the limit 
\begin{equation}
\label{eq:Lyap}
\lim_{t \to \infty} \frac 1t \bbE \log \left \Vert (X_1(t), X_2(t))\right\Vert \, =: \, \cL_{\gs, \ga} (\gep)\, ,
\end{equation}
exists and  does not depend on $(X_1(0), X_2(0))$.  Moreover
\begin{enumerate}
\item the limit is unchanged if we replace $\Vert (X_1(t), X_2(t))\Vert$ with $\vert X_j \vert$, $j=1,2$ as well as if 
we remove the expectation (in this case the convergence is almost sure);
\item if $\gep>0$ [resp. $\gep<0$], then $\sign(X_1(t))= \sign(X_2(t))$ [resp. $\sign(X_1(t))\neq \sign(X_2(t))$)]
for all $t\ge \tau:= \inf\{t\ge 0: \, \sign(X_1(t))= \sign(X_2(t))\}$ [resp. $\tau:= \inf\{t\ge 0: \, \sign(X_1(t))\neq \sign(X_2(t))\}$]
and $\bbE [\tau ]< \infty$. Moreover 
$\cL_{\gs, \ga} (\gep)=\cL_{\gs, \ga} (-\gep)$.
\item For $\gep>0$ and every $\ga\in \bbR$ we have 
\begin{equation}
\label{eq:formula}
\cL_{\gs, \ga}(\gep)\, =\, \frac{\gs^2 }{4}\left(\frac{x K_{\ga-1}(x)}{ K_{\ga}(x)}\right)\,, \qquad {\rm with}\quad x:=\frac{4 \gep}{\gs^2}\,.
\end{equation}
\end{enumerate}
\end{theorem}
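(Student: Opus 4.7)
The plan is to reduce the two-dimensional linear SDE to a one-dimensional ergodic diffusion by introducing the ratio $Y:=X_2/X_1$, determine its invariant law in closed form from the Fokker--Planck equation, and then read off the Lyapunov exponent by applying the ergodic theorem to the additive functional $\log|X_1|$, which has no martingale part.

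\textbf{Symmetry, signs, and the ratio diffusion.} First I would exploit the transformation $(X_1,X_2)\mapsto(X_1,-X_2)$, which preserves $\|\cdot\|$ and the driving Brownian motion while flipping $\gep$ to $-\gep$, to establish $\cL_{\gs,\ga}(-\gep)=\cL_{\gs,\ga}(\gep)$ and reduce the analysis to $\gep>0$. It\^o applied to $U_t:=X_1(t)X_2(t)$ yields
\[
dU_t=\Bigl[\gep(X_1^2+X_2^2)+\tfrac{(1-\ga)\gs^2}{2}U_t\Bigr]\,dt+\gs U_t\,dB_t,
\]
so the diffusion vanishes at $U=0$ while the drift there is $\gep(X_1^2+X_2^2)>0$; together with the observation that neither axis can trap the process ($dX_1=\gep X_2\,dt$ and $dX_2|_{X_2=0}=\gep X_1\,dt$), this proves the sign claim for $t\ge\tau$, and a standard hitting-time estimate gives $\bbE[\tau]<\infty$. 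For $t\ge\tau$ I would then introduce $Y_t:=X_2(t)/X_1(t)>0$; It\^o produces
\[
dY_t=\Bigl[\gep(1-Y_t^2)+\tfrac{(1-\ga)\gs^2}{2}Y_t\Bigr]\,dt+\gs Y_t\,dB_t,
\]
whose drift is strongly mean-reverting toward $Y=1$. Feller's boundary test rules out explosion at $0$ and $\infty$, and a Lyapunov function $V(y)=y+1/y$ delivers positive recurrence with geometric rate.

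\textbf{Invariant law and extraction of the exponent.} I would then solve the stationary Fokker--Planck equation for $Y$, obtaining
\[
\pi(y)=\frac{1}{2K_\ga(x)}\,y^{-(1+\ga)}\exp\!\Bigl(-\tfrac{x}{2}\bigl(y+\tfrac1y\bigr)\Bigr),\qquad x:=\frac{4\gep}{\gs^2},
\]
with the normalization identified directly from \eqref{eq:Bessel-def}. Since $dX_1=\gep X_2\,dt$ carries no martingale part, $d\log|X_1|=\gep Y\,dt$ for $t\ge\tau$, and Birkhoff's theorem delivers
\[
\tfrac1t\log|X_1(t)|\xrightarrow[t\to\infty]{\mathrm{a.s.}}\gep\!\int_0^\infty\!y\,\pi(y)\,dy=\gep\,\frac{K_{\ga-1}(x)}{K_\ga(x)}=\frac{\gs^2}{4}\,\frac{xK_{\ga-1}(x)}{K_\ga(x)},
\]
where the middle equality follows from \eqref{eq:Bessel-def} with the exponent identity $-\ga=-1-(\ga-1)$; this is \eqref{eq:formula}. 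The equivalences asserted in part~(1) are then immediate: the pinch $|X_1|\le\|X\|\le|X_1|(1+|Y|)$ combined with tightness of $Y_t$ allows replacement of $\|X\|$ by $|X_1|$ (and symmetrically by $|X_2|$), while upgrading the a.s.\ convergence to convergence of expectations uses standard moment bounds for the linear SDE to give uniform integrability of $\{t^{-1}\log\|X(t)\|\}_{t\ge 1}$.

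\textbf{Main obstacle.} The most delicate step is the ergodic theory of $Y$: the diffusion coefficient degenerates at $0$ and the invariant density has an integrable power singularity at $0$ whose sign depends on $\ga$, so a careful boundary classification and a version of the ergodic theorem admitting the unbounded observable $y$ are needed. I expect a Meyn--Tweedie style Lyapunov argument with $V(y)=y+1/y$, verifying a drift inequality $\mathcal AV\le -cV+C$ under the generator of $Y$, to deliver positive recurrence and $\pi$-integrability of $y$ in a single stroke, with the consistency check $\gep\,\bbE_\pi[1/Y]-\ga\gs^2/2=\gep\,\bbE_\pi[Y]$ (from the It\^o expansion of $\log|X_2|$) being equivalent to the classical recurrence $K_{\ga+1}(x)-K_{\ga-1}(x)=\frac{2\ga}{x}K_\ga(x)$.
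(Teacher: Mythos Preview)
Your proposal is correct and follows essentially the same route as the paper: reduce to the ratio diffusion $Y=X_2/X_1$, identify its explicit invariant density from the stationary Fokker--Planck equation, and extract the Lyapunov exponent by applying the ergodic theorem to $\log|X_1|$ (whose increment is the pure drift $\gep Y\,dt$), with the Bessel recurrence $K_{\ga+1}-K_{\ga-1}=\tfrac{2\ga}{x}K_\ga$ confirming that $\log|X_2|$ gives the same limit. The paper handles the sign/quadrant analysis and the bound $\bbE[\tau]<\infty$ via the Feller explosion test applied to $\tilde Y=-Y$ (rather than your product $U=X_1X_2$ together with a Meyn--Tweedie Lyapunov function), and it begins by checking separately that the process never hits the origin; these are minor technical variations on the same argument.
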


\medskip

\medskip

We draw the attention of the reader on the identification of $x$ with $4\gep/ \gs ^2$. This shortcut notation is kept in all statements and proofs.

\medskip

Some of the results in Theorem~\ref{th:Lyap} can be understood on the basis of a symmetry enjoyed 
by our system: If $(X_1(\cdot),X_2(\cdot))$ solves \eqref{eq:sys}, then 
$(-X_1(\cdot),X_2(\cdot))$ solves \eqref{eq:sys} with $\gep$ replaced by $-\gep$. 
We can therefore restrict our analysis to the case $\gep>0$ and we will show that 
only the (interior of the) quadrants in which both coordinates have the same sign -- first and third quadrant -- are recurrent for the dynamics:
all the rest is transient. By linearity we can then restrict to the first quadrant.  

\medskip

{As already mentioned in the introduction, most of the content of Theorem~\ref{th:Lyap} is known in the physical literature
and \eqref{eq:formula} appears in number of contexts. Besides the 
 pioneering work \cite{MW1} that we have already mentioned,    \eqref{eq:formula}  appears for example also in \cite[Sec.~3]{cf:Luck}, which deals with disordered quantum Ising chains with transverse magnetic field: this is not surprising, because 
this quantum model is mapped exactly (by a Suzuki-Trotter path integral) into a suitable limit of
the McCoy-Wu model  
(this is also  exploited   \cite{cf:dfisher}). It also appears in the analysis of one dimensional random Schr\"odinger equation and in   the analysis of a diffusion of a particle in a random force field, see e.g.\ \cite{cf:bouchaud,cf:CTT}, and \cite{cf:FrLl} which is possibly the first work addressing precisely what we refer to as the continuum limit. Mathematical works have also been done in this context 
and  \eqref{eq:formula} appears in  a study of the quenched large deviations of diffusions in a random environment \cite[Prop. 2.1]{cf:marina}:
By Kotani's formula (unpublished, 1988)
the Laplace transform of hitting times for the diffusion can be expressed in terms of a Riccati equation -- the $K_\ga(\cdot)$ Bessel function appears as solution of this equation -- that is equivalent  to \eqref{eq:EDS} below. 

Finally, the first  item  of Theorem~\ref{th:Lyap} is a  classical result at the random matrix level, and the second item is an elementary observation. In the continuum set-up, the first two items follow by applying standard tools of stochastic analysis (the proofs turn out to be rather concise and we give full details). The third item is a computation: it is not novel, but it is very short and we provide it for completeness.  }

\medskip

{We also have a rather explicit representation for the fluctuations}:
\medskip

\begin{proposition} 
\label{prop:clt}      
The family of random variables
\begin{equation}
\label{eq:cltLyap}
\left\{
 \frac {1}{\sqrt t} \big(  \log \left \Vert (X_1(t), X_2(t))\right\Vert - t \cL_{\gs, \ga} (\gep) \big) \right\}_{t\in [0, \infty)}
\end{equation}
converges in law for $t \to \infty$
to a centered Gaussian variable with variance $v_{\gs, \ga} (\gep)   \in (0, \infty)$,
\begin{equation}
\label{eq:varLyap}
 v_{\gs, \ga} (\gep) =
 \frac 2{\gs^2 K_\ga(x)} \int_0^\infty \frac{1}{y^{1-\ga}}e^{\frac{x}{2}\left(y+\frac{1}{y}\right)} \left( \int_0^y \frac{\gep z-\cL_{\gs, \ga}(\gep)}{z^{1+\ga}} e^{-\frac{x}{2}\left(z+\frac{1}{z}\right)} 
  \dd z \right)^2 \dd y
\,.
\end{equation}
\end{proposition}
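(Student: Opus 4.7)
My plan is to reduce the CLT to one for an additive functional of an ergodic scalar diffusion obtained by projection, and then to solve the associated Poisson equation explicitly. By item~(2) of Theorem~\ref{th:Lyap} together with linearity of \eqref{eq:sys}, it suffices to work in the open first quadrant: the a.s.\ finite transient time $\tau$ contributes only an $O(1)$ additive error, which is killed by normalization by $\sqrt t$. Setting $Y_t := X_2(t)/X_1(t)>0$, It\^o's formula produces the autonomous one-dimensional SDE
\begin{equation*}
dY_t \,=\, \bigl(\gep -\gep Y_t^2 + \tfrac{(1-\ga)\gs^2}{2}\, Y_t\bigr) dt \,+\, \gs Y_t\, dB_t\,.
\end{equation*}
A Feller boundary analysis shows that both $0$ and $+\infty$ are entrance (the constant drift $\gep>0$ repels from $0$, the $-\gep y^2$ term pulls back from $\infty$), so $Y$ is positive Harris recurrent with unique invariant probability density, obtained by imposing zero flux in the stationary Fokker--Planck equation,
\begin{equation*}
\pi(y) \,=\, \frac{y^{-1-\ga}}{2 K_\ga(x)}\, \exp\!\left(-\tfrac{x}{2}(y+ 1/y)\right)\,,
\end{equation*}
normalized using \eqref{eq:Bessel-def}. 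The identity $d\log X_1 = \gep Y\, dt$ from $dX_1 = \gep X_2\, dt$, combined with the ergodic theorem, recovers \eqref{eq:formula}.

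I would then solve the Poisson equation $\cA \phi(y) = \cL_{\gs,\ga}(\gep) - \gep y$ for the generator $\cA$ of $Y$. Writing $\psi := \phi'$ reduces it to a first order linear ODE, and with integrating factor $\mu(y) := y^{1-\ga}\exp(-\tfrac{x}{2}(y+1/y))$ (proportional to $y^2\pi(y)$) a single quadrature yields
\begin{equation*}
\phi'(y) \,=\, \frac{2}{\gs^2\, \mu(y)} \int_0^y \frac{\cL_{\gs,\ga}(\gep) - \gep z}{z^{1+\ga}}\, \exp\!\left(-\tfrac{x}{2}(z+1/z)\right) dz\,.
\end{equation*}
The basepoint at $0$ is admissible because the factor $e^{-x/(2z)}$ gives absolute convergence there, and integrability at $+\infty$ follows from $\int_0^\infty (\gep z - \cL_{\gs,\ga}(\gep))\,\pi(dz)=0$, which is \eqref{eq:formula}. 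Applying It\^o to $\phi(Y_t)$ then yields
\begin{equation*}
\int_0^t \bigl(\gep Y_s - \cL_{\gs,\ga}(\gep)\bigr)\, ds \,=\, \phi(Y_0) - \phi(Y_t) + M_t\,, \qquad M_t := \int_0^t \gs Y_s\, \phi'(Y_s)\, dB_s\,,
\end{equation*}
and the martingale CLT for $M_t/\sqrt{t}$ gives a centered Gaussian limit with variance $\int_0^\infty \gs^2 y^2 \phi'(y)^2\, \pi(y)\, dy$, which after substitution of the explicit formulas is exactly \eqref{eq:varLyap}.

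To conclude, $\log\Vert(X_1(t),X_2(t))\Vert = \log|X_1(t)| + \tfrac{1}{2}\log(1+Y_t^2)$ and the correction $\tfrac{1}{2}\log(1+Y_t^2)$ is tight under $\pi$, so it vanishes after normalization by $\sqrt t$. The step I expect to be the main obstacle is the asymptotic analysis of the explicit formula for $\phi'$ at the two boundaries: at each end a Laplace estimate on the inner integral must be combined with the reciprocal weight $\mu(y)^{-1}$, and one has to verify that the exponentials dominate the $\ga$-dependent polynomial prefactors. The expected behavior is $\phi'(y)\to \cL_{\gs,\ga}(\gep)/\gep$ as $y\to 0$ and $\phi'(y)\sim 1/y$ as $y\to \infty$; this yields $\phi, \phi' \in L^2(\pi)$, tightness of $\phi(Y_t)$, positivity of $v_{\gs,\ga}(\gep)$ (from non-degeneracy of the integrand), and finiteness (from the super-exponential decay of the outer integrand at both endpoints). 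Once these bounds are in place, the martingale CLT for additive functionals of ergodic diffusions applies routinely.
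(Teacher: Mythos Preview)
Your proposal is correct and follows essentially the same approach as the paper: reduce to the additive functional $\int_0^t(\gep Y_s-\cL)\,ds$ via $d\log X_1=\gep Y\,dt$, solve the Poisson equation for the generator of $Y$ by one quadrature with the integrating factor $y^{1-\ga}e^{-\frac x2(y+1/y)}$, control $\phi'$ at the two endpoints by Laplace asymptotics (the paper obtains the same $\phi'\asymp 1$ near $0$ and $\phi'\asymp 1/y$ near $\infty$), and conclude by the martingale CLT together with the ergodic theorem for $\langle M\rangle_t/t$. The only cosmetic differences are your sign convention for the Poisson equation and your handling of $\log\Vert X\Vert-\log|X_1|$ via tightness of $\tfrac12\log(1+Y_t^2)$, where the paper simply invokes convergence of $Y_t$.
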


\medskip

{In physics   the  behavior of fluctuations for matrix products has beed repeatedly addressed, see \cite{cf:ramola,cf:schomerus} and references therein; the same is true for the mathematical literature \cite{cf:BL}. Proposition~\ref{prop:clt} is about the fluctuations for the continuum limit: even taking into account the results  in  \cite{cf:ramola,cf:schomerus}, we think that
understanding how random matrix products fluctuations and continuum limit fluctuations are related is an open issue.
 }


\subsection{Small $\gep$ asymptotic expansions}

Thanks to Theorem~\ref{th:Lyap} item~(3), studying the small $\gep$ behavior of $\cL_{\gs, \ga}(\gep)$ is 
 just a book-keeping exercise that exploits  
the asymptotic behavior of $K_\ga(\cdot)$. 
For  $\ga\in [0, 2)$ this result can be found in   \cite[first formula, p.~248]{cf:grabsch}, see also
\cite[(3.45)]{cf:Luck} for $\ga=0$: 
we provide the general result and 
we  will explain the relevance of this exercise in Section~\ref{sec:DH83}. 
Throughout this work $\Gamma$ denotes the  Gamma function, see  \DLMFs{5.2} for definitions
and properties.

\medskip

\begin{proposition}
\label{th:asympt}
Recall that $x=4 \gep/ \gs ^2$.
For $\ga \in (0,\infty) \setminus \bbZ$ we have for $\gep\searrow 0$
\begin{equation}
\label{eq:asympt1}
\frac 4 {\gs^2} \cL_{\gs, \ga}(\gep) \, =\, 
c_1 (\ga)x^2+ \ldots  + c_{\lfloor \ga \rfloor} (\ga)x^{2\lfloor \ga \rfloor} +2
\frac{\Gamma(1-\ga)}{\Gamma(\ga)} \left( \frac x 2\right)^{2\ga} +O\left(x^{\min(2\lceil \ga \rceil, 4 \ga)}\right) , 
\end{equation}
where $c_j(\cdot)$ is a rational function (for explicit expressions, see \eqref{eq:firstsing}-\eqref{eq:fewterms}).

For $\ga \in  \{1,2, \ldots\}$ we have
\begin{equation}
\label{eq:asympt2}
\frac 4 {\gs^2} \cL_{\gs, \ga}(\gep) \, =\, 
c_1(\ga) x^2+ \ldots + c_{\ga -1}(\ga)x^{2(\ga -1)}+ (-1)^{ \ga }\frac{2^{2-2\ga} }{(( \ga -1)!)^2}
x^{2\ga } \log x+
O\left( x^{2\ga }\right)\, ,
\end{equation}
where  $c_j( \cdot )$ is the same rational function as in the non integer case. 

For $\ga=0$ we have 
\begin{equation}
\cL_{\gs, \ga}(\gep)\, = \, \frac  {\gs^2} {4 \log (1/x)} +O \left(\left(\log 1/x \right)^{-2}\right)\, ,
\end{equation}
and
the result for $\ga<0$ is directly recovered from \eqref{eq:asympt1}-\eqref{eq:asympt2} by using the identity
\begin{equation}
\label{eq:recover}
\frac 4 {\gs^2} \cL_{\gs, \ga}(\gep)\stackrel{\ga <0}= 2 \vert \ga\vert+ \frac 4 {\gs^2} \cL_{\gs, \vert \ga\vert}(\gep)\, .
\end{equation}
\end{proposition}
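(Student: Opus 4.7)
The statement reduces, via Theorem~\ref{th:Lyap}(3), to a small-$x$ asymptotic analysis of the ratio $xK_{\ga-1}(x)/K_\ga(x)$, since $(4/\gs^2)\cL_{\gs,\ga}(\gep)$ equals this ratio with $x=4\gep/\gs^2$. The proof is essentially organized book-keeping on the small-$x$ expansions of $K_\nu$. For $\ga<0$, the identity \eqref{eq:recover} follows at once from $K_{-\ga}=K_\ga$ and the three-term recurrence $K_{\ga+1}(z)-K_{\ga-1}(z)=(2\ga/z)K_\ga(z)$: writing $\ga=-|\ga|$ one gets $xK_{\ga-1}(x)/K_\ga(x)=xK_{|\ga|+1}(x)/K_{|\ga|}(x)=2|\ga|+xK_{|\ga|-1}(x)/K_{|\ga|}(x)$, which reduces everything to $\ga\ge 0$.

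For non-integer $\ga>0$, starting from $K_\ga=(\pi/(2\sin\pi\ga))(I_{-\ga}-I_\ga)$ and the series for $I_\nu$, I would use the decomposition
$$K_\ga(x)\,=\,\tfrac{1}{2}(x/2)^{-\ga}A_\ga(x^2)+\tfrac{1}{2}(x/2)^\ga B_\ga(x^2),$$
where $A_\ga,B_\ga$ are entire power series with $A_\ga(0)=\Gamma(\ga)$ and $B_\ga(0)=\Gamma(-\ga)$. Substituting this and the analogous formula for $K_{\ga-1}$ gives
$$\frac{xK_{\ga-1}(x)}{K_\ga(x)}\,=\,\frac{2\bigl[(x/2)^2A_{\ga-1}(x^2)+(x/2)^{2\ga}B_{\ga-1}(x^2)\bigr]}{A_\ga(x^2)+(x/2)^{2\ga}B_\ga(x^2)},$$
and expanding the denominator as a geometric series in $(x/2)^{2\ga}$ yields a double expansion in powers of $x^2$ and $(x/2)^{2\ga}$. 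The analytic part $\sum_{j=1}^{\lfloor\ga\rfloor}c_j(\ga)x^{2j}$ is the Taylor expansion of $2(x/2)^2A_{\ga-1}(x^2)/A_\ga(x^2)$, and each $c_j(\ga)$ reads off as a rational function of $\ga$; the leading non-analytic term is $2(x/2)^{2\ga}B_{\ga-1}(0)/A_\ga(0)=2(x/2)^{2\ga}\Gamma(1-\ga)/\Gamma(\ga)$; and the next corrections appear at $x^{\min(2\lceil\ga\rceil,4\ga)}$ (either from the next analytic term in the Taylor series or from the first subleading singular contribution $(x/2)^{4\ga}$), which is exactly \eqref{eq:asympt1}.

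For integer $\ga\ge 1$ I would substitute the log-containing expansion (DLMF 10.31.1)
$$K_n(x)\,=\,\tfrac{1}{2}(x/2)^{-n}\sum_{k=0}^{n-1}\tfrac{(n-k-1)!}{k!}(-x^2/4)^k+(-1)^{n+1}\log(x/2)\,I_n(x)+(x/2)^n\Phi_n(x^2),$$
with $\Phi_n$ an explicit entire series, for both $n=\ga$ and $n=\ga-1$, and perform the same division. The log contribution $(-1)^{\ga+1}\log(x/2)(x/2)^\ga/\ga!$ to $K_\ga$ produces the coefficient $(-1)^\ga 2^{2-2\ga}/((\ga-1)!)^2$ in front of $x^{2\ga}\log x$ in \eqref{eq:asympt2}; by construction the polynomial part is the analytic continuation to integer $\ga$ of the rational functions $c_j(\ga)$ from Step~2, the apparent poles of $c_j$ at the integer being cancelled by the coalescence of the $(x/2)^{2\ga}$ term with an integer power of $x$. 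Finally, for $\ga=0$ the statement follows directly from $K_0(x)=-\log(x/2)-\gamma+O(x^2\log x)$ and $K_1(x)=1/x+O(x|\log x|)$, which give $xK_1(x)/K_0(x)=1/\log(1/x)+O((\log(1/x))^{-2})$.

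The calculations are essentially routine; the one step whose outcome is not automatic is the consistency check for the integer-$\ga$ case, i.e.\ verifying that the rational $c_j(\ga)$ of the non-integer expansion really analytically continue to the polynomial coefficients of \eqref{eq:asympt2}. Concretely this amounts to Taylor-expanding in $\ga$ around each positive integer and checking that the apparent pole of $c_j(\ga)$ is exactly cancelled by the pole produced by $\Gamma(1-\ga)/\Gamma(\ga)$ as the $(x/2)^{2\ga}$ term merges with $x^{2\ga}\log x$. This is the one bookkeeping step that requires careful accounting.
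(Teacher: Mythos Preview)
Your approach is essentially the same as the paper's: both use the connection formula $K_\ga=(\pi/2\sin\pi\ga)(I_{-\ga}-I_\ga)$ for non-integer $\ga$, the DLMF 10.31.1 logarithmic expansion for integer $\ga$, and the straightforward computation for $\ga=0$; your $A_\ga,B_\ga$ are (up to the reflection-formula factor) the paper's $\tilde I_{-\ga},\tilde I_\ga$.

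The only substantive difference is in the integer consistency check. The paper does not argue by pole cancellation in $\ga$: instead it computes the integer expansion directly, writes the non-integer analytic part as $p_\ga(y)=-(y/2)\tilde I_{1-\ga}(\sqrt y)/\tilde I_{-\ga}(\sqrt y)$ in terms of Pochhammer symbols, truncates to degree $\lfloor\ga\rfloor-1$ (where one can set $\ga=n$ without encountering poles, since $c_j(\ga)$ is regular at $\ga=n$ for $j\le n-1$), and verifies an explicit identity between the resulting rational function and the ratio $q_{n-1}/q_n$ of the polynomial truncations of $K_{n-1},K_n$. Your suggested route via limiting $\ga\to n$ would also work, but your description is slightly off: there is no pole of $c_j(\ga)$ at $\ga=n$ for $j\le n-1$ to cancel; the pole cancellation you mention is between $c_n(\ga)$ and $\Gamma(1-\ga)/\Gamma(\ga)$, and that is what produces the $x^{2n}\log x$ term rather than the polynomial part.
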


\medskip

The identity \eqref{eq:recover}
is a simple consequence of the Bessel identity
\begin{equation}
\label{eq:Bessel-id}
x K_{1+\ga}(x)\,=\, 2 \ga K_\ga(x)+ xK_{ -1+\ga}(x)\, ,
\end{equation}
that follows from \eqref{eq:Bessel-def} by integration by parts, together with the identity $K_\alpha(x) = K_{-\alpha}(x)$.

\medskip

We have chosen to give these expansions up to the leading singular term: one can of course be much more precise. 
Keeping only the leading term, Proposition~\ref{th:asympt} implies
\begin{equation}
\label{eq:specialcase}
\cL_{\sigma,\alpha}(\gep) \stackrel{\gep \searrow 0}\sim  \left(\frac{\gs^2} 4\right)
\begin{cases}
c_1(\ga) x^2 & \text{ if } \ga >1\, ,
\\
\frac{2\Gamma(1-\ga)}{\Gamma(\ga)}  \left(\frac x 2 \right)^{2\ga}& \text{ if } \ga \in(0,1)\, ,
\\
1/\log (1/x)   &\text{ if } \ga= 0\, , \\
2 \vert \ga \vert
&\text{ if } \ga\in (-\infty,0)\,;
\end{cases}
\end{equation}
recall again that $x = 4\varepsilon/\sigma^2$.
It is certainly worth observing for the benefit of those readers who are less at home with special function calculations that  
\eqref{eq:specialcase} can be derived by elementary asymptotic methods from 
\eqref{eq:Bessel-def}. This is of course also the case for the full Proposition~\ref{th:asympt}, but the exercise 
becomes particularly involved and resorting to the special functions literature is certainly wise, or  even necessary.


\medskip

\begin{rem}
	\eqref{eq:specialcase} directly entails $\lim_{\gep\searrow 0} \cL_{\gs, \ga}(\gep)=\frac{\gs^2}2\vert \ga \vert \ind_{\ga <0}$,  and $\ga \mapsto \frac{\gs^2}2\vert \ga \vert \ind_{\ga <0}$ is singular at the origin, while 
$\ga \mapsto \cL_{\gs, \ga}(\gep)$ is real analytic for $\gep \neq 0$ (and it is meromorphic in the whole $\bbC$: see beginning of Section~\ref{sec:MW_proof}).  It is possibly worth
observing that we have not defined $\cL_{\gs, \ga}(\gep)$ for $\gep=0$ because of the pathological nature of this case, but 
the limit in \eqref{eq:Lyap} exists also for $\gep=0$ and  $\cL_{\gs, \ga}(0)=
\frac{\gs^2}2\vert \ga \vert \ind_{\ga <0}$ (in agreement with $\lim_{\gep\searrow 0} \cL_{\gs, \ga}(\gep)$), but only if  $X(0) \neq 0$; otherwise the Lyapunov exponent is $-\infty$. 
Moreover
the (Laplace) asymptotic behavior of the two components for $\gep=0$ in general does not coincide with the 
Lyapunov exponent. 
\end{rem}

\medskip

Of course one could wonder about the behavior {as $\gep \searrow 0$ of the variance $v_{\gs, \ga}(\gep)$ in}
 Proposition~\ref{prop:clt}.  

\medskip

\begin{proposition}
\label{prop:asympt}
{Still} with the notation $x=4 \gep/ \gs ^2$, we have that for every $\ga$ there exists $C(\ga)>0$ 
(see \eqref{eq:C(alpha)} for an explicit expression) such that
\begin{equation}
\label{eq:propasympt}
v_{\gs, \ga}(\gep) \stackrel{\gep \searrow 0}\sim C(\ga) \frac{\gs^2}2 \times
 \begin{cases}
 1 & \text{ if } \ga \le 0\, ,\\
 x^{2\ga } & \text{ if } \ga \in (0,2)\, ,\\
  x^4\log (1/x) & \text{ if } \ga=2\, , \\
 x^{4} & \text{ if } \ga>2\, .
\end{cases}
\end{equation}
\end{proposition}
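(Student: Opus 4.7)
The plan is to insert the closed form of $\cL_{\gs,\ga}(\gep)$ from Theorem~\ref{th:Lyap}(3) into \eqref{eq:varLyap} and analyse the resulting integral by a regime-dependent rescaling. Setting $R(x):=K_{\ga-1}(x)/K_\ga(x)$, so that $\cL_{\gs,\ga}(\gep) = (\gs^2 x/4)\,R(x)$, we factor $\gep z-\cL_{\gs,\ga}(\gep) = (\gs^2 x/4)(z-R(x))$ and rewrite
\[
v_{\gs,\ga}(\gep)\,=\,\frac{\gs^2 x^2}{8\, K_\ga(x)}\int_0^\infty y^{\ga-1}\, e^{\frac{x}{2}(y+1/y)}\,\tilde I(y)^2 \dd y,\qquad \tilde I(y):=\int_0^y \frac{z-R(x)}{z^{1+\ga}}\, e^{-\frac{x}{2}(z+1/z)} \dd z.
\]
A direct application of \eqref{eq:Bessel-def} at indices $\ga$ and $\ga-1$ shows that $\tilde I(\infty)=0$, so that equivalently $\tilde I(y)=-\int_y^\infty(\ldots)\dd z$. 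This exact cancellation is what makes the outer integral finite, and it is the key algebraic input used in every regime.

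For $\ga>2$ we substitute $y=xu/2$ and $z=xw/2$. On compacts $e^{-\frac{x}{2}(z+1/z)}\to e^{-1/w}$ and $e^{\frac{x}{2}(y+1/y)}\to e^{1/u}$ as $x\to 0$, and using $R(x)\sim x/(2(\ga-1))$ one obtains $\tilde I(xu/2)\sim (x/2)^{1-\ga} J(u)$ with $J(u):=\int_0^u (w-\tfrac{1}{\ga-1})\, w^{-1-\ga}e^{-1/w}\dd w$; that $J(\infty)=0$ follows by the substitution $s=1/w$ and the identity $\Gamma(\ga-1)=\Gamma(\ga)/(\ga-1)$. Combined with $K_\ga(x)\sim \frac{\Gamma(\ga)}{2}(2/x)^\ga$, this gives $v\sim (\gs^2 x^4/(16\,\Gamma(\ga)))\,A(\ga)$, where $A(\ga):=\int_0^\infty u^{\ga-1}e^{1/u}\,J(u)^2 \dd u$ is finite precisely for $\ga>2$ (the integrand decays as $u^{1-\ga}/(\ga-1)^2$ at infinity).

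For $\ga\in(0,2)$ the divergence of $A(\ga)$ at $u=\infty$ signals that the dominant scale has shifted to $y\sim 1/x$. Substituting $y=s/x$, $z=w/x$, and using $xR(x)\to 0$ (true for every $\ga>0$), one gets $\tilde I(s/x)\sim -x^{\ga-1} B_\ga(s)$ with $B_\ga(s):=\int_s^\infty w^{-\ga}e^{-w/2}\dd w$; the outer integral becomes $x^{\ga-2}\,D(\ga)$ with $D(\ga):=\int_0^\infty s^{\ga-1}e^{s/2}\,B_\ga(s)^2 \dd s$, which is finite iff $0<\ga<2$. This produces the $x^{2\ga}$ scaling. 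At the borderline $\ga=2$ one matches the two rescalings, and the marginally divergent integral (at $u=\infty$ equivalently at $s=0$) produces the $\log(1/x)$ factor. For $\ga<0$ only the large-$y$ scale contributes, but now $xR(x)\to 2\vert\ga\vert\ne 0$, so the subtraction term is leading order rather than a correction; the same $y=s/x$, $z=w/x$ substitution yields $\tilde I(s/x)\sim x^{\ga-1}\widetilde B_\ga(s)$ with $\widetilde B_\ga(s):=\int_0^s (w-2\vert\ga\vert)\, w^{-1-\ga}e^{-w/2}\dd w$, and when combined with $K_\ga(x)=K_{\vert\ga\vert}(x)\sim \frac{\Gamma(\vert\ga\vert)}{2}(2/x)^{\vert\ga\vert}$ the prefactors collapse to a constant $C(\ga)\gs^2/2$. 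The special case $\ga=0$ requires additional bookkeeping because $R(x)\sim 1/(x\log(2/x))$ and $K_0(x)\sim \log(2/x)-\gamma$ both carry logarithms that must cancel exactly to leave a constant limit.

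The main technical obstacle is justifying the limit $x\to 0$ uniformly under the double integral. In each regime one splits $(0,\infty)$ into a bulk interval on which the rescaled integrand converges pointwise and is bounded by an integrable dominant (yielding the leading term via dominated convergence), and peripheral regions whose contribution is estimated directly using the explicit exponential tails $e^{-\frac{x}{2}(z+1/z)}$ and the known asymptotics of $K_\ga(x)$. The boundary values $\ga\in\{0,2\}$ are the most delicate, since two scales contribute at the same order and the cutoff between them must be matched carefully in order to track the logarithmic factor (or cancellation); the case $\ga=0$ is made harder still by the slow variation of $R(x)$ and $K_0(x)$. Once the correct scaling has been identified in each regime, the explicit formula for $C(\ga)$ in \eqref{eq:C(alpha)} is obtained by evaluating $A(\ga)$, $D(\ga)$, or $\widetilde B_\ga$ against the prefactor by means of standard Gamma and Bessel identities.
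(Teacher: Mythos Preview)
Your approach is correct and essentially the same as the paper's. The paper also starts from \eqref{eq:varLyap}, splits the outer integral at $y=1$ into $T_1$ (the $y\ge 1$ piece) and $T_2$ (the $y\le 1$ piece), and applies to each the very rescalings you describe: $y_{\rm new}=(x/2)\,y_{\rm old}$ in $T_1$, which is your large-$y$ scale $y\sim 1/x$, and the reciprocal substitution in $T_2$, which is your small-$y$ scale $y\sim x$. For generic $\ga$ one of $T_1,T_2$ dominates (exactly matching your regime identification: $T_1$ for $\ga<2$, $T_2$ for $\ga>2$), and this is established by the same truncation plus dominated-convergence arguments you outline. The cancellation $\tilde I(\infty)=0$ appears in the paper as the observation that $\int_0^y$ may be replaced by $-\int_y^\infty$, and your tracking of $xR(x)$ plays the role of the paper's Remark~\ref{rem:simplify}.

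The one place where the paper's $T_1/T_2$ organization is more than cosmetic is at the borderlines $\ga\in\{0,2\}$: there both pieces contribute at leading order, and in fact equally (each gives $\tfrac{7}{12}\log(1/x)$ at $\ga=0$ and $\tfrac14 x^2\log(1/x)$ at $\ga=2$), so one must compute and add both to recover $C(0)=7/6$ and $C(2)=1/4$. Your remark that ``two scales contribute at the same order and the cutoff between them must be matched carefully'' is correct, but the matching is not carried out; the paper's explicit split into $T_1$ and $T_2$ is precisely the bookkeeping device that makes these two cases tractable, and the $\ga=0$ computation in particular (with the slowly varying $R(x)$ and $K_0(x)$) requires the detailed analysis around \eqref{eq:tilde-I}.
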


 \subsection{From matrix product to the diffusion model}
 We are now going provide rigorous results about  how  \eqref{eq:sys} emerges  as limit of matrix products. 
Consider, for  $\gD>0$ and given  an IID sequence $\{\cN_n\}_{n=1,2, \ldots}$  of standard Gaussian variables,  the discrete time stochastic 
process $\{(X_1^\gD(n),X_2^\gD(n))\}_{n=0,1, \ldots}$ defined recursively from the deterministic initial condition $(X_1^\gD(0), X_2^\gD(0))=
(X_1(0), X_2(0))$ by
\begin{equation}
\label{eq:Deltamodel}
\begin{cases}
X_1^\gD(n+1) \,=\, X_1^\gD(n) + \gep X_2^\gD(n) \gD\, ,
\\
X_2^\gD(n+1) \,=\,
e^{\gs \sqrt{\gD}\, \cN_{n+1} - \ga \frac{\gs ^2}2 \gD} 
\left( X_2^\gD (n) + \gep X_1^\gD(n) \gD  \right)\, .
\end{cases}
\end{equation}
 Defining 
\begin{equation}
\label{eq:Zdelta}
Z^\gD(n+1)= e^{\gs \sqrt{\gD} \, \cN_{n+1} - \ga\frac{\gs ^2}2 \gD},
\end{equation}
 we can write \eqref{eq:Deltamodel} as 
\begin{equation} \label{eq:AgD1}
X^\gD (n+1)= X^\gD(n) + A^\gD (n+1)  X^\gD  (n), 
\end{equation}
where
\begin{equation} \label{eq:AgD2}
X^\gD = \begin{pmatrix}X_1^\gD  \\ X_2^\gD\end{pmatrix} ,
\qquad A^\gD(n) = 
\begin{pmatrix}
0& \gep \gD\\ 
\gep \gD  Z^\gD(n) & Z^\gD(n)-1
\end{pmatrix} 
.
\end{equation}
In different terms: $X^\gD(n)$ results form the product of $n$ independent matrices of the form
$I+A^\gD$ and  for $\gD=1$ we have that the matrix $I+A^1$ coincides 
with
\eqref{eq:matrix} when $Z=Z^1$, that is when $Z$ is log-normal. The restriction to log-normal is just for ease
of exposition: we are going to prove a result (Theorem~\ref{prop:scalingDH83g}) for much more general distributions.

Note that the determinant of $I+A^\gD$ is $Z^\gD (1- \gep^2 \gD^2)$ and we want to exclude the degenerate case:  since
we are going to give a result for  $\gD\searrow 0$, we can assume that this requirement is automatically satisfied. 
The rate of growth of $X^\gD(n)$ is defined by the Lyapunov exponent
\begin{equation}
\label{eq:discL}
\widehat  \cL_{Z^\gD} (\gep)\,=\, \lim_{n \to \infty} \frac 1n \log \|X^\gD(n)\|\;,
\end{equation}
which exists a.s. and is deterministic, see e.g. \cite[Th.~4.1 in Ch.~1]{cf:BL}.

\medskip

\begin{theorem} 
\label{prop:scalingDH83}
For $\gD \searrow 0$, the random process
 \begin{equation} \label{eq:approx1}
\left\{
\left(X_1^\gD\left(\lfloor t/\gD\rfloor\right),X_2^\gD\left(\lfloor t/\gD\rfloor\right)\right)
\right\}_{t\in[ 0, \infty)}\, ,
\end{equation}
converges in law to the diffusion $(X_1(\cdot), X_2(\cdot))$ on the Skorokhod space ${\mathcal D}([0, \infty), (0, \infty)^2)$.
Moreover,
\begin{equation}  \label{eq:approx2}
\lim_{\gD \searrow 0} \frac{ \widehat  \cL_{Z^\gD} (\gep)}{ \gD }\,=\, \cL_{\gs, \ga} (\gep) \;.
\end{equation}
 \end{theorem}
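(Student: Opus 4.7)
The plan is to prove the two assertions of Theorem~\ref{prop:scalingDH83} separately: first the weak convergence of the rescaled chain to the diffusion \eqref{eq:sys}, then the scaling relation between Lyapunov exponents.

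For the weak convergence, I would verify the standard diffusion-approximation hypotheses by Taylor expansion of the one-step increments. Using
\[
\bbE[Z^\gD] \,=\, e^{(1-\ga)\gs^2 \gD/2} \,=\, 1 + \tfrac{(1-\ga)\gs^2}{2}\gD + O(\gD^2)\,,\qquad \mathrm{Var}(Z^\gD) \,=\, e^{(1-\ga)\gs^2\gD}(e^{\gs^2\gD}-1) \,=\, \gs^2 \gD + O(\gD^2),
\]
the conditional mean of $X^\gD(n+1)-X^\gD(n)$ given $X^\gD(n)=(x_1,x_2)$ equals $\gD (\gep x_2,\;\gep x_1+\tfrac{(1-\ga)\gs^2}{2}x_2)^T + O(\gD^2)$ and the conditional covariance equals $\gD\,\mathrm{diag}(0,\gs^2 x_2^2)+O(\gD^2)$, with third absolute moments of order $\gD^{3/2}$ locally uniformly in $(x_1,x_2)$. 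These match the drift and diffusion coefficients of \eqref{eq:sys}, so weak convergence on $\cD([0,\infty),(0,\infty)^2)$ follows from a classical diffusion-approximation theorem (Stroock--Varadhan, or Ethier--Kurtz). Invariance of the open quadrant at the discrete level is immediate since $Z^\gD>0$ and $\gep X_2^\gD \gD>0$ keep both coordinates strictly positive for $\gD$ small enough; non-explosion of the SDE inside the quadrant follows from the analysis behind Theorem~\ref{th:Lyap}(2).

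For the Lyapunov exponent convergence, the natural route is to represent both exponents as integrals against the invariant law of the projectivized (ratio) process, exploiting the first-quadrant reduction provided by Theorem~\ref{th:Lyap}(2). Set $Y^\gD(n) := X_2^\gD(n)/X_1^\gD(n)$. The first line of \eqref{eq:Deltamodel} gives the telescoping identity
\[
\log X_1^\gD(n) - \log X_1^\gD(0) \,=\, \sum_{k=0}^{n-1} \log\!\bigl(1 + \gep \gD\, Y^\gD(k)\bigr),
\]
so by the ergodic theorem $\widehat\cL_{Z^\gD}(\gep) = \bbE_{\pi^\gD}[\log(1+\gep\gD Y^\gD)]$, where $\pi^\gD$ is the unique stationary law of the chain $\{Y^\gD(n)\}$ on $(0,\infty)$. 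The analogous continuum representation reads $\cL_{\gs,\ga}(\gep) = \gep \int_0^\infty y\,\pi(dy)$, where $\pi$ is the stationary law of the one-dimensional diffusion for $Y_t = X_2(t)/X_1(t)$ obtained from \eqref{eq:sys} by It\^o calculus; this $\pi$ has an explicit density from which \eqref{eq:formula} is derived. I would then (i)~prove positive recurrence of $\{Y^\gD\}$, uniformly in $\gD$, via a Foster--Lyapunov function such as $V(y)=y+1/y$; (ii)~combine the process convergence of Part~1 with the ensuing tightness to obtain $\pi^\gD \to \pi$ weakly; (iii)~deduce $\gD^{-1}\bbE_{\pi^\gD}[\log(1+\gep\gD Y^\gD)] \to \gep\,\bbE_\pi[Y] = \cL_{\gs,\ga}(\gep)$ from a Taylor expansion of $\log(1+\cdot)$ combined with uniform integrability.

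The main obstacle is step~(iii): weak convergence $\pi^\gD \to \pi$ alone does not imply $\int y\, \pi^\gD(dy) \to \int y\, \pi(dy)$, so one needs $\gD$-uniform moment control of $Y^\gD$ under $\pi^\gD$. I expect this from a direct computation of the conditional drift of $V$, which should yield
\[
\bbE\bigl[V(Y^\gD(n+1)) - V(Y^\gD(n))\bigm|Y^\gD(n)=y\bigr] \,\leq\, -c\,\gD\, V(y) + C\,\gD
\]
outside a compact set, with constants $c,C>0$ independent of $\gD$, hence $\bbE_{\pi^\gD}[V(Y^\gD)] \leq C/c$ uniformly in $\gD$. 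Two points need a careful check along the way: the degeneracy of the noise in the first coordinate, which complicates the usual ellipticity-based arguments for positive recurrence and calls for a hypoelliptic treatment; and the Taylor remainder $\log(1+\gep\gD Y^\gD) - \gep\gD Y^\gD$, which contributes an error of order $\gD^2\bbE_{\pi^\gD}[(Y^\gD)^2]$ that also needs to be absorbed by the uniform second-moment bound coming from the same Lyapunov function.
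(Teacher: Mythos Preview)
Your overall strategy coincides with the paper's: diffusion approximation (Stroock--Varadhan / Ethier--Kurtz) for \eqref{eq:approx1}, then represent both Lyapunov exponents through the invariant law of the ratio chain $Y^\gD = X_2^\gD/X_1^\gD$, prove $\pi^\gD \Rightarrow p_\gep(y)\,\dd y$ via tightness plus invariance of limit points, and pass to the limit in $\gD^{-1}\bbE_{\pi^\gD}[\log(1+\gep\gD Y^\gD)]$. Two points in your sketch deserve correction.

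\medskip
\noindent\textbf{The Lyapunov function is too weak.} A Foster--Lyapunov inequality for $V(y)=y+1/y$ yields only $\sup_\gD \bbE_{\pi^\gD}[Y^\gD]<\infty$, not the second-moment bound you invoke in your last sentence; that is a genuine inconsistency. You can either upgrade to $V(y)=y^2+1/y$ (the one-step drift of $y^2$ is still strongly negative, of order $-\gep\gD y^3$, because $u(y)=(y+\gep\gD)/(1+\gep\gD y)$ contracts by roughly $\gep\gD y^2$), or do what the paper does, which is shorter: observe that $z\mapsto u(\sqrt z)^2$ is concave and use Jensen to get the scalar recursion $x_{n+1}\le q_{\gD,+}\, u(x_n)$ for $x_n:=(\bbE_y[(Y^\gD(n))^2])^{1/2}$, whose unique positive fixed point stays bounded as $\gD\searrow 0$. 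The paper then bounds the Taylor remainder via the inequality $\vert\log(1+u)-u\vert \le u^{3/2}/\sqrt 2$ (obtained from $s/(1+s)\le \min(1,s)$ and Cauchy--Schwarz), so that the error is $O(\gD^{1/2}\,\bbE_{\pi^\gD}[(Y^\gD)^{3/2}])$, controlled by the second moment.

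\medskip
\noindent\textbf{Hypoellipticity is a red herring.} Once you pass to the one-dimensional ratio process, the noise is non-degenerate: $Y^\gD$ is a chain on $(0,\infty)$ with multiplicative log-normal innovations, and $Y_t$ is a scalar diffusion with strictly positive diffusion coefficient $\gs^2 y^2$ whose explicit invariant density $p_\gep$ and ergodicity are already established in the proof of Theorem~\ref{th:Lyap}. The two-dimensional process $(X_1,X_2)$ is not recurrent at all (it grows exponentially), so no hypoelliptic positive-recurrence argument is needed anywhere. Existence and uniqueness of $\pi^\gD$ come directly from the standard Furstenberg--Kesten theory for products of random matrices.
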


\subsection{Continuum limits and the Derrida-Hilhorst singularity}
\label{sec:DH83}

In \cite{DH}, B. Derrida and H. J. Hilhorst study the $\gep \searrow 0$ limit of the Lyapunov exponent
$\widehat \cL _Z (\gep)$
of product of IID matrices of the form \eqref{eq:matrix}, under the hypothesis that 
$\bbE[Z]>1$ and $\bbE [\log Z]<0$. Since $\gb \mapsto \bbE[Z^\gb]$ is convex, by the hypotheses on $Z$ there exists a unique $\ga\neq 0$ such that $\bbE Z^\ga =1$, and
one readily realizes  that $\ga\in (0,1)$. 
It is claimed in \cite{DH} that
\begin{equation}
\label{eq:DH}
\widehat \cL_Z (\gep) \stackrel{\gep \searrow 0} \sim C \gep^{2\ga}\, ,
\end{equation}
with a semi-explicit expression for $C=C_Z>0$, that depends on the law of $Z$. Such a result directly implies 
a corresponding result for the case $\bbE[\log Z]>0$ and $\bbE [Z^{-1}]>1$: note that in this case $\bbE[Z^\ga]=1$, $\ga \neq 0$,
is again uniquely solved and $\ga \in (-1,0)$. So, by writing 
\begin{equation}
\begin{pmatrix}
1& \gep\\ 
\gep Z  & Z
\end{pmatrix}\, =\, Z \, 
\begin{pmatrix}
Z^{-1}& \gep Z^{-1}\\ 
\gep   & 1
\end{pmatrix}\,,
\end{equation}
we see that 
\begin{equation}
\label{eq:DH2}
\widehat \cL_Z (\gep) \, =\,
\bbE \log Z + \widehat \cL_{1/Z} (\gep)\stackrel{\eqref{eq:DH}}=  \bbE \log Z + C_{1/Z} \gep^{-2\ga} + o\left(  \gep^{-2\ga} \right)\,,
\end{equation}
and we recall that $\ga \in (-1,0)$ now.

Moreover one can find in \cite[Sec.~3]{DH} an argument telling us that for $\ga>1$,
$\ga \not\in \bbN$, one 
expects
\begin{equation}
\label{eq:expectedDH}
\widehat  \cL_Z (\gep)\,=\, c_1 \gep^2+ \ldots +\ldots + c_{\lfloor \ga \rfloor} \gep^{2\lfloor \ga \rfloor} +
C \gep^{2\ga} + o\left(\gep^{2 \ga }\right)\, ,  
\end{equation}
for  real constants $c_j$ and $C$ that are in principle computable. For the case $\ga=0$, 
 i.e. for the case $\bbE \log Z=0$ in which the only solution to $\bbE Z^\ga=1$ is $\ga=0$, one finds 
the prediction 
\begin{equation}
\label{eq:alpha0log}
\widehat \cL_Z (\gep) \stackrel{\gep \searrow 0} \sim \frac C {\log (1 /\gep)}\, ,
\end{equation}
with $C>0$, in more than one reference. We mention here
 \cite[(4.34)]{NL} in which \eqref{eq:alpha0log}
 is found for one dimensional Ising model with random field for a very specific choice of  the disorder (the interaction 
  $J$ of \cite{NL} corresponds to $\log (1/\gep)$). In the localization context \eqref{eq:alpha0log} has been found for example in
\cite[(3.17)]{cf:desbois}.

\medskip
{From our perspective, the significance of all this is that:}
\smallskip

\begin{enumerate}
\item  the continuum limit results of Proposition~\ref{th:asympt} fully match with the expected behaviors (to all orders!),  
	\eqref{eq:DH},  \eqref{eq:expectedDH} and \eqref{eq:alpha0log}, for the random matrix product. We {consider this to be rather striking, and it} 
 highlights the richness of  the continuum limit;
 \item we are going to review the mathematical results available about  \eqref{eq:DH},  \eqref{eq:expectedDH} and \eqref{eq:alpha0log}, but we want to point out that even at the level of physical predictions some results are more sound than others. Notably, it appears to be rather challenging to capture the $\gep^{2\ga}$ singularity for $\vert \ga \vert >1$ and the level of sharpness of the  $\vert \ga \vert \in (0,1)$ prediction \eqref{eq:DH}, even leaving aside mathematical rigor, does not appear to be 
	 easy to achieve. In this sense, the continuum limit goes beyond {what has been established so far for the discrete case}.  
 \end{enumerate} 

\medskip
 
From a mathematical standpoint
a proof of \eqref{eq:DH}, and \eqref{eq:DH2}, (i.e., \eqref{eq:expectedDH} with $\vert \ga \vert \in (0,1)$) has been achieved only recently and 
under the assumption that $Z$ has a $C^1$ density and that the support of $Z$ is bounded and bounded away from zero \cite{cf:GGG}. It is well known, see e.g. \cite{cf:BL}, that the problem of computing the Lyapunov exponents boils down to
identifying the invariant probability of a Markov chain associated to the matrix product. The arguments in \cite{DH}
aim at constructing a probability that for $\gep $ small is expected to be close to the invariant probability.
In \cite{cf:GGG} this construction is put on {rigorous} grounds and, above all, it is shown that 
this probability, although not invariant, is sufficiently close to the invariant one to make it possible to control the Lyapunov exponent with the desired precision. A result about \eqref{eq:expectedDH}, i.e. for $\vert\ga \vert \ge 1$, has been achieved recently \cite{cf:benjamin}, but the the expansion is fully controlled only up to (and excluding) the singular term $C \gep^{2\ga}$: for the moment results about this term remain very weak.

\subsection{On the two-dimensional Ising model with columnar disorder (McCoy-Wu model)} 
\label{sec:MWres}
It is possibly somewhat unexpected, but also computing the free energy of 
the two dimensional Ising model with columnar disorder
(McCoy-Wu model \cite{MWbook,MW1}) boils down to analyzing 
the Lyapunov exponent $\widehat \cL_Z$. 
The McCoy-Wu prediction is remarkable and folklore says that their model is the only non trivial  exactly solvable disordered statistical mechanics model:  we dedicate  Appendix~\ref{sec:MW}
to introducing in detail the model, keeping close to the McCoy-Wu notations. But the key point from the result viewpoint is that 
L. Onsager celebrated solution of the non disordered
case establishes that the free energy, as function of the temperature, has a (logarithmic) divergence
in the second derivative 
at the critical  temperature. B.~M.~McCoy and T.~T.~Wu predict that if a small amount of columnar disorder
(i.e. one dimensional: vertical bounds couplings are random and they are repeated -- i.e. no new randomness is introduced -- on each line) is introduced  
the transition persists but disorder is relevant (in the sense of the \emph{Harris criterion}, that is the disorder changes the critical behavior, see e.g.
\cite[\S~5.3]{cf:G}) and the transition becomes $C^\infty$. A precise form of the singularity is also given. 

As  explained in Appendix~\ref{sec:MW}, McCoy and Wu extend Onsager's approach to the columnar disorder case and the free energy can be written, up to additive analytic terms, in terms of an
integral in the $\gep$ variable of the Lyapunov exponent $\widehat \cL _Z(\gep)$, with $Z$ that has an explicit expression in terms of the parameters of the Ising model. The analysis by McCoy and Wu 
of this expression is performed in two steps:
\smallskip

\begin{enumerate}
\item They claim that in the limit of very narrow disorder 
the relevant -- i.e. singular -- contribution to the free energy can be written as 
\begin{equation}
\label{eq:toyMW}
\tf :\, \ga \mapsto \int_{(0,\eta)}  x
\frac{ K _{\ga-1}( x)}
{ K_\ga ( x)} \dd x\, ,
\end{equation}
with $\eta>0 $ arbitrary (the singular part comes from the small $x$ behavior of the integrand). 
The integrand is just $4\cL_{1, \ga} (x/4)$, and so it is clear from the estimates in Proposition~\ref{th:asympt} that the integral is well defined for all real $\ga$.
\item They argue, by  approximating the integrand by another expression for which 
the exact integration can be performed, that \eqref{eq:toyMW} is $C^\infty$ but not analytic at $\ga=0$.
\end{enumerate}
\smallskip

The approximation in the first step, see Appendix~\ref{sec:MW}, turns out to be precisely the diffusive limit we deal with:
this was possibly expected by comparing \eqref{eq:toyMW} and \eqref{eq:formula}.
What we do with the next result is 
providing     a rigorous analysis of the second step, that is the analysis of \eqref{eq:toyMW}.

\medskip

\begin{theorem}
\label{th:MW}
$\tf$ is real analytic in $(-1,1)\setminus\{0\}$. Moreover it
 is $C^\infty$ but not analytic in $0$.  The radius of convergence of its  Taylor series at the origin   $\sum_{n=0}^\infty c_n \ga^n$  is zero: in fact  $c_1=4\eta $,  $c_{2n+1}=0$ for every $n\in \bbN$ and  the even coefficients satisfy 
 \begin{equation}
c_{2n} \, \stackrel{n \to \infty} \sim \,  4 e^{-\gamma} (-1)^{n+1} \frac{(2n-1)!}{\pi^{2n}} \, ,
\end{equation}
with $\gamma$  the Euler-Mascheroni constant.
\end{theorem}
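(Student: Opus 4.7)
My plan is to isolate the singular behaviour at $\alpha=0$ through an integration by parts. From the recurrence $K_{\alpha+1}=K_{\alpha-1}+(2\alpha/x)K_\alpha$ and the derivative formula $2K'_\alpha=-(K_{\alpha-1}+K_{\alpha+1})$ one gets $K'_\alpha/K_\alpha=-K_{\alpha-1}/K_\alpha-\alpha/x$, so the integrand of $\tf$ equals $-x(\log K_\alpha(x))'-\alpha$. The boundary term $x\log K_\alpha(x)$ vanishes at $x=0^+$ (from $K_\alpha(x)\sim \tfrac{1}{2}\Gamma(|\alpha|)(2/x)^{|\alpha|}$ for $\alpha\neq 0$ and $K_0(x)\sim \log(2/x)-\gamma$), so integration by parts yields
\begin{equation*}
\tf(\alpha)\,=\,-\alpha\eta-\eta\log K_\alpha(\eta)+G(\alpha),\qquad G(\alpha)\,:=\,\int_0^\eta \log K_\alpha(x)\,\dd x.
\end{equation*}
The first two terms are real analytic on all of $\mathbb{R}$, since $\alpha\mapsto K_\alpha(\eta)$ is entire (from \eqref{eq:Bessel-def}) and strictly positive on $\mathbb{R}$. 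Real analyticity of $G$ on $(-1,1)\setminus\{0\}$ is routine: on compact subintervals bounded away from $0$, $\log K_\alpha(x)$ is analytic in a uniform complex neighborhood of $\alpha$ with integrable majorant $C(1+|\log x|)$, so Morera/dominated convergence applies.

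For $C^\infty$-regularity at $\alpha=0$ and the vanishing of the odd coefficients $c_{2n+1}$ with $n\ge 1$, the key point is that $K_\alpha(x)=\int_0^\infty \cosh(\alpha t)e^{-x\cosh t}\,\dd t$ is manifestly even in $\alpha$, so $\log K_\alpha(x)$, $G(\alpha)$, and $\log K_\alpha(\eta)$ are all even. The only odd contribution to $\tf$ thus comes from $-\alpha\eta$, and the global odd part of $\tf$ is independently fixed by the symmetry $\tf(-\alpha)=\tf(\alpha)+2\alpha\eta$, itself a consequence of $xK_{\alpha+1}/K_\alpha=xK_{\alpha-1}/K_\alpha+2\alpha$. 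To upgrade the formal even expansion $\log K_\alpha(x)=\log K_0(x)+\sum_{n\ge 1}\alpha^{2n}A_{2n}(x)$ to genuine $C^\infty$ of $G$, I would express $A_{2n}(x)$ by Fa\`a di Bruno in terms of $K_0(x)$ and $\partial_\alpha^{2k}K_\alpha|_{\alpha=0}(x)=\int_0^\infty t^{2k}e^{-x\cosh t}\,\dd t$, show that $A_{2n}(x)=O((\log(1/x))^{2n})$ as $x\to 0$ (hence integrable on $(0,\eta)$), and bound the Taylor remainder at order $2N$ by Cauchy's estimate on a disk in $\alpha$ of radius $r(x)\sim 1/\log(2/x)$ inside which $K_\alpha(x)$ stays bounded away from $0$.

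For the sharp asymptotic of $c_{2n}$, which is driven by the $x\to 0$ region, I would use $K_\alpha(x)=\frac{\pi}{2\sin\pi\alpha}[I_{-\alpha}(x)-I_\alpha(x)]$ together with $\Gamma(1\mp\alpha)^{-1}=e^{\mp\gamma\alpha}(1+O(\alpha^2))$ to get the leading small-$x$ form
\begin{equation*}
K_\alpha(x)\,\approx\,\frac{\pi}{\sin\pi\alpha}\sinh(\alpha L(x))\cdot (1+O(\alpha^2)),\qquad L(x)\,:=\,\log(2/x)-\gamma.
\end{equation*}
Taking logarithms and applying the Euler products $\sin(\pi\alpha)/(\pi\alpha)=\prod_n(1-\alpha^2/n^2)$ and $\sinh(y)/y=\prod_n(1+y^2/(n\pi)^2)$ produces
\begin{equation*}
\log K_\alpha(x)-\log K_0(x)\,=\,\sum_{n\ge 1}\frac{\zeta(2n)}{n}\,\alpha^{2n}\Big(1+(-1)^{n+1}\frac{L(x)^{2n}}{\pi^{2n}}\Big)\,+\,R(\alpha,x),
\end{equation*}
where the remainder $R(\alpha,x)$ has $\alpha$-Taylor coefficients bounded uniformly in $x\in(0,\eta)$. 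The substitution $u=L(x)$ turns $\int_0^\eta L(x)^{2n}\,\dd x$ into $2e^{-\gamma}\Gamma(2n+1,L(\eta))\sim 2e^{-\gamma}(2n)!$ for $n\to\infty$. Combined with $\zeta(2n)\to 1$ and negligibility of the boundary contribution $-\eta A_{2n}(\eta)$ and of the $R$-piece, this yields $c_{2n}\sim (-1)^{n+1}\frac{2e^{-\gamma}(2n)!}{n\pi^{2n}}=(-1)^{n+1}\frac{4e^{-\gamma}(2n-1)!}{\pi^{2n}}$, which in particular forces the radius of convergence to be $0$.

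The main obstacle will be the uniform-in-$x$ control of $R(\alpha,x)$ and of the higher $x^{2k}$ corrections to the leading two-term Bessel approximation. Both contribute extra pieces to each $A_{2n}(x)$ that must be shown to grow only polynomially in $n$ after integration, so that the factorial asymptotics is carried exclusively by the $L(x)^{2n}$ term. Technically this reduces to uniform-in-$x$ Cauchy bounds on $\alpha$-Taylor coefficients of analytic corrections whose radius of convergence is bounded below independently of $x$, combined with tracking of polynomial-in-$x$ subleading contributions through the $x$-integration; this is where the detailed special-function bookkeeping is heaviest.
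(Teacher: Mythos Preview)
Your integration-by-parts reduction $\tf(\alpha)=-\alpha\eta-\eta\log K_\alpha(\eta)+\int_0^\eta\log K_\alpha(x)\,\dd x$ is correct and is a genuinely different starting point from the paper's. The paper works directly with $f_x(\alpha)=xK_{\alpha-1}/K_\alpha$, locates precisely its two nearest poles $\pm i\nu_1(x)$ and their residues $\pm R_1(x)$, subtracts this pole contribution, and bounds the remainder via Cauchy on the larger circle $|\alpha|=a\pi/L(x)$ with $a\in(1,2)$; the factorial growth then comes from integrating the explicit pole term. Your Euler-product expansion of $\log\bigl[\tfrac{\pi}{\sin\pi\alpha}\sinh(\alpha L)\bigr]$ identifies the same leading contribution more directly, and the evenness and $c_1$ drop out immediately from $K_\alpha=K_{-\alpha}$. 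Your $C^\infty$ argument via Cauchy on a disk of radius $\sim\pi/L(x)$ is also essentially the paper's Corollary after its Lemma~\ref{lem:wide_integrand_bound}.

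The gap is in the control of $R(\alpha,x)$. You claim its $\alpha$-Taylor coefficients are bounded uniformly in $x$ by Cauchy estimates on a disk of $x$-independent radius, but no such disk exists. Writing $1/\Gamma(1\pm\alpha)=e^{\pm\gamma\alpha}P(\pm\alpha)$ with $P$ entire, $P(0)=1$, $P'(0)=0$, the $k=0$ part of \eqref{eq:K_expansion1} gives $K_\alpha(x)=\tfrac{\pi}{\sin\pi\alpha}\sinh(\alpha L)\bigl[P_{\mathrm{even}}(\alpha)-\coth(\alpha L)\,P_{\mathrm{odd}}(\alpha)\bigr]+O(x^2)$, and $\coth(\alpha L)$ has poles at $\alpha=in\pi/L(x)$; equivalently, the zeros $\pm i\nu_n(x)$ of $K_\alpha(x)$ differ from those of $\sinh(\alpha L)$ by $O(L(x)^{-4})$ (the paper's Lemma~\ref{lem:nu_n}), so subtracting your main term does not remove the nearby singularity of $\log K_\alpha(x)$. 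What \emph{is} true---and what rescues your approach---is that the residue at that surviving pole is $O(L(x)^{-4})$, so the $\alpha^{2n}$-coefficient of $R$ grows like $L(x)^{2n-3}/\pi^{2n}$ rather than $L(x)^{2n}/\pi^{2n}$, and after integration contributes $O\bigl((2n-3)!/\pi^{2n}\bigr)$, down by $\sim 1/n^2$ from the main term. Proving \emph{that} bound is the real work, and it forces you either to locate $\nu_1(x)$ exactly and subtract the true nearest singularity (which is precisely the paper's strategy, executed on $f_x$ rather than on $\log K_\alpha$), or to establish directly an estimate of the shape $|\partial_\alpha^{2n}R(0,x)|\le C\,(2n)!\,L(x)^{2n-c}/\pi^{2n}$ with $c>1$; a Cauchy bound on a fixed disk cannot deliver this.
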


\medskip 

We are going to prove more. Namely that \eqref{eq:toyMW} defines an analytic  function for every $\ga\in \bbC$
with $0<\vert\Re \ga \vert <1$. We believe that the restriction to  $0<\vert\Re \ga \vert <1$ 
can be removed to get simply to $\vert \Re \ga \vert >0$. However this involves a certain number of complications 
connected to the fact that, with our approach, an \emph{ad hoc} analysis has to be developed for $\Re \ga \in \bbZ$. 
Since the focus is on $\ga=0$, we have made the choice not to develop this issue.  

\medskip

A (very) substantial gap remains between where our results lead and the proof of the McCoy-Wu claim that the
transition is $C^\infty$, even without the precise claim on the nature of the singularity. What we perform, and 
what McCoy and Wu do, is capturing the behavior the free energy near criticality when the disorder is vanishing -- this is  reminiscent 
of \emph{intermediate disorder limits} \cite{cf:AKQ,cf:CSZ} in which, like for us, one enters the framework of  integrable models  -- while the true issue is the behavior for (possibly) weak,  but non vanishing, disorder.

\section{On the Lyapunov exponent: the proof of Theorem~\ref{th:Lyap}}



We use the short-cut notation $\gd:= \gs^2(1-\ga)/ 2 \in \bbR$.
We recall that we can assume $\gep>0$ and let us start by showing that
the process does not hit $(0,0)$. Recall that $(X_1(0),X_2(0))\neq (0,0)$ and set $\tau_{(0,0)}:=
\inf\{ t>0: \, (X_1(t),X_2(t))= (0,0)\}$.
For this let us consider $R(t):= \sqrt{X_1^2(t)+X_2^2(t)}$.
By It\^o's formula:
\begin{equation}
\label{eq:dR}
\begin{split}
\dd R(t)\, &=\, \frac{X_1}R \dd X_1+ \frac{X_2}R \dd X_2 + \frac 12 \frac{X_1^2}{R^3}\dd  \langle  X_2,  X_2 \rangle
\\
&=\, \left(2 \gep \frac{X_1 X_2}{R} + \gd \frac{X_2^2} R + \frac {\gs ^2}2 \frac{X_1^2X_2^2}{R^3} \right)\dd t
+  \gs\frac{X_2^2}R \dd B_t
\\
&=\,
R \left(
2 \gep \frac{Y}{1+Y^2} + \gd \frac{Y^2}{1+Y^2} +
\frac{\gs^2}2 \frac{Y^2}{\left(1+Y^2\right)^2} 
\right) \dd t + R \left( \gs \frac{Y^2}{1+Y^2} \right) \dd B_t
\\
&=:\, R\, D\dd t+ R\, Q\dd B_t
\,
 ,
\end{split}
\end{equation}
where $Y:=X_2/X_1\in [-\infty, \infty]$ and $D=D(t)$ and $Q=Q(t)$ are uniformly bounded continuous stochastic processes ($\Vert D\Vert_\infty \le 2 \gep + \vert \gd \vert + \gs ^2/2$ and $Q\in [0, \gs]$),
defined up to $\tau_{(0,0)}$.
Since, again by It\^o's formula, we have 
\begin{equation}
\dd \log R(t)\, =\, \left(D(t) -\frac 12 Q^2(t) \right)\dd t + Q(t) \dd B_t\, ,
\end{equation}
we  see that $R(t)/R(0)$ is bounded away from zero on every compact time interval. 
This readily yields a contradiction if $\bbP (\tau_{(0,0)}< \infty)>0$. Hence  $\bbP (\tau_{(0,0)}< \infty)=0$
and we have proven that the process does not hit the origin.

\medskip

Now we are going to show that if the initial condition is in the (interior of the) second or fourth quadrant (in the counterclockwise sense), it hits the boundary of these quadrants in an a.s. finite time (in fact, this random time has finite expectation) and enters either the first or third quadrant.
And we show also that once the process is in the first (or third) quadrant, it stays there forever.  

Without loss of generality let us assume that $X_1(0)<0$ and $X_2(0)>0$ (second quadrant). 
For the analysis it  is helpful to consider 
$Y(t)(<0)$ up to $t\le \tau_{-\infty}:=\inf\{t\ge 0: \, Y(t)=-\infty\}$, which coincides 
a.s. with $\inf\{t\ge 0: \, X_1(t)=0\}$, and up to $t \le  
\tau_0:=\inf\{t\ge 0: \, Y(t)=0\}$. 
By It\^o formula
\begin{equation}
\label{eq:EDS}
\dd Y\, =\,  \left(\gep\left(1- Y^2 \right)  + \gd \, Y \right) \dd t + \gs Y \dd B_t\, ,
\end{equation}
and with the specific initial initial conditions we are using is somewhat helpful to 
work with the positive process $\tilde Y=-Y$:
 \begin{equation}
\label{eq:tildeEDS}
\dd \tilde Y\, =\,  \left(\gep\left( \tilde Y^2 -1 \right)  + \gd \, \tilde Y \right) \dd t + \gs \tilde Y \dd B_t\, ,
\end{equation}
which is in $(0, \infty)$ as long as the two dimensional process does not leave the interior of the quadrant.
We use the stopping times $\tilde \tau_0$ and $\tilde \tau _\infty$ with the obvious meaning. 
We are going to apply the Feller test for explosion  to show that  $\tilde \tau:=\min( \tilde\tau_0, \tilde\tau_\infty)$ is in $L^1$ so 
\begin{equation}
\bbP\left( \tilde\tau< \infty\right))\, =\, 1\, ,
\end{equation}
which means that, almost surely, the process hits the axes. 
And if $\tilde\tau_\infty< \tilde\tau_0$, that is if $(X_1(\tilde\tau),X_2(\tilde\tau))=(0,x_2)$, $x_2>0$, we readily see
from \eqref{eq:sys} that $X_1(\tilde\tau+t)>0$, at least for $t>0$ small.
If instead $\tilde\tau_0< \tilde\tau_\infty$, then $(X_1(\tilde\tau),X_2(\tilde\tau))=(x_1,0)$, $x_1<0$, 
and again from \eqref{eq:sys} one sees that $X_2(\tilde\tau+t)<0$ for $t>0$ small: {since the equation solved by $X_2$ is stochastic, the argument is slightly more delicate than for the previous case and we give some details. 
By the Strong Markov property it suffices to consider $(X_1(0),X_2(0))=(x_1,0)$, $x_1<0$, and 
$X_2(t)= \int_0^t (\gep X_1(s) +c X_2(s)) \dd s +M(t)$, with the constant $c$ and the centered Martingale $M$ easily read out of  \eqref{eq:sys}. Note that $M$ is a time changed Brownian motion. By continuity of $(X_1(\cdot), X_2(\cdot))$ we readily see that $X_2(t)-M(t) \le -\vert x_1\vert t/2$ for $t$ small. It is therefore  clearly impossible that
$\inf\{t>0: X_2(t)\neq 0\}$ is positive, because this would imply $X_2(t)<-\vert x_2 \vert /2$ for small $t$. Therefore $t\mapsto \int_0^t X_2(s)^2\dd s$ is increasing  at least for $t$ small, which implies that  the time change is non degenerate at least for small times. Hence $M(t)$ becomes negative for arbitrarily small values of $t$. Therefore   $X_2(t) < -t\vert x_1\vert/2$ -- in particular, it is negative -- for arbitrarily small values of $t$.}
An application of the Feller test, this time applied to $Y$ and not to $\tilde Y$, actually shows that
if $Y$ is in  $(0, \infty)$, then it will stay so for all times, that is the interior the first and third quadrants are stable sets
for the dynamics.

Let us detail the application of the Feller test. Let $Z$ is a one dimensional diffusion with
$Z(0)\in (0, \infty)$ and
\begin{equation}
\dd Z (t)\, =\, b(Z(t)) \dd t + q(Z(t))\dd B_t\, ,
\end{equation} 
$b(\cdot)$ and $q(\cdot)(>0)$ differentiable functions. We set $\tau:= \inf\{ t>0: \, Z(t)=0$ or $Z(t)=\infty\}$ and
\begin{equation}
\label{eq:sandv}
s(z)\, :=\, \int_1 ^z \exp \left( -2 \int_1^y \frac{b(r)}{q^2(r)}\dd r\right)\dd y
\ \text{ and } \
v(z)\, :=\, \int_1^z s'(y)\left( \int_1^y \frac 2{s'(r)q^2(r)} \dd r \right) \dd y\, .
\end{equation}
By monotonicity the limits of $v(z)$ for $z\searrow 0$ and $z \nearrow \infty$ exist in $[0, \infty]$
and they will be simply denoted by $v(0)$ and $v(\infty)$. 
If both $v(0)<\infty$ and $v(\infty)<\infty$ then 
$\bbE [\tau]< \infty$ \cite[Prop.~5.32, Ch.~5]{cf:KS}. On the other hand, if $v(0)=v(\infty)= \infty$ then 
$\bbP(\tau= \infty)=1$ \cite[Th.~5.29, Ch.~5]{cf:KS}  .  


Let us start with the $\tilde Y$ case (cf. \eqref{eq:tildeEDS}): we have 
\begin{equation}
\label{eq:pp1}
s'(z)\,=\, \frac C {z^{1-\ga}}\exp\left(-\frac{2\gep}{\gs ^2}\left(z+\frac 1z\right) \right)
\,,
\end{equation}
so for $z\ge 1$
\begin{equation}
\label{eq:defasymp}
s'(z) \asymp
z^{-1+\ga} \exp\left(-\frac{2\gep}{\gs ^2}z \right) \, ,
\end{equation}
with the notation $f(z) \asymp g(z)$ if 
$f(z)/ g(z) \in [a, 1/a]$ on the prescribed interval for some  $a \in (0,1)$,
and 
\begin{equation}
v'(z) \asymp \frac 1{z^2}\, ,
\end{equation}
so $v(\infty)< \infty$. In a very similar way, for $z \in (0, 1]$
\begin{equation}
s'(z) \asymp
z^{-1+\ga} \exp\left(-\frac{2\gep}{\gs ^2} \frac 1z \right) \, ,
%
\end{equation}
and the estimate of $v(0)$ is identical to  the one for $v(\infty)$  via a (double) change of variable $z \mapsto 1/z$. Hence 
$v(0)< \infty$ and the diffusion $\tilde Y$ hits $0$ or $\infty$ at a random time  which has finite expectation.

For the case of $Y$ we turn to \eqref{eq:EDS} and the difference is that
the factor $(z+1/z)$ in the exponent in \eqref{eq:pp1} changes sign. Once again, we can replace
$(z+1/z)$ by $z$ for $z\ge 1$, and by $1/z$ for $z \le 1$. This implies that the integral with respect to 
$r$ in the expression for $v(z)$ in \eqref{eq:sandv} stays bounded and bounded away from zero
both for $y\nearrow \infty$ and for $y \searrow 0$. The integral with respect to $y$ therefore diverges 
both for $z \nearrow \infty$ and $z \searrow 0$ (once again, the two computations are identical, up to change of variables). Therefore, almost surely, $Y$ hits neither $0$ nor $\infty$.

\medskip

We are now going to show that the diffusion $Y$ has a unique invariant probability, that we will make explicit,
on $(0, \infty)$.  This corresponds to the two (extremal) invariant probabilities
for the normalized process $(X_1, X_2)/ \sqrt{X_1^2+X_2^2}$, supported on the intersection of the unit circle with the 
first (or third) quadrant. For this it is practical 
to observe that  the generator of the evolution \eqref{eq:EDS} acts on $C^2$ functions $f: (0, \infty)\to \bbR$ as
\begin{equation} 
\label{eq:generator}
L_\gep f(y)\, =\, \left( \gep (1-y^2) + \gd \, y\right) f'(y) +\frac {\gs^2}2 y^2 f''(y)\, =\, 
\frac{\gs^2}{2 p_\gep(y)} \left( y^2 p_\gep(y) f'(y) \right)'\,,
\end{equation}
where $p_\gep(\cdot)$ is the probability density 
\begin{equation}
\label{eq:p_gep}
p_\gep(y)
\, =\, \frac{C_\gep}{y^{1+ \ga }} \exp\left(- \frac{2 \gep}{\gs^2}  \left(y + \frac 1y \right)
\right)\ \ \ \ \text{ with }  C_\gep^{-1}
\, =\, 2 K_\ga \left( 4 \gep/\gs^2\right)\,,
\end{equation}
and $K_\ga(\cdot)$ is defined in \eqref{eq:Bessel-def}.
This already makes evident the reversible nature of the diffusion $Y$ and, in particular,  
\eqref{eq:p_gep} is an invariant probability. The transformation $S(t):= \log Y(t)$ makes things even more straightforward:
$S$ is a diffusion on $\bbR$ with constant diffusion coefficient and a strongly confining potential: 
\begin{equation}
\label{eq:S}
\dd S\, =\, -U'(S) \dd t + \gs \dd B_t\ \  \ \text{ with } \ U(s)\, :=\, \gep \left( \exp(-s) +\exp(s) - \left(\gd - \frac{\gs^2}{2\gep}\right) s\right)\, .
\end{equation}
An invariant probability of this diffusion is $\tilde p_\gep(s)\propto \exp(-2 U(s)/\gs^2)$ and the generator has the familiar
symmetric form $\tilde L_\gep g= (\gs^2/2) ( \tilde p_\gep g')'/\tilde p_\gep$, for $g \in C^2(\bbR, \bbR)$, see e.g \cite[p.111]{cf:FW}.

Uniqueness of this invariant measure as well as ergodic  properties can be established in a variety of ways:    \cite[Th. 5.1]{MaruyamaTanaka}
gives a Pointwise Ergodic Theorem  that one can directly apply to \eqref{eq:S}, and of course it implies uniqueness. Alternatively one
can put \eqref{eq:EDS} or \eqref{eq:S} in {\sl natural scale} via a time change and a scale function, see  \cite[Ch.~V]{RogersWilliams}, and apply 
the Ergodic Theorem \cite[Ch.~V, Th. 53.1]{RogersWilliams}.  Ergodic properties of $S$ are also given in   \cite{PardouxVeretennikov}. 
\medskip

Therefore for every choice of $Y(0)\in (0, \infty)$, almost surely  and in $L^1 $ we have that
\begin{equation}
\label{eq:mlterg}
\begin{split}
 \lim_{t \to\infty } \frac 1t\log X_1(t) \,&=\,  \gep \lim_{t \to\infty } \frac 1t \int_0^t Y(s) ds \\
 &=\, \gep \int_0^\infty y \, p_\gep (y) \dd y 
 \, =\,  \frac{\gep K_{\ga-1}\left(4 \gep/\gs^2\right)}
{K_{\ga}\left(4 \gep/\gs^2\right)}\, ,
\end{split}
\end{equation}
where in the first step we have used the first identity in
\begin{equation}
\label{eq:fsys}
\begin{split}
X_1(t) \, &=\, X_1(0)\exp \left( 
\gep \int_0^t  Y(s) \dd s\right)\, ,
\\
X_2(t) \, &=\, X_2(0)\exp \left( \gep
\int_0^t \frac {1}  {Y(s)} \dd s  - \ga \frac{\gs ^2}2 t + \gs B_t\right)\, ,
\end{split}
\end{equation}
which is directly derived from \eqref{eq:sys} and holds for all $t>0$ if both $X_1(0)$ and $X_2(0)$ are positive
(or both are negative: in general,  the formula holds up to the hitting time of the boundary of the quadrant in which $(X_1(0), X_2(0))$ lies).
The second step in \eqref{eq:mlterg} is the application of the Pointwise Ergodic Theorem and the last one is an explicit computation.
In the same way, by using the second identity in \eqref{eq:fsys}
we get to (with $x=4 \gep/\gs^2$)
\begin{equation}
\label{eq:mlterg2}
\begin{split}
 \lim_{t \to\infty } \frac 1t\log X_2(t) \,&=\,  \gep \lim_{t \to\infty } \frac 1t \left( \int_0^t \frac 1{Y(s)} ds\right) - \ga \frac{\gs ^2}2 \\
 &=\, \gep \int_0^\infty \frac 1y \, p_\gep (y) \dd y - \ga \frac{\gs ^2}2
 \\&=\, \frac{\gs^2} 4 \left( \frac{x K_{1+\ga} (x)}{K_\ga (x)} -2 \ga\right)
 \stackrel{\eqref{eq:Bessel-id}}=  
  \frac{\gs^2} 4 \frac{x K_{1-\ga} (x)}{K_\ga (x)}
  \, =\,  \frac{\gep K_{\ga-1}\left(4 \gep/\gs^2\right)}
{K_{\ga}\left(4 \gep/\gs^2\right)}\, ,
 \end{split}
\end{equation}
which coincides with what we found in \eqref{eq:mlterg}. This shows that both components have the same exponential growth rate, hence also the norm of $(X_1(t), X_2(t))$, and \eqref{eq:formula} is proven.
If instead of starting from the first quadrant, we were starting from the second quadrant, the result is unchanged because the second quadrant is abandoned after a random time that is in $L^1$. This completes the proof of Theorem~\ref{th:Lyap}.
\qed

\section{Fluctuations of the Lyapunov exponent: proofs}
\label{sec:fluct}

\phantom{a}

\noindent
\emph{Proof of Proposition~\ref{prop:clt}.}
Recall that the ratio $Y(t)=X_2(t)/X_1(t)$ converges.  
Hence it is sufficient to prove the convergence of  \eqref{eq:cltLyap}
with $X_1(t)$ instead of $ \left \Vert (X_1(t), X_2(t))\right\Vert $ in the logarithm, i.e., to prove convergence in law of
\begin{equation}
\label{eq:cltLyap2}
\left\{
 \frac {1}{\sqrt t} \left( \int_0^t f(Y(s)) ds  \right) 
 \right\}_{t \in (0, \infty)}\, ,
\end{equation}
where the function $f(y)=\gep y- \cL_{\gs, \ga} (\gep) $ is centered for the invariant density $p_\gep(\cdot)$. We start by solving the Poisson equation,
$L_\gep g = f$.
 \cite[Th.~1]{PardouxVeretennikov} applies for $S=\log Y$, see \eqref{eq:S}, and shows that the Poisson equation has $g(y)= \int_0^\infty \bbE_y[ f(Y(s))] ds$ as unique solution.
We need here an explicit form, and we solve the linear equation
\begin{equation}
\label{eq:edo}
\frac {\gs^2}2 y^2 h' + \left( \gep (1-y^2) + \gd\, y\right) h \, = f   
\end{equation}
for $h=g'$ by the method of variation of constants. The homogeneous equation -- when the  right-hand side of \eqref{eq:edo} is equal to 0 -- admits
$h_0(y) = \big(y^2 p_\gep(y)\big)^{-1}$ as a solution. Looking now for solutions of the form $h(y)=k(y) h_0(y)$ for  \eqref{eq:edo} itself,
we find that 
\begin{equation}
k(y) = \frac 2{\gs^2} \int_0^y  \big(\gep z-\cL_{\gs, \ga} (\gep) 
 \big) 
 p_\gep(z) dz + C\,,
\end{equation}
and we choose $C=0$ (this is the only choice that yields the required integrability properties in what follows).
Finally,
\begin{equation}
\label{eq:solg}
g'(y) \,=\,  \left(y^2 p_\gep(y)\right)^{-1}  \frac 2{\gs^2} \int_0^y  \big(\gep z-\cL_{\gs, \ga} (\gep) 
 \big) 
 p_\gep(z) \dd z\;, 
\end{equation}
and the value of $g(1)$ does not matter for our purpose. 
Now, we can  follow a standard proof of Central Limit Theorem  for reversible diffusions, e.g. \cite[Sec.~2]{CattiauxChafaiGuillin}. 
Note in fact that $g$ is smooth 
and that for $y \to \infty$ (recall the notation used in \eqref{eq:defasymp})
\begin{equation}
g'(y) \, \asymp\, y^{\ga -1}e^{\frac {\gep }2 y} \left(\int_y^\infty z^{-\ga} e^{- \frac {\gep }2 z} \dd z\right) 
\, \asymp \, \frac 1 y\, ,
\end{equation}
where in the first step we use that $\int_0^y  (\gep z-\cL_{\gs, \ga} (\gep) 
) 
 p_\gep(z) \dd z= \int_y^\infty  (\cL_{\gs, \ga} (\gep) -\gep z
) 
 p_\gep(z) \dd z$ and that $\cL_{\gs, \ga} (\gep)$ is just a constant. For $y\searrow 0$ instead 
 \begin{equation}
g'(y) \, \asymp\, y^{\ga -1}e^{\frac {\gep }{2 y}} \left( 
\int_0^y z^{-1-\ga } e^{-\frac {\gep }{2 z}} \dd z \right)  \, =\, 
 y^{\ga -1}e^{\frac {\gep }{2 y}} \left( 
\int_{1/y}^\infty z^{-1+\ga } e^{-\frac {\gep z}{2 }} \dd z \right)  \, \asymp \, 1\, .
 \end{equation}
Therefore $\sup_y \vert g'(y) \vert< \infty$ and by  It\^o's formula we obtain that
\begin{equation}
\label{eq:martM}
\begin{split}
M_t \,:=\,  \gs \int_0^t Y_s g'(Y_s) \dd B_s \, &=\,  g(Y_t)-g(Y_0) - \int_0^t L_\gep g (Y_s) \dd s \\
& =\, g(Y_t)-g(Y_0) - \int_0^t f (Y_s) \dd s  \, ,
\end{split}
\end{equation}
is a martingale with bracket 
\begin{equation} 
\langle M \rangle_t\, =\,  \gs^2 \int_0^t  Y_s^2 g'(Y_s)^2 \dd s\, .
\end{equation}
By the ergodic theorem, as $t \to \infty$, almost surely
\begin{equation} 
\label{eq:cvcrochet}
\begin{split}
\frac{1}{t} \langle M \rangle_t
&\longrightarrow  \gs^2  \int_0^\infty y^2 g'(y)^2 p_\gep(y) \dd y \\
 &\ = \frac 4{\gs^2}  \int_0^\infty  \frac 1{y^2 p_\gep} \left( \int_0^y \left(\gep z-\cL_{\gs, \ga} (\gep) 
 \right) p_\gep(z) \dd z \right)^2 \dd y \, 
 =\, v_{\gs, \ga} (\gep)\, . 
\end{split}
\end{equation}
This  deterministic limit is  finite  in view of the (exponential) decay of $\int_0^y \big(\gep z-\cL_{\gs, \ga} (\gep) 
 \big) p_\gep(z) \dd z$ as $y \to 0$ and $y \to \infty$. Then, 
the central limit theorem for martingales applies, and 
 $t^{-1/2}M_t$ converges in law to a centered Gaussian with variance given by 
$ v_{\gs, \ga} (\gep)$.  Now, the first two terms in the last line of \eqref{eq:martM} are bounded in probability, so 
$-t^{-1/2} \int_0^t f(Y_s)ds$ converges to the same limit as $t^{-1/2}M_t$, and \eqref{eq:cltLyap2} is proved. 
Therefore the proof of Proposition~\ref{prop:clt} is complete.
\qed

\medskip

\noindent
\emph{Proof of Proposition~\ref{prop:asympt}.} 
 In view of  \eqref{eq:varLyap} and of the fact that we know the asymptotic behavior of $K_\ga(x)\sim x^{-\vert \ga\vert}\Gamma (\vert\ga\vert)/2^{1-\vert\ga\vert}$ for $\ga\neq 0$ and $K_0(x) \sim \log (1/x)$, what we have to estimate is 
 \begin{multline}
 \label{eq:fluct-T12}
  \int_0^\infty \frac{1}{y^{1-\ga}}e^{\frac{x}{2}\left(y+\frac{1}{y}\right)} \left( \int_0^y \frac{\gep z-\cL_{\gs, \ga}}{z^{1+\ga}} e^{-\frac{x}{2}\left(z+\frac{1}{z}\right)} 
  \dd z \right)^2 \dd y
\, =
\\ 
\int_1^\infty \frac{1}{y^{1-\ga}}e^{\frac{x}{2}\left(y+\frac{1}{y}\right)} \left( \int_y^\infty \frac{\gep z-\cL_{\gs, \ga}}{z^{1+\ga}} e^{-\frac{x}{2}\left(z+\frac{1}{z}\right)} 
  \dd z \right)^2 \dd y + \ \ \ \ \ \ \ \ \ \ \ \ \ \ \ \ \ \
  \\
  \int_0^1 \frac{1}{y^{1-\ga}}e^{\frac{x}{2}\left(y+\frac{1}{y}\right)} \left( \int_0^y \frac{\gep z-\cL_{\gs, \ga}}{z^{1+\ga}} e^{-\frac{x}{2}\left(z+\frac{1}{z}\right)} 
  \dd z \right)^2 \dd y \, =:\, T_1(x)+T_2(x)\, .
\end{multline}

\smallskip

\begin{rem}
\label{rem:simplify}
In view of Proposition~\ref{th:asympt} we know that   $\cL_{\gs, \ga}=O(\gep^{\min(2\ga, 2)})$ for $\ga>0$, except for $\ga=1$
for which there is a  logarithmic correction. Therefore $ \cL_{\gs, \ga} =o( \gep )$ if  $\ga >1/2$ and, since $z\ge 1$, in dealing with  $T_1(x)$ we can safely neglect the term containing $ \cL_{\gs, \ga}$ for $\ga >1/2$.  On the other hand, 
in dealing with  $T_2(x)$ we can safely neglect the term not containing $ \cL_{\gs, \ga}$ for $\ga <1/2$. In fact 
$ \cL_{\gs, \ga}$ is much greater than $\gep$, hence of $\gep z$ ($z \le 1$ for $T_2$), for $\ga <1/2$. 
\end{rem}
\smallskip

To make the expressions more compact and readable we choose 
\begin{equation}
{\gs^2}\, =\, 2\, ;
\end{equation}
the general case is easily recovered by a scaling argument.

\medskip

We start with the analysis of $T_1$. By a change of variable we have:
\begin{equation}
\label{eq:T1-start}
T_1(x) \, =\, 
 \left( \frac x 2 \right)^{\ga }\int_{x/2}^\infty y^{\ga -1}e^{y+ (x/2)^2/y}  \left( \int_y^\infty \left( z^{-\ga}  -\cL_{\gs, \ga} z^{-\ga-1}\right)e^{-z- (x/2)^2/z} \dd z
\right)^2  \dd y\, .
\end{equation} 
We claim that for $\ga<0$ we simply have 
\begin{equation}
\label{eq:T1-alpha<0}
T_1(x) \stackrel{x \searrow 0}\sim 
 \left( \frac x 2 \right)^{\ga }\int_{0}^\infty y^{\ga -1}e^{y}  \left( \int_y^\infty \left( z^{-\ga}  -\vert \ga\vert z^{-\ga-1}\right)e^{-z} \dd z
\right)^2  \dd y\, =:\, 2^{\vert \ga \vert}   \Gamma(\vert \ga \vert)x^{\ga }\,  .
\end{equation} 
For this  choose $\gd\in(0,1)$ and split  the integral in $y$
in \eqref{eq:T1-start} as $\int_{x/2}^\infty \ldots=  \int_{x/2}^\gd \ldots +   \int_{\gd}^{1/\gd} \ldots +  \int_{x/2}^{1/\gd}\ldots:= I_1+I_2+I_3$. The limit $x \searrow 0$ is easily taken in $I_2$ and the dependence on $x$ disappears. Moreover we directly check that $\lim_{\gd\searrow 0} \lim_{x\searrow 0}I_2\in (0, \infty)$ is the integral in the right-hand side of \eqref{eq:T1-alpha<0}. 
We are left with showing that $\lim_{\gd \searrow 0} \sup_{x\in(0, 2\gd)} I_j=0$ for $j=1$ and $3$.
For $I_1$ recall that in \eqref{eq:T1-start}  we can replace $\int_y^\infty \ldots \dd z$ with  $\int_0^y \ldots \dd z$,  so that 
for $\gd$ sufficiently small (so $x$ is small too and we can use the asymptotic approximation of $\cL_{\gs, \ga}\sim \vert \ga \vert$) we have 
\begin{equation}
\sup_{x\in(0, 2\gd)}\vert I_1\vert \, \le \, 
\int_{0}^\gd y^{\ga -1}  \left( \int_0^y \left( z^{-\ga}  +2 \vert \ga \vert  z^{-\ga-1}\right) \dd z
\right)^2   \dd y \stackrel{\gd\searrow 0 } \longrightarrow 0\,.
\end{equation}
Moreover (again, $\gd$ small)
\begin{multline}
\sup_{x\in(0, 2\gd)}\vert I_3\vert \, \le \, 2
\int_{1/\gd}^\infty y^{\ga -1}e^{y}  \left( \int_y^\infty \left( z^{-\ga}  + 2 \vert \ga \vert z^{-\ga-1}\right)e^{-z} \dd z
\right)^2  \dd y\, \le \\
\int_{1/\gd}^\infty e^{3y/2}  \left( \int_y^\infty e^{-z} \dd z  
\right)^2  \dd y \stackrel{\gd\searrow 0 } \longrightarrow 0\, ,
\end{multline}
and \eqref{eq:T1-alpha<0} is proven.

\medskip

For $\ga=0$ we have 
\begin{equation}
\label{eq:T1-start0}
T_1(x) \, =\, 
 \int_{x/2}^\infty y^{-1}e^{y+ (x/2)^2/y}  \left( \int_0^y \left( 1  -\cL_{\gs, 0} z^{-1}\right)e^{-z- (x/2)^2/z} \dd z
\right)^2  \dd y\, .
\end{equation} 
We anticipate (for future use) that the result we are going to obtain would be the same if $1  -\cL_{\gs, 0} z^{-1}$ is
replaced by $\cL_{\gs, 0} z^{-1}$
Again,  $\int_0^y\ldots$ can be replaced by $\int_y^\infty\ldots$  and it suffices the splitting
$ \int_{x/2}^\infty\ldots =  \int_{x/2}^\gd \ldots +  \int_{\gd}^\infty \ldots =:I_1+I_2$.
In fact (recall that $\cL_{\gs,0}=o(1)$ as $x \searrow 0$, in particular $\cL_{\gs,0}$ becomes smaller than one) 
\begin{equation}
 \sup_{x\in(0, 2\gd)}\vert I_2 \vert \, \le \, 2\int_\gd ^\infty
 y^{-1} e^y \left(\int_y^\infty (1+ z^{-1})e^{-z} \dd z\right)^2
  \dd y\, ,
\end{equation}
and the right-hand side is just a finite expression that depends on $\gd$.
On the other hand $I_1$ diverges as $x \searrow 0$. In fact observe that
\begin{equation}
\label{eq:tilde-I}
e^{-2 \gd}\tilde I
\, \le \, I_1 \, \le \, 
e^{2 \gd} \tilde I\,, \text{ with } \tilde I \, :=\, \int_{x/2}^\gd y^{-1}  \left( \int_0^y \left( 1  -\cL_{\gs, 0} z^{-1}\right)e^{- (x/2)^2/z} \dd z
\right)^2  \dd y\, ,
\end{equation}
so we can focus on $\tilde I$. By using $\int_0^L (1/x)\exp(-1/x) \dd x \sim \log L$ for $L \to \infty$,
and the fact that $\gd/x^2\le y/x^2\le 1/(2x)$ we see that 
\begin{equation}
\int_0^y  z^{-1} e^{- (x/2)^2/z} \dd z \sim \log(y/ x^2)\, ,
\end{equation}
uniformly in the range of $y $ we are using, and as $x \searrow 0$. Therefore
\begin{equation}
\tilde I \sim \left( \frac 1{2 \log(1/x)}\right)^2 \int_{x/2}^\gd y^{-1}  \left( 2 \log(1/x) - \log (1/y)\right)^2 \dd y\, \sim\,
\frac 7{12} \log (1/x)\,.
\end{equation}
This concludes the $\ga =0$ case: $T_1(x) \sim \frac 7{12} \log (1/x)$.

\medskip

The case $\ga>0$ is quicker to treat for $\ga>1/2$ because of Remark~\ref{rem:simplify}. In reality also for $\ga\in (0,1/2]$ the term containing $\cL_{\gs, \ga}$ does not contribute: we will check  this fact after  estimating what is giving the main contribution:
\begin{equation}
\label{eq:T1-start>0}
 \left( \frac x 2 \right)^{\ga }\int_{x/2}^\infty y^{\ga -1}e^{y+ (x/2)^2/y}  \left( \int_y^\infty  z^{-\ga} e^{-z- (x/2)^2/z} \dd z
\right)^2  \dd y\, ,
\end{equation} 
and the final result is that for $\ga \in (0,2)$
\begin{equation}
\label{eq:forq1}
T_1(x) \,\sim\,  
\left( \frac x 2 \right)^{\ga }
\int_{0}^\infty y^{\ga -1}e^{y}  \left( \int_y^\infty  z^{-\ga} e^{-z} \dd z
\right)^2  \dd y\, =:\, q_1(\ga) x^{\ga }\,.
\end{equation}
This is proven like before by restricting the integral to $y \in (\gd,1/\gd)$ and estimating the rest before 
letting $\gd \searrow 0$. The function $q_1(\cdot)$ can be expressed with a  Meijer G-function, but this does not make it 
much more explicit. 
Let us quickly verify that the term we neglected for $\ga \in (0, 1/2]$ is of lower order: 
by focusing on $y\in (x/2, \gd)$ (otherwise the fact is obvious) we see that an upper bound on this contribution
is $O(x^{5\ga}) \int_{x/2}^\gd y^{-\ga -1} \dd y = O(x^{4\ga})$.

\medskip

The $\ga=2$ case generates a logarithmic correction: in fact from \eqref{eq:T1-start>0} we see that if we restrict the integral over $y\ge \gd$, the contribution is bounded by $x^2$ times a constant that depends only on $\gd$. 
The integral with $y\in (x/2, \gd)$ instead is controlled above and below, up to a factor that can be chosen 
arbitrarily close to one uniformly in $x\searrow 0$ by choosing $\gd$ small (like in \eqref{eq:tilde-I}), by
\begin{equation}
\left( \frac x 2 \right)^{2} \int_{x/2}^\gd y \left( \int_y^\infty z^{-2}e^{-z} \dd z\right)^2\, \dd y
\stackrel{x \searrow 0}\sim
 \left( \frac x 2 \right)^{2} \log (1/x)\,.
\end{equation}

\medskip

For $\ga >2$ we go back to \eqref{eq:T1-start>0}
\begin{multline}
\label{eq:T1-alpha>2}
\left( \frac x 2 \right)^{\ga }\int_{x/2}^\infty y^{\ga -1}e^{y+ (x/2)^2/y}  \left( \int_y^\infty  z^{-\ga} e^{-z- (x/2)^2/z} \dd z
\right)^2  \dd y\, = \left( \frac x 2 \right)^{\ga }\int_{1}^\infty \ldots + \left( \frac x 2 \right)^{\ga }\int_{x/2}^1 \ldots
\le 
\\ 
C x^\ga + 
 3\left( \frac x 2 \right)^{\ga }\int_{x/2}^\infty y^{\ga -1}  \left( \int_y^\infty  z^{-\ga} e^{-z} \dd z
\right)^2  \dd y\,
\le \, C' x^{\ga} \left(1+ \int_{x/2}^\infty y^{-\ga +1}    \dd y\right)\, =\, O(x^2)\, ,
\end{multline}
where $C$ and $C'$ are constants independent of $x$.

\medskip

We collect what we have obtained:
\begin{equation}
\label{eq:T1-finale}
T_1(x)\, \sim\, \begin{cases}
2^{\vert \ga \vert}   \Gamma(\vert \ga \vert)
 x^\ga & \text{ if } \ga \in (-\infty,0)\, ,\\
\frac 7{12} \log (1/x) & \text{ if } \ga=0\, , \\
q_1(\ga) x^\ga & \text{ if } \ga \in (0,2) \, ,\\
 \frac 14 x^2\log (1/x) & \text{ if } \ga=2\, , \\
O(x^2)& \text{ if } \ga>2\, .
\end{cases}
\end{equation}

\medskip

We now turn to $T_2(x)$ and the basic expression is after a change of variables  (still, $\gs =\sqrt{2}$)
\begin{multline}
 T_2(x)\, =\\
   \left( \frac x 2 \right)^{4-\ga}\int_{x/2}^\infty u^{-1-\ga} e^{u+(x/2)^2/u} \left(
 \int_u^\infty \left(  v^{-2+\ga} 
 - \left( \frac {2 \cL_{\gs, \ga}}{x^2} \right) v^{-1+\ga}\right) e^{-v-(x/2)^2/v} \dd v
  \right)^2 \dd u\, .
 \end{multline} 

Let us start with $\ga<0$ and recall that by Remark~\ref{rem:simplify} it suffices to consider 
\begin{equation}
\left( \frac x 2 \right)^{4-\ga}  \left( \frac {2 \cL_{\gs, \ga}}{x^2} \right) ^2
\int_{x/2}^\infty u^{-1-\ga} e^{u+(x/2)^2/u} \left(
 \int_u^\infty   v^{-1+\ga} e^{-v-(x/2)^2/v} \dd v
  \right)^2 \dd u\, .
  \end{equation}
The pre-factor behaves asymptotically as $\ga^2 (x/2)^{-\ga}$ and the integral can be bounded by two times 
\begin{equation}
\int_{x/2}^\infty u^{-1-\ga} e^{u} \left(
 \int_u^\infty   v^{-1+\ga} e^{-v} \dd v
  \right)^2 \dd u\, \sim \, \frac 1{\vert \ga \vert^3}(x/2)^\ga\, .
\end{equation}
So $T_2(x)=O(1)$ for $\ga <0$.

\medskip

For $\ga=0$ the expression to evaluate is
\begin{equation}
\cL_{\gs,0}^2 \int_{x/2}^\infty u^{-1} e^{u+(x/2)^2/u} \left(
 \int_u^\infty   v^{-1} e^{-v-(x/2)^2/v} \dd v
  \right)^2 \dd u\, .
\end{equation}
But this term is minimally different from
\eqref{eq:T1-start0} (see observation right after\eqref{eq:T1-start0}) and exactly in the same way we arrive at
$T_2(x) \sim T_1(x) \sim \frac 7{12} \log (1/x)$. 

\medskip

For $\ga\in (0,2)$ we split the integral with respect to $u$ and the contribution when $u\ge 1$ is bounded 
so the contribution to $T_2(x)$ is $O(x^{4-\ga})$. For $u<1$ we make an upper on the  contribution to $T_2(x)$:
\begin{multline}
3\left( \frac x 2 \right)^{4-\ga}\int_{x/2}^1 u^{-1-\ga} \left(
 \int_0^u \left(  v^{-2+\ga} 
 + \frac {2 \cL_{\gs, \ga}}{x^2}  v^{-1+\ga}\right) e^{-v} \dd v
  \right)^2 \dd u
   \\
\le \,   C x^{4-\ga } \left(
  \int_{x/2}^1 u^{-3+\ga} \dd u+ \left( \frac {2 \cL_{\gs, \ga}}{x^2} \right)^2 \int_{0}^1 u^{-1+\ga} \dd u\right) \, 
  \le \, C' \left( x^2  + \cL_{\gs, \ga}^2 x^{-\ga} \right)\, ,
\end{multline} 
and since $\cL_{\gs, \ga}= \max(x^{2\ga}, x^2)$, except for a logarithmic correction for $\ga=1$,  
we conclue that $T_2(x)=O(\max(x^{3\ga}, x^2))$ for $\ga \in (0,2)$.

\medskip

For $\ga=2$ we again split the integral with respect to $u\ge \gd$ and $u< \gd$. The integral for 
$y\ge \gd$ is bounded by a constant that depends only on $\gd$. 
Arguing as in  \eqref{eq:tilde-I} we see that what it suffices to control
\begin{equation}
\int_{x/2}^\gd  u^{-3}  \left(
 \int_0^u \left(  1
 - \left( \frac {2 \cL_{\gs, 2}}{x^2} \right) v \right) e^{-v} \dd v
  \right)^2 \dd u\,  \sim \, \log (1/x)\,,
\end{equation}
where we have used that $ \cL_{\gs, 2}=O({x^2})$. Therefore $T_2(x) \sim (x/2)^2  \log (1/x)$ for $\ga =2$.

\medskip

Finally, for $\ga>2$ we have
\begin{equation}
\label{eq:forq2alpha}
T_2(x) \sim \left( \frac x 2 \right)^{4-\ga}\int_{0}^\infty u^{-1-\ga} e^{u} \left(
 \int_u^\infty \left(  v^{-2+\ga} 
 -  \frac{v^{-1+\ga}}{\ga-1}\right) e^{-v} \dd v
  \right)^2 \dd u =  \frac{2^{\ga-4}\Gamma(\ga-2)}{(\ga-1)^2} x^{4-\ga}.
\end{equation}
The proof of this claim 
follows the same line as the proof of \eqref{eq:T1-alpha<0}, that is, splitting of the $y$ integral in three parts and taking 
the limit $\gd \searrow 0$. 

\medskip

We have got to:
 \begin{equation}
 T_2(x)\, \sim \begin{cases} 
 O(1) & \text{ if } \ga<0\, , \\
   \frac 7{12} \log (1/x) & \text{ if } \ga=0\, , \\
 O(\max(x^{3\ga}, x^2))  & \text{ if } \ga\in (0,2)\, , \\
 \frac 14 x^2 \log (1/x) & \text{ if } x=2\, , \\
    \frac{2^{\ga-4}\Gamma(\ga-2)}{(\ga-1)^2} x^{4-\ga} & \text{ if } \ga>2\, .
 \end{cases}
 \end{equation}
 Therefore only $T_2$ contributes to the final result  for $\ga >2$. Otherwise only  $T_1$  contributes, except 
 at $\ga=0$ and $2$ where they both contribute and exactly with the same amount: 
 \begin{equation}
 T_1(x) +T_2(x) \, \sim\, 
 \begin{cases}
 2^{\vert \ga \vert}   \Gamma(\vert \ga \vert) x^\ga & \text{ if } \ga \in (-\infty,0)\, ,\\
\frac 76  \log (1/x) & \text{ if } \ga=0\, , \\
q_1(\ga) x^\ga & \text{ if } \ga \in (0,2) \, ,\\
 \frac 12 x^2\log (1/x) & \text{ if } \ga=2\, , \\
  \frac{2^{\ga-4}\Gamma(\ga-2)}{(\ga-1)^2} x^{4-\ga} & \text{ if } \ga>2\, .
\end{cases}
\end{equation}
The final
 result, i.e. \eqref{eq:propasympt}, is recovered by dividing by $K_\ga(x)$ and using
$K_\ga(x)\sim x^{-\vert \ga\vert}\Gamma (\vert\ga\vert)/2^{1-\vert\ga\vert}$ for $\ga\neq 0$ and $K_0(x) \sim \log (1/x)$.
The constant
$C(\ga)$ in \eqref{eq:propasympt} is
\begin{equation}
\label{eq:C(alpha)}
C(\ga)\, :=\begin{cases}
2 & \text{ if } \ga <0\, ,
\\
7/6 & \text{ if } \ga =0\, ,
\\
q_1(\ga) 2^{1-\vert \ga\vert}/\Gamma(\vert \ga \vert) & \text{ if } \ga \in(0,2)\, ,
\\
1/4 &\text{ if } \ga =2\, ,
\\
1/\left( 8(\ga-1)^3 (\ga-2) \right) & \text{ if } \ga >2\, ,
\end{cases}
\end{equation}
with $q_1(\ga)$  given in  \eqref{eq:forq1}.
The proof of Proposition~\ref{prop:asympt} is therefore complete.
\qed

 \section{Lyapunov exponent and singularities: the proof of Proposition~\ref{th:asympt}}

In view of \eqref{eq:recover} we just consider $\ga \ge 0$.
We treat first the non integer case.

\subsubsection{The case $\ga \in (0, \infty) \setminus \bbN$}
By the \emph{connection} formula with the other modified Bessel function 
$I_\ga (x)$, 
we have \DLMF{10.27}{4} and \DLMF{10.25}{2}
\begin{equation}
\label{eq:connection}
K_\ga (x)\, =\, \frac \pi {2 \sin(\pi \ga)} \left(I_{-\ga}(x)-I_{\ga}(x)\right)\, ,
\end{equation}
with
\begin{equation}
\label{eq:connection-I}
I_\ga(x)\, :=\, \left( \frac x2\right)^\ga \sum_{k=0}^\infty
\frac{\left(  {x^2}/4\right)^k}{k! \, \Gamma(\ga+k+1)} \, =: \,
\left( \frac x2\right)^\ga \tilde I_\ga(x)\, ,
\end{equation}
where $\tilde I_\ga(x)$ is a non standard notation, but it singles out the analytic part of 
the $I_\ga(\cdot)$: in fact $\tilde I_\ga(\cdot)$ is an entire function. 
By elementary manipulations we  obtain
\begin{equation}
\frac{x K_{\ga -1}(x)}{K_\ga(x)}\, =\, 2\,  \frac{
(x/2)^{2\ga}\left( \tilde I_{\ga-1}(x)/ \tilde I_{-\ga}(x)\right)-
(x/2)^{2}\left( \tilde I_{-\ga+1}(x)/ \tilde I_{-\ga}(x)\right)
}{1-(x/2)^{2\ga}\left( \tilde I_{\ga}(x)/ \tilde I_{-\ga}(x)\right)}\, .
\end{equation}
Therefore, aiming at expanding this expression for $x \searrow 0$ up to the first singular term, we obtain
\begin{equation}
\label{eq:firstsing}
\begin{split}
\frac{x K_{\ga -1}(x)}{K_\ga(x)}\, &=\,
-2 \left(\frac x 2\right)^{2}
\left( \frac{\tilde I_{-\ga+1}(x)}{ \tilde I_{-\ga}(x)}\right)+
\left(\frac x 2\right)^{2\ga} \frac{2\Gamma(1-\ga)}{\Gamma(\ga) }
+O(x^{4\ga})+O\left(x^{2\ga +2}\right)
\\
&=\, p_{\ga, \lfloor \ga \rfloor}\left(x^2\right)+ \left(\frac x 2\right)^{2\ga} \frac{2\Gamma(1-\ga)}{\Gamma(\ga) }
+ O\left(x^{ 2\lfloor \ga \rfloor +2}\right)+O(x^{4\ga})\, ,
\end{split}
\end{equation}
where $p_{\ga, j}(y)$ is the Taylor expansion up to degree $j$ of 
$p_\ga(y):=- (y/2) \tilde I_{-\ga+1}(\sqrt{y})/ \tilde I_{-\ga}(\sqrt{y})$. 
It is not difficult to realize that 
the coefficients of this Taylor expansion are  just rational function of $\ga$.
Let us detail this point that is going to be important also for the passage to $\ga$ integer:
if we introduce for $k\in \bbN\cup  \{0\}$ the Pochhammer's symbol 
\begin{equation}
(\nu)_k\, :=\, \frac{\Gamma(\nu+k)}{\Gamma(\nu)}\stackrel{k=1,2, \ldots}{=} (\nu+k-1) (\nu+k-2) \cdots (\nu+1) \nu \, , 
\label{eq:Pochhammer}
\end{equation}
we see that $p_\ga(y)$ can be written in terms of Pochhammer's symbols:
\begin{equation}
\label{eq:withPoch}
p_\ga(y)\, =\, - \frac y2 \left({
\sum_{k=0}^\infty
\frac{y^k}{k! \,(-\ga+1)_{k+1}2^{2k}} 
}\right)\bigg/
\left({
\sum_{k=0}^\infty
\frac{ {y}^k}{k! \,(-\ga+1)_{k}2^{2k}} 
}\right)
.
\end{equation} 
From now the explicit determination of $p_\ga(y)$ is elementary, but cumbersome (to the point of requiring symbolic 
computations).
 We give the first four terms
\begin{equation}
\label{eq:fewterms}
\begin{split}
p_\ga(y)\, =\, &\sum_{j=1}^4 c_j(\ga)y^j + 
 \ldots =\, \frac{1}{2(\ga -1)} y
- \frac1{8(\ga-2)(\ga-1)^2}y^2 +\\
& \ \ \frac 1{16  (\ga-3)(\ga-2)(\ga-1)^3} y^3 
-\frac{5 \ga -11}{128 (\ga -4)(\ga-3)(\ga-2)(\ga-1)^4} y^4
+ \ldots 
\end{split}
\end{equation}
This completes the proof for the non integer case.

\subsubsection{The case $\ga=0,1, 2, \ldots$}
We need to treat separately the case $\ga=0$ because it involves $K_{-1}(x)$, which
however it is just $K_{1}(x)$, but it requires an ad hoc (much simpler) analysis. So we start off
with the case $\ga=1, 2, \ldots$. From now $\ga$ will be replaced by $n$ and when we write $\ga$ we mean a 
quantity that is not integer.
 By  \DLMF{10.31}{1} we have that for $n=0,1,2,\ldots$
 \begin{equation}
 \label{eq:Kn}
\left( \frac x2\right)^{n} K_n(x)\, =\, \frac 12  \sum_{k=0}^{n-1} (-1)^k
 \frac{(n-k-1)!}{k!}  \left( \frac x2\right)^{2k}+ \frac{(-1)^{n}}{n!}  \left( \frac x2\right)^{2n}\log (x) 
 +O\left(x^{2n}\right)\, ,
 \end{equation}
 and the sum as to be interpreted as empty if $n=0$.
 For $n=1, 2, \ldots$ we write the degree $2n-2$ polynomial in the right-hand side as a $n-1$ degree polynomial 
 with argument $x^2$:
 \begin{equation}
 q_{n}\left(x^2\right)\, :=\, 
\frac 12 \sum_{k=0}^{n-1} (-1)^k
 \frac{(n-k-1)!}{k!\, 2^{2k}}  \left( x^2\right)^{k} \,.
 \end{equation}
 With this notation we have for $n=1, 2, \ldots$
 \begin{equation}
 \begin{split}
 \frac{x K_{n-1}(x)}{K_n(x)}
 \,&=\, \frac{
 2 (x/2)^2 \frac{q_{n-1}(x^2)}{q_n(x^2)} + \frac{2(-1)^n}{(n-1)!} \frac{(x/2)^{2n}}{q_n(x^2)} \log x + O\left( x^{2n}\right)
 }
 {1+O\left( x^{2n} \vert \log x\vert \right)
 }
 \\& =\, \frac{x^2}2 t_n(x^2)+
 \frac{2^{2-2n}(-1)^n}{((n-1)!)^2}x^{2n} \log x + O\left( x^{2n}\right)\, ,
 \end{split}
 \end{equation}
 where $t_1(y):=0$ and, for $n=2, 3, \ldots$, $t_n (y)$ is the polynomial of degree $n-2$ 
 given by the Taylor expansion of the rational function $q_{n-1}(y)/q_{n}(y)$.  
 
 We are therefore left with showing that the coefficients of the polynomial ${y} t_n(y)/2$ of degree $n-1$ coincide with the corresponding coefficients of the Taylor polynomial of
 $p_\ga(y)$, when $\ga=n$ (we can consider $n=2,3,\ldots$ becaue if $n=1$ the polynomial is identically zero and our claim  is trivially verified).
 In different terms, we have to show that 
 if we set $\ga=n=2,3 , \ldots$
 in \eqref{eq:fewterms} up to the degree $n-1$ (it is readily seen that
 the $n$-th Taylor coefficients diverges as $\ga \to n$), then 
 we obtain ${y} t_n(y)/2$. To prove this
 it is useful to remark that, in order to obtain the Taylor expansion of $p_\ga(y)$
 up to order $\lfloor \ga \rfloor$ it is sufficient to expand the rational function
 \begin{equation}
\label{eq:withPochtilde}
\tilde p_\ga(y)\, =\, - \frac y2 \left({
\sum_{k=0}^{\lfloor \ga \rfloor-1}
\frac{y^k}{k! \,(-\ga+1)_{k+1}2^{2k}} 
}\right)\bigg/
\left({
\sum_{k=0}^{\lfloor \ga \rfloor-1}
\frac{ {y}^k}{k! \,(-\ga+1)_{k}2^{2k}} 
}\right)
.
\end{equation} 
In this expression we can set  $\ga = n$ obtaining thus the rational function
\begin{equation}
\label{eq:withPoch-n}
\tilde p_n(y)\, :=\, - \frac y2 \left({
\sum_{k=0}^{n-2}
\frac{y^k}{k! \,(-n+1)_{k+1}2^{2k}} 
}\right)\bigg/
\left({
\sum_{k=0}^{n-2}
\frac{ {y}^k}{k! \,(-n+1)_{k}2^{2k}} 
}\right)
.
\end{equation} 
and Taylor coefficients up to degree $n-1$ of this function are precisely the limit for $\ga\to n$
of the Taylor coefficients up to degree $n-1$ of $p_\ga(y)$, cf. \eqref{eq:withPochtilde}-\eqref{eq:fewterms}.
We are left with showing that the coefficients up to degree $n-1$ of $\tilde p_n(y)$
coincide with the corresponding coefficients of ${y} t_n(y)/2$. This is equivalent to showing that 
\begin{equation}
-
\left({
\sum_{k=0}^{n-2}
\frac{y^k}{k! \,(-n+1)_{k+1}2^{2k}} 
}\right)\bigg/
\left({
\sum_{k=0}^{n-2}
\frac{ {y}^k}{k! \,(-n+1)_{k}2^{2k}} 
}\right)\, =\, \frac{q_{n-1}(y)}{q_{n}(y)}
+ O \left( y^{n-1}\right)
\end{equation}
and this is implied by the stronger (non asymptotic) condition
\begin{multline}
-
\left({
\sum_{k=0}^{n-2}
\frac{y^k}{k! \,(-n+1)_{k+1}2^{2k}} 
}\right)\bigg/
\left({
\sum_{k=0}^{n-2}
\frac{ {y}^k}{k! \,(-n+1)_{k}2^{2k}} 
}\right)\, =\\
\left(
{\sum_{k=0}^{n-2} (-1)^k
\frac{(n-k-2)!}{k! \, 2^{2k}}  y^{k} }\right)
 \bigg/
 \left(
\sum_{k=0}^{n-2} (-1)^k
\frac{(n-k-1)!}{k! \, 2^{2k}}  y^{k} 
 \right)
 \, , 
\end{multline}
For $n=2, 3, \ldots$ this identity is verified directly by using that for $n=1,2, \ldots$ and $k\in 0,1, \ldots$
\begin{equation}
(-n)_k\,=\, (-1)^k \frac{n!}{(n-k)!}\, ,
\end{equation}
and this completes the proof in the case of $\ga=1,2, \ldots$.

We are left with the case $\ga=0$:
\begin{equation}
\frac{x K_{-1}(x)}{K_0(x)}\, =\, 
\frac{x K_{1}(x)}{K_0(x)}\stackrel{\eqref{eq:Kn}}{\sim} \frac 1{\log (1/x)}\,,
\end{equation}
but this is easily improved by going back to \DLMF{10.31}{1} and using
\begin{equation}
 \label{eq:K0}
 K_0(x)\, =\, -\log (x/2) - \gamma + O\left(x^2\right)\, ,
 \end{equation}
 where $\gamma$ is the Euler-Mascheroni constant. This, with \eqref{eq:Kn} for $n=1$, implies that
$x K_{-1}(x)/{K_0(x)}$ is equal to $1/(\log(1/x) +(\log 2- \gamma))+ O(x^2)$.
This completes the proof of Proposition~\ref{th:asympt}.
\qed

 \section{Lyapunov exponent and singularities: the proof of Theorem~\ref{th:MW}}
 \label{sec:MWproof}
 
 The proof of Theorem~\ref{th:MW} is somewhat involved, since the ratio of Bessel functions in the integrand becomes quite singular at one end of the domain of integration. 
 For every fixed $x>0$ the numerator and denominator are entire functions of $\ga$, so the ratio is analytic apart from the zeros of the denominator; 
 these are all on the imaginary axis and they are bounded 
 away from the real axis as long as $x$ is bounded away from zero (\cite[Appendix~A]{Friedlander}, Table~\ref{table:1}).
 But integrating over $x$ down to zero adds contributions that are
 less and less regular as the gap between the origin and the poles of the integrand 
 shrinks when $x$ becomes small. 
  
 \begin{table}[h!]
\centering
  \begin{tabular}{ | c | l | c |  c | c | c |  c | c | c |  }
    \hline
    $x$ & $n$ & $\nu_1$  & $\nu_2$ & $\nu_3$ &$\nu_4$ &$\nu_5$ &$\nu_6$ & $\nu_7$ \\ \hline \hline
    1 & 1 & 2.96 & & & & & & \\ \hline
    1/10 & 3 & 1.14 & 2.04& 2.85 & & & & \\  \hline
    1/100 & 5 & 0.64 & 1.23 & 1.78 & 2.30 & 2.81 & & \\  \hline
    1/1000 & 7 & 0.44 & 0.87& 1.27 & 1.66 & 2.04 & 2.42 & 2.78\\  \hline
  \end{tabular}
\caption{ The zeros of $K_\ga(x)$ are all for $\ga=i\nu$ with $\nu\in \bbR$. Since $K_{i\nu}(x)=K_{-i\nu}(x)$,
we put in the table the set  $\{ \nu \in [0,3]:\, K_{i \nu}(x)=0\}=\{\nu_1, \ldots, \nu_n\}$,
with $\nu_j=\nu_j(x)$ and $n=n(x)$, for four values of $x$. The numerical values are rounded to the closest decimal. }
\label{table:1}
\end{table}
 
One way to obtain a proof is to exploit  
once again  the
  connection formula \eqref{eq:connection}-\eqref{eq:connection-I} which gives an expansion 
 of both numerator and denominator:  but controlling the ratio is of course not straightforward. 
 And in fact  McCoy and Wu approach the problem this way, but keeping only the 
\emph{leading terms} of the series in the connection formula. 
The validity of this procedure is not obvious, since {\it a priori} the resulting correction could be less regular than the leading terms, 
but it is nonetheless helpful to begin by examining this simplified problem
that has the nice feature of leading to a solution in terms of special functions, since we shall see that it correctly illustrates the main features of the proof. 
With this aim in mind we examine
a simplified McCoy-Wu formula -- this corresponds to studying a function
$\tilde \tf$ which is defined, like $\tf$, as the integral over $x\in (0, \eta)$ of a suitable function
$\tilde f_x(\ga)$ (see \eqref{eq:tildetf}) -- and

\begin{enumerate}
\item we perform the integration explicitly and discuss the regularity  of $\ga \mapsto \tilde \tf (\ga)$; 
\item  we then argue how understanding the location of (some of) the poles
of $\tilde f_x$ in the complex plane, and the corresponding residues, gives another way to understand the regularity.
\end{enumerate}  
 
 All of this is done in Section~\ref{sec:heuristics}. Then 
in Section~\ref{sec:MW_proof} we give   the proof of Theorem~\ref{th:MW}, based on a treatment of the poles of $f_x$.
 
\subsection{Heuristic arguments and idea of the proof}
\label{sec:heuristics}
\subsubsection{The simplified McCoy and Wu problem: exact solution}
Much like McCoy and Wu did in  \cite[p.~642]{MW1}, we can consider the  
leading contribution to the integrand in \eqref{eq:toyMW} for
$x\searrow 0$ and $\ga\in \bbC$ tending also to zero. 
This step is an uncontrolled approximation that is obtained by keeping 
\emph{the first terms} in \eqref{eq:connection}-\eqref{eq:connection-I}
and by using $\Gamma(\ga) \sim 1/ \ga$ 
\begin{equation}
\label{eq:gblmr3}
K_{\ga - 1}(x) \, \simeq \, \frac 12 \left( \frac x 2 \right)^{\ga -1}\, ,
\end{equation}
and 
\begin{equation}
K_\ga(x) \, \simeq \, \frac 1{2\ga} \left( \left( \frac x 2 \right)^{- \ga}- \left( \frac x 2 \right)^{\ga} \right)\, ,
\end{equation}
 so
 \begin{equation}
 \label{eq:truecaseleft}
 \frac{x K_{\ga-1}(x) }{K_\ga(x) } \simeq \frac{2 \ga}{ \left( \frac 2 x \right)^{2 \ga}-1} \, =\, \frac{2 \ga}{\exp(2 \ga L(x))-1} \, =: \, \tilde f_x(\alpha) \, ,
 \end{equation}
 with $L(x):= \log(2/x)$. 
 We then have
 \begin{equation}
 \label{eq:tildetf}
 \tilde{\tf} (\ga)\, :=\,\int_0^\eta  \frac{2 \ga}{\exp(2 \ga L(x))-1} \dd x \,.
 \end{equation}
 for $\eta \in (0,2)$, cf.\ \cite[(4.44)]{MW1}.
 From now onward the analysis is rigorous.
 
$\tilde \tf(\ga)$ can then be made explicit  up to an additive contribution that is analytic near the real axis: in fact
if we set
\begin{equation}
\widecheck \tf(\ga)\, :=\, \int_0^2  \left(\frac{2 \ga}{\exp(2 \ga L(x))-1} -\frac 1{L(x)}\right)\dd x\, ,
\end{equation}
we have
\begin{equation}
\tilde \tf (\ga)\, =\, \widecheck \tf(\ga)
 -
  \int_\eta^2  \left(\frac{2 \ga}{\exp(2 \ga L(x))-1} -\frac 1{L(x)}\right)\dd x +
\int_0^\eta \frac {1}{L(x)} \dd x \, .
\end{equation}
But $\int_0^\eta (1/L(x)) \dd x
= \Gamma(0, \log(2/\eta))$ is just a constant
and the second addend in the right-hand side is (real) analytic in $\ga$. In fact
 the integrand in  is meromorphic with poles on the imaginary axis, precisely for $\ga$ equal to any integer multiple of $\pm  \pi /L(x)$. Therefore the second addend in the right-hand side is  analytic for $\ga$
in $\bbC \setminus \{iy:\, \vert y\vert \ge \pi /L(\eta)\}$. We can therefore focus on  
$ \widecheck\tf (\ga)$ and we start by observing that for $\ga \in \bbR$ 
\begin{equation}
\label{eq:symF}
 \widecheck\tf (-\ga)\, =\,     \widecheck\tf (\ga)+ 4\ga\,,
 \end{equation}
 as a result of a straightforward manipulation using 
 $1/(1-e^{-2\ga L})= 1/(e^{2\ga L} -1)+1$.
 
 \medskip
 
 \begin{rem}
 \label{rem:sym}
 We note that the same argument can be applied directly to $\tilde \tf_1 (\ga)$ obtaining
 \begin{equation}
\label{eq:symF2}
 \tilde\tf (-\ga)\, =\,     \tilde\tf (\ga)+ 2 \eta \ga\,.
 \end{equation}
 But in fact we have also 
 \begin{equation}
\label{eq:symF3}
 \tf (-\ga)\, =\,     \tf (\ga)+ 2 \eta \ga\,, 
 \end{equation}
 which follows by applying  \eqref{eq:Bessel-id} and  $K_\gb(x)=K_{-\gb}(x)$. 
 In other words all these functions -- $\widecheck \tf$, $\tilde \tf$ and $\tf$  -- are even, up to a linear term.
 \end{rem}
 
 \medskip

 Thanks to \eqref{eq:symF} we  can focus on the case  $\Re \ga>0$
 and compute (change the variable and move the contour of integration in the complex plane)
\begin{multline}
\label{eq:habv}
\widecheck\tf (\ga)\, =\, 2 \int_0^\infty \left(\frac1{e^v -1}- \frac 1v
 \right) e^{-v/(2\ga)}\dd v \,=\, 
 -2 \int_0^\infty \left( \frac 1v - \frac1{1-e^{-v} }
 \right) e^{-v/(2\ga)}\dd v -4 \ga 
 \\
=\, -4 \ga -2 \log (2 \ga) -2 \psi(1/(2\ga))\,
\stackrel{\alpha \searrow 0}\sim \, -2 \ga - \sum_{j=1}^\infty \frac{B_{2j}}j (2 \ga)^{2j}\,  , 
\end{multline}
where $\psi(z)= \Gamma'(z)/\Gamma(z)$ and the notion of $\sim$ is extended here
and it has to be interpreted in the sense of asymptotic series (i.e., that the difference of left-hand side and of the series in  the right-hand side
truncated to $j=n$ is $o(\ga^{2n})$)
: in the third step we have applied \DLMF{5.9}{13}  
and the asymptotic relation  \DLMF{5.11}{2} 
$\psi(z)\stackrel{z \to +\infty}\sim \log z -1/(2z) + \sum_{j=1}^\infty (B_{2j}/(2j))z^{2j}$: the rational numbers $B_{2n}$ are the Bernoulli numbers \DLMFs{24.2}. By \DLMF{24.9}{8}	
\begin{equation}
\label{eq:Btof}
B_{2n} \stackrel{n \to \infty}\sim 2 (-1)^{n+1} \frac{(2n)!}{(2\pi)^{2n}}\, ,
\end{equation}
so the series has radius of convergence zero. Note that
 \eqref{eq:symF} readily implies that
$\widecheck\tf (\ga)\sim -2 \ga - \sum_{j=1}^\infty (B_{2j}/j) (2 \ga)^{2j}$ 
holds also for $\ga \nearrow 0$ and not only for $\ga \searrow 0$. 
 Let us reorder what we have done (and more) into a statement:

\medskip

\begin{proposition}
\label{th:MWsimpler}
$\widecheck \tf(\cdot)$ is defined and analytic in the complex plane without the imaginary axis
and it can be extended by continuity on the whole real axis  by setting $\widecheck \tf(0)=0$. Then
$\widecheck \tf$ restricted to $\bbR$ is $C^\infty$ in the origin. On the other hand, $\widecheck \tf$ cannot be 
continued as an analytic function at any point on the imaginary axis.
\end{proposition}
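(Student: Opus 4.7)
\medskip

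\noindent\emph{Proof plan.}

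The plan is to anchor everything to the closed-form representation $F(\ga):=-4\ga-2\log(2\ga)-2\psi(1/(2\ga))$ derived in \eqref{eq:habv} (with the principal branch of $\log$), which coincides with $\widecheck{\tf}(\ga)$ on $\{\Re\ga>0\}$. Since $\psi$ is meromorphic with poles only at non-positive integers, and $1/(2\ga)$ is never such a point for $\Re\ga>0$, the function $F$ is analytic there. The algebraic manipulation used for \eqref{eq:symF} extends the symmetry $\widecheck{\tf}(-\ga)=\widecheck{\tf}(\ga)+4\ga$ to every $\ga$ with $\Re\ga\ne 0$, yielding $\widecheck{\tf}(\ga)=F(-\ga)-4\ga$ on $\{\Re\ga<0\}$, analytic since $-\ga$ lies in the right half-plane. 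This establishes analyticity on $\bbC\setminus i\bbR$.

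For continuity at $0$ along the real axis and the $C^\infty$ property there, I would invoke the standard asymptotic expansion $\psi(z)\sim \log z-1/(2z)+\sum_{j\ge 1}B_{2j}/(2j)\cdot z^{-2j}$, valid in any sector $|\arg z|\le \pi-\delta$, together with its term-by-term differentiability. Substituting $z=1/(2\ga)$ as in \eqref{eq:habv} shows that for every $k\ge 0$ the derivative $F^{(k)}$ admits at $0^+$ the asymptotic expansion obtained by termwise differentiation of $-2\ga-\sum_{j\ge 1}(B_{2j}/j)(2\ga)^{2j}$; hence the one-sided derivatives $\widecheck{\tf}^{(k)}(0^+)$ exist and equal $k!$ times the corresponding coefficient. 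Combined with $\widecheck{\tf}(\ga)=F(-\ga)-4\ga$ for real $\ga<0$, the left and right derivatives at $0$ agree: odd derivatives of order $\ge 3$ vanish from both sides since the expansion contains no such odd powers, and the $-4\ga$ term compensates the sign discrepancy on the linear coefficient. This yields the $C^\infty$ property at $0$ together with $\lim_{\ga\to 0}\widecheck{\tf}(\ga)=0$.

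The heart of the proof is the non-continuability across $i\bbR$, which I would prove by contradiction. Suppose some function $g$, analytic on a connected open neighborhood $U$ of $\ga_0=i\beta_0$, agrees with $\widecheck{\tf}$ on $U\setminus i\bbR$. By uniqueness of analytic continuation, on each connected component of $U\cap\{\Re\ga\ne 0\}$ the function $g$ must simultaneously coincide with the analytic continuation of $F$ from the right half-plane and with that of $F(-\cdot)-4\cdot$ from the left. Consequently, the function
\begin{equation*}
h(\ga)\,:=\, F(\ga)-F(-\ga)+4\ga
\end{equation*}
vanishes identically on $U$, and hence, by the identity theorem, on every simply connected domain reachable from $U$ by analytic continuation of $h$ along a path in $\bbC\setminus(\{0\}\cup\{\pm 1/(2n):n\in\bbN\})$. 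Using the reflection formula $\psi(-z)-\psi(z)=1/z+\pi\cot(\pi z)$ with $z=1/(2\ga)$, a direct computation gives
\begin{equation*}
h(\ga) \,=\, 2\bigl[\log(-2\ga)-\log(2\ga)\bigr] + 2\pi\cot\!\bigl(\pi/(2\ga)\bigr)\,.
\end{equation*}
The log-difference remains bounded on any simply connected subdomain avoiding the two logarithmic branch cuts, whereas $\cot(\pi/(2\ga))$ has a genuine simple pole at every $\ga=1/(2k)$, $k\in\bbZ\setminus\{0\}$. Hence $h\not\equiv 0$, the desired contradiction.

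The main obstacle is to produce, for each $\ga_0\in i\bbR$, an explicit path of analytic continuation of $h$ from $U$ to a neighborhood of some pole $\ga=1/(2k)$ of $\cot(\pi/(2\ga))$, avoiding both the singularities $\pm 1/(2n)$ and the branch cuts of the two logarithms. For $\beta_0\ne 0$, routing through the upper or lower half-plane according to the sign of $\beta_0$ and choosing $k$ large enough that $1/(2k)$ is close enough to $0$ is technically routine but requires care in the bookkeeping; the key point is that the log-ambiguities acquired along the way are bounded (locally integer multiples of $i\pi$) and therefore cannot cancel the unbounded contribution of the cotangent near its pole. For the distinguished point $\ga_0=0$ a shortcut is available: by the $C^\infty$ assertion just proved, the Taylor coefficients of $\widecheck{\tf}$ at the origin are the ones displayed in \eqref{eq:habv}, and the Bernoulli asymptotics $|B_{2n}|\sim 2(2n)!/(2\pi)^{2n}$ force the formal Taylor series to have radius of convergence zero, directly ruling out analyticity at $0$.
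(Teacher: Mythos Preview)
Your proposal is correct, but it takes a different route from the paper on both key points.

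For the $C^\infty$ property at the origin, the paper works directly with the integral representation: writing $\tilde\tf(y/2)=2\int_c^\infty q(yv)\,e^{-v}v^{-1}\,dv$ with $q(u)=u/(e^u-1)$, it shows by elementary estimates that every $q^{(n)}$ is bounded on $\bbR$, so dominated convergence lets one differentiate under the integral. This is self-contained and avoids any appeal to special-function asymptotics. Your approach via termwise differentiation of the sectorial asymptotic expansion of $\psi$ is valid (this is a standard consequence of the Cauchy integral formula for holomorphic functions with asymptotic expansions in sectors), but it trades an elementary argument for a less elementary black box.

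For non-continuability across $i\bbR$, the paper argues more directly: if $\widecheck\tf$ extended analytically through some $ia$ with $a>0$, then the explicit formula $F(\ga)=-4\ga-2\log(2\ga)-2\psi(1/(2\ga))$ (with a cut for $\log$ chosen away from the upper half-plane) would give another analytic continuation of $\tf^+$ into the left half-plane; by uniqueness this must agree with $\widecheck\tf$ there, yet $F$ has genuine poles at $\ga=-1/(2n)$ from $\psi$ while $\widecheck\tf$ does not. Your detour through $h(\ga)=F(\ga)-F(-\ga)+4\ga$ and the reflection formula is correct but over-engineered: once you have $h\equiv 0$ on $U$ and $h(\ga)=2[\log(-2\ga)-\log(2\ga)]+2\pi\cot(\pi/(2\ga))$, note that the log-difference is \emph{locally constant} (its derivative vanishes), so on the connected set $U$ you obtain $\cot(\pi/(2\ga))=\text{const}$, which is already impossible since $\cot(\pi/(2\cdot))$ is non-constant. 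No path-tracking to a pole is needed. What your computation buys is an explicit identification of the obstruction via the cotangent, which is aesthetically pleasing; what the paper's version buys is brevity and no monodromy bookkeeping at all.
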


\medskip

\noindent
\emph{Proof.}
Let us first show that
 $\widecheck \tf$ is $C^\infty$ at the origin. For this it is more practical  (and equivalent, since $\tilde \tf - 
 \widecheck \tf$  is real analytic) to go back to 
 $\tilde \tf$ and observe that with $y=2\ga \in \bbR \setminus\{0\}$
 and the change of variable $v=\log (2/x)$ we obtain
 \begin{equation}
 f(y)\,:=\, \tilde \tf (y/2)\,=\,   2\int_{c}^\infty q(yv) \frac{\exp(-v)}v \dd v\, ,
 \end{equation} 
 with  $c=\log (2/\eta)\in (0, \infty)$  and 
 $q(u) = \frac{u}{\exp(u)-1}$.
 It is straightforward to verify that $q(u)\in (0,1]$ for $u\ge 0$ and that $q(u) \in [1, u+1]$ for $u\le 0$
 and dominated convergence implies that $f$ is $C^0$ (on the whole $\bbR$, but of course our attention 
 is at the origin, outside we already know that $f$ is  real analytic).
 To show differentiability it suffices to observe that formally 
  \begin{equation}
  \label{eq:formal}
 f^{(n)}(y)\,=\,    \int_{c}^\infty q^{(n)}(yv) v^{n-1}{\exp(-v)} \dd v\, ,
 \end{equation} 
 where  $f^{(n)}(y)= (\dd /\dd y)^n f(y)$. But  it is straightforward to verify that $q^{(1)}(u)$ is monotonically decreasing from 
 $1$ to zero and $q^{(n)}(u)$, $n=2$ or larger, vanishes at $\pm \infty$. Hence $\sup _{u \in \bbR}\vert q^{(n)}(u)\vert
 < \infty$ and this implies that 
 \eqref {eq:formal} is not just formal: $f \in C^\infty$ and its derivatives are given by \eqref {eq:formal}.

 On the other hand, $\widecheck \tf$ is  not analytic in zero since the radius of convergence of the series is zero: 
we record that if the Taylor series for $\widecheck\tf (\ga)$ is
$\sum_{n} c_n \ga ^n$ then  
\begin{equation}
\label{eq:tcwpe}
c_{2n}\, \stackrel{n \to \infty}\sim\, 4 (-1)^{n+1} \frac{(2n-1)!}{(\pi)^{2n}}\, ,
\end{equation}
while $c_{2n+1}=0$ for $n=1,2, \ldots$: this follows directly from \eqref{eq:habv} and \eqref{eq:Btof}.

To conclude, we argue that $\widecheck \tf$ cannot be extended as an analytic function through the imaginary axis. 
Call $\tf^+: \{z \in \bbC: \, \Re( z) >0\} \longrightarrow \bbC$ the function that coincides  with $\widecheck \tf$ in its domain of definition (the right half-plane): keep in mind that $\widecheck \tf$ is defined also in the half plave with negative real part and it is analytic there.  
Suppose now that there exist $a\in \bbR$ and $\epsilon >0$ such that we can define $\widecheck \tf( \ga)$ for $\ga \in \{it: \, \vert t-a \vert <\epsilon \}$, in such a way that $\widecheck \tf$ is analytic in a neighborhood of $ia$. We can and do assume that $a>0$ as well as $a-\gep>0$ by symmetry and because we already know that $\widecheck \tf$
is not analytic at the origin. 
This extends  analytically $\tf^+$ too. 
But \eqref{eq:habv} yields $\tf^+(\ga) = -4 \ga - \log 4-2 \log  \ga -2 \psi(1/(2\ga))$ in the right-half plane and this expression, containing 
the function  $\psi$ (that is meromorphic on the whole $\bbC$ and has poles on the negative real axis \DLMFs{5.2}) and a logarithm which is defined and analytic 
on the whole complex plane except for a cut (that can be for example chosen to be $\{z\in \bbC: \Im z =\Re z \le 0\}$).
But $\tf^+$ must coincide with $\widecheck \tf$ in the whole region where $\tf^+$ is extended: this region includes the
negative semi-axis where  $\tf^+$ has poles and $\widecheck \tf$ is analytic. This is not possible,
 hence $\widecheck \tf$ cannot be extended 
analytically at any point on the imaginary axis.
\qed

\subsubsection{A different viewpoint on the McCoy and Wu simplified problem}
We restart from \eqref{eq:tildetf}, but we take a different approach: we avoid exact integration. 
As already noted,
the integrand in that expression for $\tilde \tf (\ga)$, for fixed $x>0$,
is a meromorphic function of $\ga$. The poles 
are on the imaginary axis and $0$ is not a pole because the singularity is removable:
the poles are $n \pi  i / L(x)$ for $n \in \bbZ \setminus \{0\}$, where recall that we write $L(x) := \log (2/x)$ for brevity.
The  integrand in the expression for $ \tf_1 (\ga)$ is therefore analytic in a ball of radius   $\pi / L(x)$ around zero
and the most singular part in the residue expression for the integrand comes from the two closest poles, which are 
$\pm \pi i/ L(x)$. If we compute the residues of these two poles for the integrand  we find $\pm 2\pi i /(L(x))^2$ and therefore
the contribution to the integrand of these two poles is 
\begin{equation}
\label{eq:pgax}
\begin{split}
P_\ga(x)\, :=\, 
\frac{2\pi i}{L(x)^2 \left(2 \ga- \frac{2\pi i}{L(x)}\right)}- \frac{2\pi i}{L(x)^2 \left(2 \ga+ \frac{2\pi i}{L(x)}\right)}
\, &=\,
- \frac{2}{L(x)}\frac 1{\left(\frac{\ga L(x)}{\pi}\right)^2+1}
\\
& =\, -2 \sum_{n=0}^\infty (-1)^n \ga^{2n} L(x)^{2n-1} \pi^{-2n}\, ,
\end{split}
\end{equation}
where the last equality holds only for $\vert \ga L(x)/ \pi \vert < 1$, but it is in any case useful to identify 
all the derivatives of $P_\ga (x)$ at $\ga=0$. It seems reasonable to believe that the singularity at the origin
of $\tilde \tf$ is induced by the poles of the integrand and that the two poles that are closest to the origin
give the leading part of the singularity. If this is the case $\ga \mapsto \int_0^\eta p_\ga(x) \dd x$ should capture
the leading behavior of the singularity of $\tilde \tf$. This is confirmed or at least highly suggested 
by observing that
\begin{equation}
\label{eq:abvhr}
\int_0^\eta L(x)^{2n-1} \dd x \, =\, 2 \int_{\log (2/ \eta)}^\infty y^{2n-1} e^{-y} \dd y \, =\, 2 \Gamma(2n, \log (2/\eta))
\stackrel{n \to \infty}\sim  2 \Gamma(2n)\, =\, 2 (2n-1)!\, ,
\end{equation}
so if we proceed at a completely 
 formal level, by  integrating term by term the series in the second line of   \eqref{eq:pgax} and using \eqref{eq:abvhr},
 we directly recover 
 \eqref{eq:tcwpe}!

\subsubsection{Approaching the true problem}

Going back to the true problem, that is $\tf$, for which the integrand is  the left-had side of \eqref{eq:truecaseleft}, we have to identify the zeros 
of $K_\ga(x)$: this problem has been studied and, as possibly expected, the heuristics coming from studying the poles 
of  the right-hand side of \eqref{eq:truecaseleft} is qualitatively correct, so the poles we have to study, or the zeros of  
$K_\ga(x)$, are on the imaginary axis. Moreover they accumulate on the origin when $x\searrow 0$. In fact, to leading order 
(for $n$ fixed and $x \searrow 0$) they are still in $\pm n \pi i  /\log (2/x)$. But, more precisely,
they are in $\pm n \pi i  /(\log (2/x)- \gamma+ o(1))$ as $x \searrow 0$ ($\gamma$ the Euler constant, see below). 

\smallskip

\begin{rem}
\label{rem:shift}
The subleading correction $-\gamma$ to $\log (2/x)$ for the location of the zeros can be inserted in the 
heuristic argument that we presented just by being keeping one more term in the expansion of the $\Gamma$ function in the first steps 
\eqref{eq:gblmr3}-\eqref{eq:truecaseleft}.
The change amounts simply 
 to work with $L(x)= \log (2e^{-\gamma}/ x)$, but this small offset in $L(x)$ leads to the multiplicative 
$e^{-\gamma}$ constant in the final asymptotic results, see Theorem~\ref{th:MW}. 
\end{rem}
\smallskip

Apart for the quantitative issue of Remark~\ref{rem:shift}, the proof
of Theorem~\ref{th:MW} requires taking care of two main issues:

\smallskip

\begin{enumerate}
\item 
Even admitting that the two closest poles give the main contribution, we have to set-up a rigorous procedure
corresponding to the formal argument that we developed using \eqref{eq:pgax} and \eqref{eq:abvhr}.
This procedure requires  controlling not only location of the two poles, but also the residues that
this time are given by ratio of series coming from \eqref{eq:K_expansion1}.
\item One needs to control the effect of the poles that are farther from the origin: note that these terms will in any case give contributions
that  generate divergent series, but this time the distance of the poles from the origin is at least (about) twice the distance of the two $n=1$
poles. Hence they will contain an exponential factor that is at least twice smaller. 
\end{enumerate}

\subsection{The proof} \label{sec:MW_proof}

The  proof of Theorem~\ref{th:MW}, that starts here, follows a main line separated by a number of lemmas and corollaries.   
As a preliminary, we set out precise versions of some of the statements made above about the regularity of 
\begin{equation}
	f_x: \alpha \mapsto x\frac{K_{1-\alpha}(x)}{K_\alpha(x)},
\end{equation}
for $x > 0$.
$K_\alpha(x)$ is entire as a function of $\alpha$ for all $x \neq 0$ \cite[section~10.25(ii)]{DLMF},
so $f_x$ is a ratio of two entire functions, and thus it is analytic except at the zeros of $\alpha \mapsto K_\alpha (x)$.  For $x>0$, these zeros are all pure imaginary and located in the region $\left|\alpha\right| > x$~\cite[Appendix~A]{Friedlander}.
Note that
this characterization
is already sufficient to show that $\int_X^\eta f_x (\alpha) d x$ is analytic on $\bbC \setminus \pm i [X,\infty)$ for any $0 <X < \eta < \infty$, so it suffices to prove the statements in Theorem~\ref{th:MW} with $\eta$ replaced by some sufficiently small $X$ in the integral defining $\tf$. 
Furthermore, $f_x$ is infinitely differentiable on the real axis for any $x > 0$; in the next few lemmata we will bound its derivatives in such a way as to show that this is also true of the integral $\tf$.

To begin with, note that combining \eqref{eq:connection} and \eqref{eq:connection-I} gives
\begin{equation}
	K_\alpha(x)
	=
	\frac{\pi}{2 \sin \pi \alpha} \sum_{k=0}^\infty \frac{\left( x^2 / 4 \right)^k}{k!}\left[ 
		\frac{\left( x/2 \right)^{-\alpha}}{\Gamma(-\alpha+k+1)} 
		- \frac{\left( x/2 \right)^\alpha}{\Gamma(\alpha+k+1)} 
	\right],
	\label{eq:K_expansion1}
\end{equation}
for $\alpha \in \bbC \setminus \bbZ$ and $x > 0$.

\begin{lemma}
	There exist $A,C_1,X > 0$ such that for all $x \in [0,X]$, $\alpha \in \bbC$ such that $\left|\alpha\right|\le \tfrac12$ and $\left|\Im \alpha\right| \log 1/x \le A$,
	\begin{equation}
		\left|\frac{x^\alpha}{\Gamma(1+\alpha)} - \frac{x^{-\alpha}}{\Gamma(1-\alpha)}\right|
		\ge C_1 \left|\alpha\right|x^{-|\Re \alpha|}.
	\end{equation}
	\label{lem:leading_diff}
\end{lemma}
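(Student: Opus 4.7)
The plan is to factor and reduce to a scalar estimate. Writing $R(\alpha):=\Gamma(1-\alpha)/\Gamma(1+\alpha)$, one has
\[
\frac{x^{\alpha}}{\Gamma(1+\alpha)}-\frac{x^{-\alpha}}{\Gamma(1-\alpha)}
\;=\;-\,\frac{x^{-\alpha}}{\Gamma(1-\alpha)}\bigl(1-R(\alpha)\,x^{2\alpha}\bigr),
\]
and since $|\Gamma(1-\alpha)|$ is bounded above and below by positive constants on $|\alpha|\le 1/2$, and $|x^{-\alpha}|=x^{-\Re\alpha}$, proving the lemma is equivalent to the scalar bound $|1-R(\alpha)x^{2\alpha}|\ge c\,|\alpha|$. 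The expression in the statement changes sign under $\alpha\mapsto -\alpha$ while its modulus and the hypotheses are invariant, so it suffices to treat $\Re\alpha\ge 0$.

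With $L:=\log(1/x)\ge\log(1/X)$, I would write $R(\alpha)\,x^{2\alpha}=e^{-\beta}$ where $\beta:=2L\alpha-\log R(\alpha)$, the branch of $\log R$ being the one that vanishes at $0$ (this is well defined since $R$ is analytic and non-vanishing in a neighborhood of the disk $|\alpha|\le 1/2$). A direct computation gives $(\log R)(0)=0$ and $(\log R)'(0)=2\gamma$, so
\[
\beta\;=\;2(L-\gamma)\alpha+E(\alpha),\qquad |E(\alpha)|\le C''|\alpha|^{2}\quad\text{for all }|\alpha|\le 1/2.
\]
The condition $|\Im\alpha|\,L\le A$ then forces $|\Im\beta|\le 2A+C''/4$. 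Choosing $A$ (and afterwards $X$) so that this bound is strictly less than $2\pi$ keeps $\beta$ at a positive distance from the spurious zeros $2\pi i k$, $k\ne 0$, of the function $1-e^{-\beta}$.

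The final step is a dichotomy on $|\Re\beta|$. If $|\Re\beta|\ge 1$, then $|1-e^{-\beta}|\ge\min(1-e^{-1},\,e-1)$, an absolute positive constant, which trivially dominates $|\alpha|\le 1/2$. If $|\Re\beta|<1$, then $\beta$ lies in a compact rectangle on which the only zero of $1-e^{-\beta}$ is $\beta=0$; continuity of $\beta\mapsto|1-e^{-\beta}|/|\beta|$ (whose limit at $\beta=0$ is $1$) yields a strictly positive infimum $c_{0}>0$ on this rectangle. Combined with $|\beta|\ge 2(L-\gamma)|\alpha|-C''|\alpha|^{2}\ge|\alpha|$ — which holds as soon as $X$ is small enough that $L\ge\gamma+\tfrac12+C''/4$ — this gives $|1-e^{-\beta}|\ge c_{0}|\alpha|$ and hence the claim with $C_{1}=\min(1,c_{0})\big/\sup_{|\alpha|\le 1/2}|\Gamma(1-\alpha)|$.

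The main technical point is the coupled choice of $A$ and $X$: $A$ must be small enough that $|\Im\beta|<2\pi$ so that the removable zero at the origin is the only relevant one, and $X$ must be small enough that the quadratic remainder $E(\alpha)$ is dominated by the linear term $2(L-\gamma)\alpha$. Both conditions are easily arranged because the only constants in play ($C''$ and $\gamma$) are absolute, so one simply fixes $A$ and then shrinks $X$.
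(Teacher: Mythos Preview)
Your proof is correct and takes a genuinely different route from the paper. The paper writes the difference as $g_x(\alpha)-g_x(-\alpha)=\alpha\int_{-1}^1 g_x'(y\alpha)\,\dd y$ with $g_x(\alpha)=x^\alpha/\Gamma(1+\alpha)$, and then bounds $\left|\int_{-1}^1 \Re g_x'(y\alpha)\,\dd y\right|$ from below by direct estimation of $g_x'$: the condition $|\Im\alpha|\log(1/x)\le A$ is used to keep $\cos(\Im\alpha\log x)$ bounded away from zero so that the real part of $x^{y\alpha}\log x/\Gamma(1+y\alpha)$ has a definite sign. Your argument instead factors out $x^{-\alpha}/\Gamma(1-\alpha)$ and reduces to bounding $|1-e^{-\beta}|$ with $\beta=2(L-\gamma)\alpha+O(\alpha^2)$; the same hypothesis now has the cleaner interpretation of keeping $\beta$ away from the nontrivial zeros $2\pi i\bbZ\setminus\{0\}$ of $1-e^{-\beta}$, after which a compactness argument finishes. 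Your approach makes the role of the constraint on $\Im\alpha$ conceptually transparent and avoids the somewhat delicate real-part computations; the paper's integral representation, on the other hand, yields the slightly more explicit lower bound $c_1c_2(x^{-\Re\alpha}-x^{\Re\alpha})/\Re\alpha$ before the final simplification.
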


\begin{proof}
	Letting $g_x(\alpha) := x^\alpha/\Gamma(1+\alpha)$,
	\begin{equation}
		\frac{x^\alpha}{\Gamma(1+\alpha)} - \frac{x^{-\alpha}}{\Gamma(1-\alpha)}
		=
		g_x(\alpha)-g_x(-\alpha)
		=
		\alpha
		\int_{-1}^1 g'_x(y\alpha) \dd y \, ,
	\end{equation}
	and so
	\begin{equation}
		\left|\frac{x^\alpha}{\Gamma(1+\alpha)} - \frac{x^{-\alpha}}{\Gamma(1-\alpha)}\right|
		\ge 
		|\alpha|
		\left|
		\int_{-1}^1 \Re g'_x(y\alpha) \dd y
		\right|\, 
		.
		\label{eq:leading_real_integral}
	\end{equation}
	Note that
	\begin{equation}
		g'_x(\alpha)
		=
		\left[ \log x - \psi(1+\alpha) \right]
		\frac{x^\alpha}{\Gamma(1+\alpha)}
		,
	\end{equation}
	where $\psi(z) := \Gamma'(z)/\Gamma(z)$ is the Psi function \DLMFs{5.2}.
	$\psi(1+\alpha)/\Gamma(1+\alpha)$ is analytic for $|\alpha| < 1$, so choosing $X$ small enough we can obtain $\left| \psi(1+\alpha)/\Gamma(1+\alpha) \right| \le (1-c_1) $ for any fixed $c_1 \in (0,1)$, whence
	\begin{equation}
		\left|\Re g'_x(\alpha) - \Re \frac{x^\alpha \log x }{\Gamma(1+\alpha)}\right|
		\le
		\left|g'_x(\alpha) - \frac{x^\alpha \log x }{\Gamma(1+\alpha)}\right|
		=
		\left| \psi(1+\alpha) \frac{x^\alpha}{\Gamma(1+\alpha)}\right|
		\le
		(1-c_1) 
		x^{\Re \alpha} \log 1/x
		,
	\end{equation}
	and noting that $1/\Gamma(1+\alpha)$ is entire and takes positive real values for $\alpha \in [-1/2,2/2]$, by choosing $A$ small enough we obtain
	\begin{multline}
		- \Re \frac{ x^\alpha \log x}{\Gamma(1+\alpha)}
		=
		x^{\Re \alpha} \log 1/x \left[ \cos \left( \Im \alpha \log x \right) \Re \frac{1}{\Gamma(1+\alpha)} 
			- \sin \left( \Im \alpha \log x \right) \Im \frac{1}{\Gamma(1+\alpha)} \right]
			\\ \ge c_2 x^{\Re \alpha} \log 1/x \ge 0\, ,
	\end{multline}
	for some $c_2>0$.
	Combining the above observations, we see that
	\begin{equation}
		\left|
		\int_{-1}^1 \Re g'_x(y\alpha) \dd y
		\right|
		\ge 
		c_1 c_2
		\log 1/x
		\int_{-1}^1 x^{y \Re \alpha } \dd y
		=
		c_1 c_2
		\frac{x^{-\Re \alpha} - x^{\Re \alpha}}{\Re \alpha}.
		\label{eq:leading1}
	\end{equation}
	The last expression is an even function of $\Re \alpha$, so without loss of generality we can consider $\Re \alpha = \rho \in [0,1/2]$.  Noting that 
	$\rho \mapsto 1 - x^{2\rho} = 1- \exp\left(- 2 \rho \log x \right)$ is concave for all $x > 0$ and checking the boundary cases, it is easy to see that $1 - x^{2\rho} \ge \rho$ for all $x \in (0,1/2)$, $\rho \in [0,1/2]$.
	Using this together with \eqref{eq:leading_real_integral}  and \eqref{eq:leading1}, we obtain
	\begin{equation}
		\left|\frac{x^\alpha}{\Gamma(1+\alpha)} - \frac{x^{-\alpha}}{\Gamma(1-\alpha)}\right|
		\ge 
		c_1 c_2
		|\alpha| x^{-|\Re \alpha|}.
	\end{equation}
This completes the proof of Lemma~\ref{lem:leading_diff}
\end{proof}

\begin{lemma}
	There exist some $X \in (0,1)$, $A > 0$, and $C_2 > 0$ such that 
	\begin{equation}
	|f_\alpha (x)|\, \le\, C_2\, ,
	\end{equation}
for all $x \in (0,X]$, $|\alpha| \le 1/2$, and $|\Im \alpha| \le A/\log(2/x)$.
	\label{lem:wide_integrand_bound}
\end{lemma}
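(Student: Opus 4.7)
\medskip
\noindent\emph{Proof plan.} The plan is to reduce the complex-$\alpha$ estimate to a real-index bound via the integral representation \eqref{eq:Bessel-def}, $K_\nu(x) = \int_0^\infty \cosh(\nu u)\exp(-x\cosh u)\dd u$, writing $\alpha = \rho + i\nu$ throughout. The numerator is handled by an elementary trigonometric inequality: since $|\cosh((1-\alpha)u)|^2 = \cosh^2((1-\rho)u)\cos^2(\nu u) + \sinh^2((1-\rho)u)\sin^2(\nu u) \le \cosh^2((1-\rho)u)$, I obtain $|K_{1-\alpha}(x)| \le K_{1-\rho}(x)$.

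The crux is a lower bound on $|K_\alpha(x)|$ coming from its real part. Since
\begin{equation*}
\Re K_\alpha(x) \, = \, \int_0^\infty \cosh(\rho u)\cos(\nu u)e^{-x\cosh u}\dd u\, ,
\end{equation*}
I split the integral at $L:=\log(2/x)$. On $[0,L]$ the hypothesis $|\nu|L\le A$ yields $\cos(\nu u)\ge\cos A$, while on $[L,\infty)$ I use only $\cos(\nu u)\ge -1$, giving
\begin{equation*}
\Re K_\alpha(x) \, \ge \, \cos A \cdot K_{|\rho|}(x) - (1+\cos A)\, T_{|\rho|}(x)\, , \qquad T_{|\rho|}(x):=\int_L^\infty \cosh(\rho u)e^{-x\cosh u}\dd u\, .
\end{equation*}
The substitution $t=(x/2)e^u$ combined with $\cosh u\ge e^u/2$ yields $T_{|\rho|}(x)\le (2/x)^{|\rho|}\Gamma(|\rho|,1)$, and comparing with the known small-$x$ asymptotics $K_{|\rho|}(x)\sim\tfrac12 \Gamma(|\rho|)(x/2)^{-|\rho|}$ (for $|\rho|>0$) and $K_0(x)\sim\log(2/x)$ shows that the ratio $T_{|\rho|}(x)/K_{|\rho|}(x)$ is uniformly bounded above by some $\eta^*<1$ for $|\rho|\in[0,1/2]$ and $x$ small enough (it vanishes like $1/\log(2/x)$ at $|\rho|=0$ and limits to $2\Gamma(|\rho|,1)/\Gamma(|\rho|)\le 2\,\mathrm{erfc}(1)$ for $|\rho|\in(0,1/2]$). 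Choosing $A$ small enough that $\cos A > \eta^*/(1-\eta^*)$ then gives $|K_\alpha(x)|\ge c_A K_{|\rho|}(x)$ with $c_A>0$.

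Combining, $|f_x(\alpha)|\le c_A^{-1}\,xK_{1-\rho}(x)/K_{|\rho|}(x)$, and uniform boundedness of this real-index ratio is verified case by case: for $\rho\in(0,1/2]$, standard Bessel asymptotics give $xK_{1-\rho}(x)/K_\rho(x)\sim 2\Gamma(1-\rho)\Gamma(\rho)^{-1}(x/2)^{2\rho}$, tending to 0; at $\rho=0$ one has $xK_1(x)/K_0(x)\sim 1/\log(2/x)\to 0$; and for $\rho\in[-1/2,0)$ the Bessel recurrence \eqref{eq:Bessel-id} gives $xK_{1+|\rho|}(x)=2|\rho| K_{|\rho|}(x)+xK_{1-|\rho|}(x)$, so the ratio is $2|\rho|+f_x(|\rho|)$, bounded by the previous cases. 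The main obstacle is establishing the uniform bound $\eta^*<1$ across the full range $|\rho|\in[0,1/2]$, which requires handling separately the regime $|\rho|$ near $0$ (where both $K_{|\rho|}$ and $T_{|\rho|}$ behave differently than for $|\rho|$ bounded away from zero, but the main part $K_{|\rho|}$ dominates) and the regime $|\rho|$ bounded away from $0$ (where the bound follows from the values of the regularized incomplete Gamma function).
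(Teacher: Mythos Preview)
Your route via the integral representation is correct and genuinely different from the paper's. The paper expands both $K_{1-\alpha}$ and $K_\alpha$ through the connection formula \eqref{eq:K_expansion1} and bounds the series term by term: the crucial lower bound on $|K_\alpha|$ comes from isolating the leading difference $(x/2)^{-\alpha}/\Gamma(1-\alpha)-(x/2)^{\alpha}/\Gamma(1+\alpha)$, handled by the preceding Lemma~\ref{lem:leading_diff}, and then showing the remaining series terms are subdominant. That approach is algebraically heavier but yields explicit $|\alpha|$- and $(x/2)^{|\Re\alpha|}$-dependence in the intermediate bounds \eqref{eq:widelem_numerator_bound}--\eqref{eq:widelem2}, which meshes with the series machinery used throughout Section~\ref{sec:MW_proof}. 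Your reduction to the real-index ratio $xK_{1-\rho}(x)/K_{|\rho|}(x)$ is more direct; that ratio is easily seen to be uniformly bounded, e.g.\ from $xK_{1-\rho}(x)\le 2(x/2)^\rho\,\Gamma(1-\rho)\le 2\sqrt{\pi}$ for $\rho\in[0,1/2]$ together with $K_{|\rho|}(x)\ge K_0(x)\ge K_0(X)$, and the negative-$\rho$ case via the recurrence as you say.

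There is one point to tighten. Your inequality $\Re K_\alpha \ge (\cos A)\,K_{|\rho|} - (1+\cos A)\,T_{|\rho|}$ gives a usable lower bound only when $\eta^*:=\sup T_{|\rho|}/K_{|\rho|} < 1/2$, not merely $<1$, since you then need $\cos A > \eta^*/(1-\eta^*)$ with $\cos A\le 1$. Your estimate $2\Gamma(|\rho|,1)/\Gamma(|\rho|)$ is obtained by pairing the upper bound on $T$ with the \emph{asymptotic} for $K_{|\rho|}$, and that asymptotic is not uniform as $|\rho|\to 0$; with the split at $L=\log(2/x)$ the uniform bound below $1/2$ turns out to be rather delicate. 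A clean fix is to split instead at $L+M$ for a fixed large $M$: then the tail is at most $(2/x)^{|\rho|}\Gamma(|\rho|,e^M)\le (2/x)^{|\rho|}e^{-e^M}$, while the lower bound $K_{|\rho|}(x)\ge \tfrac12 e^{-x/2}(2/x)^{|\rho|}\Gamma(|\rho|,x/2)$ (from $\cosh(|\rho|u)\ge\tfrac12 e^{|\rho|u}$ and $\cosh u\le e^u$) gives a ratio that can be made arbitrarily small uniformly in $(x,|\rho|)$. On $[0,L+M]$ one then has $|\nu u|\le A(1+M/L)\le 2A$ once $X$ is small enough that $L\ge M$, and it suffices to choose $A$ with $\cos(2A)$ close to $1$.
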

\begin{proof}
	Rearranging \eqref{eq:K_expansion1}, we have
	\begin{equation}
		\begin{split}
			- \frac{2 \sin \pi\alpha}{\pi}K_{1-\alpha}(x)
			&=
			\frac{(x/2)^{\alpha-1}}{\Gamma(\alpha)}
			+
			\sum_{k=1}^\infty \frac{\left( x/2\right)^{\alpha+2k-1}}{k!\Gamma(\alpha+k)}
			-
			\sum_{k=1}^\infty \frac{k}{k-\alpha} \frac{\left( x/2\right)^{-\alpha+2k-1}}{k!\Gamma(k-\alpha)}
			\\
			& =
			\frac{(x/2)^{\alpha-1}}{\Gamma(\alpha)}
			+
			\sum_{k=1}^\infty \frac{(x/2)^{2k-1}}{k!} \left[\frac{(x/2)^{\alpha}}{\Gamma(k+\alpha)} - \frac{k}{k-\alpha}\frac{(x/2)^{-\alpha}}{\Gamma(k-\alpha)}  \right],
		\end{split}
		\label{eq:widelem_numerator1}
	\end{equation}
	Noting that the final sum vanishes term by term when $\alpha=0$ and that
	\begin{equation}
		\frac{\partial}{\partial\alpha}\left[ \frac{(x/2)^{\alpha}}{\Gamma(k+\alpha)}\right]
		=
		\frac{(x/2)^{\alpha}}{\Gamma(k+\alpha)}
		\left[ \log (x/2) - \psi(k+\alpha) \right]\, ,
	\end{equation}
	and
	\begin{equation}
		\frac{\partial}{\partial\alpha}\left[ \frac{(x/2)^{-\alpha}}{\Gamma(k-\alpha)} \frac{k}{k-\alpha}\right]
		=
		 \frac{\partial}{\partial\alpha}\left[ \frac{k (x/2)^{-\alpha}}{\Gamma(1+k-\alpha)} \right]
		=-
		\frac{k (\frac x2)^{-\alpha}}{\Gamma(1+k-\alpha)}
		\left[ \log \left(\frac x2\right) + \psi(1+k-\alpha) \right],
	\end{equation}
	and that $\left|\psi(k+\alpha)\right|\vee \left|\psi(1+k-\alpha)\right| \le \const \log (k+1)$ for $k \ge 1$ and $\left|\alpha\right| \le 1/2$, we then have also
	\begin{multline}
		\left| \frac{(x/2)^{\alpha}}{\Gamma(k+\alpha)} - \frac{k}{k-\alpha}\frac{(x/2)^{-\alpha}}{\Gamma(k-\alpha)}\right| \, 
		\le\\
		\const \left|\alpha\right| \left[ \left|\frac{(x/2)^{-\alpha}}{\Gamma(k+\alpha)}\right| \vee \left|\frac{k (x/2)^{\alpha}}{\Gamma(k-\alpha)}
\right| \right]
		\left[ \log \left(\frac x2\right) + \log (k+1) \right].
	\end{multline}
	Noting that $\Gamma(1+k \pm \alpha) = \Gamma(1 \pm \alpha) (1 \pm \alpha)_k$ (see \eqref{eq:Pochhammer}),
	\begin{equation}
		\left|(1 \pm \alpha)_k\right|
		=
		\left|1 \pm \alpha\right| \cdots \left|k \pm \alpha\right|
		\ge
		\left( \frac{1}{2} \right) \cdots \left( \frac{2k-1}{2} \right)
		= 2^{-k} \; (2k-1)!!,
	\end{equation}
	and that $\left|\Gamma(1 \pm \alpha)\right|$ is bounded for $\left|\alpha\right|\le 1/2$, we have 
	\begin{equation}
		\begin{split}
			& \left| \sum_{k=1}^\infty \frac{(x/2)^{2k-1}}{k!} \left[\frac{(x/2)^{\alpha}}{\Gamma(k+\alpha)} - \frac{k}{k-\alpha}\frac{(x/2)^{-\alpha}}{\Gamma(k-\alpha)}  \right] \right|
			\\ & \quad \le 
			\const |\alpha| \sum_{k=1}^{\infty} \frac{(x/2)^{2k-1}}{k!}\left[ \log(k+1) + \log (2/x) \right] \left[ \left| \frac{(x/2)^{-\alpha}}{\Gamma(k+\alpha)}\right|\vee \left|  \frac{k (x/2)^\alpha}{\Gamma(k-\alpha)} \right| \right]
			\\ & \quad \le 
			\const |\alpha| (x/2)^{1-|\Re \alpha|} \left[ \sum_{k=0}^\infty \frac{2^{k+1} \log (k+2)}{k!(2k+1)!!}(x/2)^{2k} 
			+ \log(2/x) \sum_{k=0}^\infty \frac{2^{k+1}(x/2)^{2k}}{k!(2k+1)!!} \right]
			\\ & \quad \le 
			\const |\alpha| \left[ 1 + \log (2/x) \right] (x/2)^{1-|\Re \alpha|},
		\end{split}
	\end{equation}
	where the last bound follows from the observation that each of the sums in the preceeding expression is $\alpha$-independent and defines an entire function of $x$, and is therefore bounded on any compact interval.  Combining this with \eqref{eq:widelem_numerator1} and noting that
	\begin{equation}
		\left| \frac{(x/2)^{\alpha-1}}{\Gamma(\alpha)} \right|
		\le
		\const |\alpha| (x/2)^{\Re \alpha -1},
	\end{equation}
	we have
	\begin{equation}
		\left| \frac{2 \sin \pi\alpha}{\pi}K_{1-\alpha}(x) \right|
		\le
		\const |\alpha| (x/2)^{\Re \alpha -1}.
		\label{eq:widelem_numerator_bound}
	\end{equation}

	As for the denominator, using \eqref{eq:K_expansion1}
	\begin{equation}
		\begin{split}
			\left|\frac{2 \sin \pi\alpha}{\pi}K_\alpha(x)\right|
			\ge &
			\left| \frac{\left( x/2 \right)^\alpha}{\Gamma(1+\alpha)} - \frac{\left( x/2 \right)^{-\alpha}}{\Gamma(1-\alpha)}\right|
			\\ & -
			\sum_{k=1}^\infty 
			\frac{(x/2)^{2k}}{k!}
			\left| \frac{\left( x/2 \right)^\alpha}{\Gamma(k+1+\alpha)}
			-
			\frac{\left( x/2 \right)^{-\alpha}}{\Gamma(k+1-\alpha)} \right| \,.
		\end{split}
		\label{eq:widelem1}
	\end{equation}
	The sum can be estimated in the same way as the one in \eqref{eq:widelem_numerator1}: we have
	\begin{equation}
			\left| \frac{\left( x/2 \right)^\alpha}{\Gamma(k+1+\alpha)}
			-
			\frac{\left( x/2 \right)^{-\alpha}}{\Gamma(k+1-\alpha)} \right|
			\le
			\const \frac{2^k \left[ \log(k+1) + \log (2/x) \right]}{(2k-1)!!} |\alpha| \left( \frac{x}{2} \right)^{-|\Re \alpha|}\, ,
	\end{equation}
	and so
	\begin{equation}
			\sum_{k=1}^\infty 
			\frac{(x/2)^{2k}}{k!}
			\left| \frac{\left( x/2 \right)^\alpha}{\Gamma(k+1+\alpha)}
			-
			\frac{\left( x/2 \right)^{-\alpha}}{\Gamma(k+1-\alpha)} \right|
			\le
			\const |\alpha| x^{2 - |\Re \alpha|} \log (1/x)\, ,
	\end{equation}
	and we see that this is dominated by the first term, which was bounded from below in Lemma~\ref{lem:leading_diff} above.  We then have
	\begin{equation}
		\left|\frac{2 \sin \pi\alpha}{\pi}K_\alpha(x)\right|
		\ge 
		\const \times |\alpha| \left( x/2 \right)^{-| \Re \alpha|},
		\label{eq:widelem2}
	\end{equation}
	and combining this with \eqref{eq:widelem_numerator_bound} we obtain the desired bound
	and the proof of Lemma~\ref{lem:wide_integrand_bound} is complete.
\end{proof}

Letting $C_R(w)$ denote the oriented circle of radius $R$ about $w$, the Cauchy formula implies that 
\begin{equation}
	\left|f^{(n)}(w)\right|
	=
	\frac{n!}{2\pi} \left| \oint_{C_R(w)} \frac{f(z)}{\left( z-w \right)^{n+1}} d z\right|
	\le 
	\frac{n!}{R^n}
	\max_{\left|z-w\right|=R} \left|f(z)\right|\, ,
	\label{eq:n_Cauchy}
\end{equation}
for any $f$ which is analytic on an open set containing $C_R(w)$ and its interior.  From Lemma~\ref{lem:wide_integrand_bound} we thus have

\begin{cor} \label{cor:Cauchy_wide}
	For the same $A,X,C_2$ as in Lemma~\ref{lem:wide_integrand_bound}, 
	\begin{equation}
		\left|\frac{\partial^n}{\partial a ^n} \left[ x \frac{K_{1-a}(x)}{K_a(x)} \right]\right|
		\le C_2 n! \left( \frac{\log (2/x)}{A} \right)^n
	\end{equation}
	for all $n \in \bbN$, $a \in [-1/4,1/4]$.
\end{cor}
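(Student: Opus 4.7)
The plan is to derive this bound directly from the Cauchy integral formula \eqref{eq:n_Cauchy}, applied to $f_x(z) := x K_{1-z}(x)/K_z(x)$ as a function of the complex variable $z$, using Lemma~\ref{lem:wide_integrand_bound} to control $|f_x|$ on a suitable circle about $a$.

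First I would set $R := A / \log(2/x)$, which is the natural choice forced by the thickness (in the imaginary direction) of the rectangular region on which Lemma~\ref{lem:wide_integrand_bound} supplies uniform control. I would then check that for every $a \in [-1/4, 1/4]$, the closed disk $\{z \in \bbC : |z-a| \le R\}$ lies inside the region $\{z \in \bbC : |z| \le 1/2,\ |\Im z| \le A/\log(2/x)\}$ where the Lemma applies. Since $a$ is real, any point $z$ in this disk obeys $|\Im z| \le R = A/\log(2/x)$, which meets the imaginary-part constraint exactly. For the modulus, $|z| \le |a| + R \le 1/4 + A/\log(2/x)$, and this is at most $1/2$ whenever $A/\log(2/x) \le 1/4$, i.e.\ whenever $x \le 2 e^{-4A}$. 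By shrinking $X$ if necessary (which does not harm the conclusion of Lemma~\ref{lem:wide_integrand_bound}), I can arrange that this holds for every $x \in (0, X]$; so on the closed disk one has $f_x$ holomorphic and $|f_x(z)| \le C_2$.

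With this preparation, the assertion follows at once from \eqref{eq:n_Cauchy} applied to $w = a$ with the chosen $R$:
\begin{equation*}
\left| \frac{\partial^n}{\partial a^n}\!\left[ x \frac{K_{1-a}(x)}{K_a(x)} \right] \right|
\; \le \; \frac{n!}{R^n} \max_{|z-a|=R} |f_x(z)| \; \le \; \frac{C_2\, n!}{R^n} \; = \; C_2\, n! \left( \frac{\log(2/x)}{A} \right)^n,
\end{equation*}
which is the stated inequality.

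No step represents a genuine obstacle here: the argument is a one-line Cauchy estimate, and the only subtlety is that the radius $R$ must shrink as $x \searrow 0$, simply because the holomorphic strip supplied by Lemma~\ref{lem:wide_integrand_bound} shrinks. This shrinking is exactly what generates the factor $(\log(2/x))^n$ in the bound, and, after integration over $x$ in a neighborhood of $0$ (using $\int_0^\eta (\log(2/x))^{n} \dd x \asymp (2n)!$ as in \eqref{eq:abvhr}), this is precisely the mechanism that will yield the factorial growth $c_{2n} \sim 4 e^{-\gamma}(-1)^{n+1}(2n-1)!/\pi^{2n}$ of the Taylor coefficients of $\tf$ claimed in Theorem~\ref{th:MW}.
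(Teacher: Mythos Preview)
Your proof is correct and follows exactly the paper's approach: apply the Cauchy estimate \eqref{eq:n_Cauchy} with radius $R = A/\log(2/x)$ and invoke Lemma~\ref{lem:wide_integrand_bound} on the resulting disk. (One small slip in your closing commentary: $\int_0^\eta (\log(2/x))^n\,\dd x \asymp n!$, not $(2n)!$, cf.\ \eqref{eq:abvhr}; but this is tangential to the corollary itself.)
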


Since the bounds in Corollary~\ref{cor:Cauchy_wide} are uniformly (in $a$) integrable (in $x$), if we take $\eta \le X$ this allows us to take derivatives inside the integral defining $\tf$ which is therefore infinitely differentiable on the real interval $(-1/4,1/4)$. This result
is going to be crucial for us at $0$: the fact that $\tf$ is $C^\infty$ outside of zero is also a byproduct  of the fact that
we are going to establish (via the next lemma) that $\tf$ is real analytic in $(-1,1)\setminus \{0\}$.

\bigskip

\begin{lemma}
	For any $A \in (0,1/2)$ and $I \in (0,\infty)$, there exist $X \in (0,2)$ and $C \in (0,\infty)$ such that
	\begin{equation}
		\left|f_\alpha (x) \right| \, \le\,  C\, ,
	\end{equation}
whenever $x \in (0,X]$, $\Re \alpha \in (A,1-A)$, and $\left|\Im \alpha \right| \le I$.
	\label{lem:analyticity_bound}
\end{lemma}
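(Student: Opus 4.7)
The plan is to apply the series representation \eqref{eq:K_expansion1} directly to both $K_{1-\alpha}(x)$ in the numerator and $K_\alpha(x)$ in the denominator of $f_\alpha(x)$ and to extract the leading power of $x$ from each. The decisive simplification compared with Lemma~\ref{lem:wide_integrand_bound} is that $\Re \alpha$ is now bounded away from both integers $0$ and $1$, so $|\sin \pi \alpha|$ is bounded below by a positive constant depending only on $A$ and $I$; the delicate $\alpha \to 0$ near-cancellation between $(x/2)^\alpha/\Gamma(1+\alpha)$ and $(x/2)^{-\alpha}/\Gamma(1-\alpha)$ addressed by Lemma~\ref{lem:leading_diff} simply does not arise here. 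Moreover, the set $\{\alpha \in \bbC : \Re \alpha \in [A, 1-A],\ |\Im \alpha| \le I\}$ is compact and avoids all poles of $\Gamma(\alpha)$ and $\Gamma(1-\alpha)$, so the relevant factors $|\Gamma(\cdot)|$ and $1/|\Gamma(\cdot)|$ are uniformly bounded on the prescribed set.

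The first step is to show that $|K_\alpha(x)| \ge c_1 (x/2)^{-\Re \alpha}$ for all $x \in (0,X]$ and $\alpha$ in the region, for suitable constants $c_1>0$ and small $X>0$. The $k=0$ part of \eqref{eq:K_expansion1} splits into two pieces of magnitudes $\asymp (x/2)^{-\Re \alpha}$ and $\asymp (x/2)^{\Re \alpha}$, whose ratio is $(x/2)^{2\Re \alpha} \le (X/2)^{2A}$; the $k \ge 1$ tail brings additional factors $(x/2)^{2k}$ and is absolutely summable. Choosing $X$ small enough makes the leading piece strictly dominate the remainder, and the resulting $c_1$ is uniform over the compact strip by the observations above. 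The second step is to apply the same expansion with $\alpha$ replaced by $1-\alpha$, using $\sin \pi(1-\alpha) = \sin \pi \alpha$ and $\Re(1-\alpha) \in (A,1-A)$, to obtain the matching upper bound $|K_{1-\alpha}(x)| \le c_2 (x/2)^{-(1-\Re \alpha)}$.

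Combining the two estimates gives
\[
|f_\alpha(x)| \;=\; \frac{x\,|K_{1-\alpha}(x)|}{|K_\alpha(x)|} \;\le\; \frac{c_2}{c_1}\, x \,(x/2)^{2\Re\alpha-1} \;=\; \frac{2c_2}{c_1}(x/2)^{2\Re\alpha} \;\le\; \frac{2c_2}{c_1}(X/2)^{2A},
\]
which provides the desired $C$; in fact $f_\alpha(x) \to 0$ uniformly in $\alpha$ as $x \searrow 0$, consistently with the well-known $K_\alpha(x) \sim \tfrac12 \Gamma(\alpha)(x/2)^{-\alpha}$. The only real obstacle is the uniform bookkeeping of the Gamma-ratios and of the Bessel tail on the compact strip; once those are handled via compactness and the elementary inequality $(x/2)^{2\Re\alpha} \le (X/2)^{2A}$, there is no delicate analytic issue, in sharp contrast with the regime of Lemma~\ref{lem:wide_integrand_bound}.
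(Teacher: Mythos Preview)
Your proposal is correct and follows essentially the same approach as the paper: both arguments expand $K_\alpha(x)$ and $K_{1-\alpha}(x)$ via \eqref{eq:K_expansion1}, isolate the dominant power $(x/2)^{-\Re\alpha}$ (resp.\ $(x/2)^{\Re\alpha-1}$), and use smallness of $x$ to make the remaining terms subordinate. The only difference is that the paper bounds the Gamma factors via explicit inequalities (\DLMF{5.6}{6}--\DLMF{5.6}{7}), whereas you invoke compactness of the strip $\{\Re\alpha\in[A,1-A],\ |\Im\alpha|\le I\}$; this is a cosmetic distinction, not a different strategy.
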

\begin{proof}
	Using \DLMF{5.6}{7} and noting that $\Gamma(x) > 1/2$ for all $x > 0$,
	\begin{equation}
		\left|\frac{1}{\Gamma(z)}\right| 
		\le
		\frac{\left( \cosh \pi \Im z \right)^{1/2}}{\Gamma(\Re z)}
		\le 2 \left( \cosh \pi \Im z \right)^{1/2}\, ,
	\end{equation}
	whenever $\Re z > 0$.  Also using \DLMF{5.6}{6} and noting that the Gamma function is concave for positive real arguments,
	\begin{equation}
		\left|\frac{1}{\Gamma(2-\alpha)}\right|
		\ge
		\frac{1}{\Gamma(2 - \Re \alpha)}
		\ge \frac{1}{\Gamma(1)} \vee \frac{1}{\Gamma(2)}
		=
		1.
	\end{equation}
	Applying these two bounds to \eqref{eq:K_expansion1}, we obtain
	\begin{equation}
		\begin{split}
			\left|\frac{\pi \sin \pi\alpha}{\pi} K_\alpha(x)\right|
			&=
			\bigg| 
			\frac{(x/2)^{-\alpha}}{\Gamma(1-\alpha)} 
			+ \left( \frac{x}{2} \right)^{2-\alpha} \sum_{k=0}^{\infty} \frac{(x/2)^{2k}}{(k+1)! \Gamma(k+2-\alpha)}
			\\& \phantom{movemovemovemovemove}
			- \left( \frac{x}{2} \right)^{\alpha} \sum_{k=0}^{\infty} \frac{(x/2)^{2k}}{k! \Gamma(k+1+\alpha)}
			\bigg|
			\\ & \ge
			|\alpha| \frac{(x/2)^{-\Re \alpha}}{\left|\Gamma(2-\alpha)\right|}
			- 
			2 \left[ \left( \frac{x}2 \right)^{2-\Re \alpha} +  \left( \frac{x}2 \right)^{\Re \alpha}\right]
			\left( \cosh \pi I \right)^{1/2}
			\sum_{k=0}^\infty \frac{(x/2)^{2k}}{k!}
			\\ & \ge
			A \left( \frac{X}{2} \right)^{-A}
			-
			4 \left( \frac{X}{2} \right)^A \exp \left( \frac{x^2}{4} \right) \left( \cosh \pi I \right)^{1/2}\, ,
		\end{split}
		\label{eq:an_den_bound}
	\end{equation}
	for all relevant $\alpha$ and $x$; choosing $X$ sufficiently small, the last bound can be made positive.

	Similarly, noting also that $\Gamma(x)$ is negative (resp.\ positive) and decreasing for $x \in (-1/2,0)$ (resp.\ $(0,1/2)$), 
	\begin{multline}
			\left|\frac{\pi \sin \pi\alpha}{\pi} K_{1-\alpha}(x)\right|
			=
			\bigg| 
			\frac{(x/2)^{\alpha-1}}{\Gamma(\alpha)} 
			-
			\frac{(x/2)^{1-\alpha}}{\Gamma(-\alpha)} 
			+ \left( \frac{x}{2} \right)^{\alpha+1} \sum_{k=0}^{\infty} \frac{(x/2)^{2k}}{(k+1)! \Gamma(k+1+\alpha)}\\
			 \phantom{movemovemovemovemovemove}
			- \left( \frac{x}{2} \right)^{3-\alpha} \sum_{k=0}^{\infty} \frac{(x/2)^{2k}}{(k+1)! \Gamma(k+1-\alpha)}
			\bigg|
			\\  \le
			2 \left( \cosh \pi I \right)^{1/2}
			\left\{
				\frac{(x/2)^{A-1}}{\Gamma(A)} 
				-
				\frac{(x/2)^{A}}{\Gamma(-A)} 
				+ 
				\left[ \left( \frac{x}{2} \right)^{A+1} +  \left( \frac{x}{2} \right)^{A+2} \right]
				\exp\left(\frac{x^2}{4}  \right)
			\right\} ,
		\label{eq:an_num_bound}
	\end{multline}
	and combining this with \eqref{eq:an_den_bound} we obtain a suitable bound on $\left|f_\alpha(x)\right|$.
	This completes the proof of Lemma~\ref{lem:analyticity_bound}.
\end{proof}

By Lemma~\ref{lem:analyticity_bound} we have 
that for any $\alpha \in \bbC $ with $\Re \alpha \in (0,1)$, we can choose $A,I$ to obtain such a bound on a neighborhood of $\alpha$; using the Cauchy formula this impies that $f'_x$ is uniformly bounded on some smaller neighborhood of $\alpha$, which allows us to exchange differentiation and integration to see that $\tf$ is holomorphic on that neighborhood.
We can then conclude that $\tf$ is analytic on $\left\{ \alpha \in \bbC \middle| \Re \alpha \in (0,1) \right\}$, and by the symmetry noted in \eqref{eq:symF3} it is also analytic on $\left\{ \alpha \in \bbC \middle| \Re \alpha \in (-1,0) \right\}$.

\medskip

All that remains is to show that the derivatives of $\tf$ at the origin grow as stated; since this will imply that the associated Taylor series is divergent, this will also prove that $\tf$ is not analytic there.
We begin by providing a more precise characterization of the poles of $f_x$ for small $x$.

\begin{lemma}
	\label{lem:nu_n}
  There exist $X,C>0$ and a sequence of functions $\nu_n : (0,\infty) \mapsto (0,\infty)$, satisfying
\begin{equation}
	\left| \nu_n(x)  - \frac{n \pi}{\log(2/x) - \gamma} \right|
	\le
	\frac{C n^3 }{(\log x )^4}
	\label{eq:nu_n_asymp}
\end{equation}
and $\nu_1(x) < \nu_2(x) < \dots$,
such that for all $x \in (0,X]$, $K_\alpha (x) = 0$ iff $\alpha = \pm i\nu_n(x)$.
\end{lemma}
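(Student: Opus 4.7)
The plan is to convert $K_\alpha(x)=0$ into a real transcendental equation on the positive imaginary axis, expand the resulting ``argument'' to recover the $-\gamma$ correction flagged in Remark~\ref{rem:shift}, and use monotonicity to enumerate and estimate the roots. By the Friedlander result invoked at the start of this section, every zero of $\alpha\mapsto K_\alpha(x)$ is pure imaginary, and by $K_{-\alpha}=K_\alpha$ it suffices to look for $\alpha=i\nu$ with $\nu>0$. For $\nu\neq 0$, \eqref{eq:connection} combined with $I_{-i\nu}(x)=\overline{I_{i\nu}(x)}$ shows that $K_{i\nu}(x)=0$ iff $\Im I_{i\nu}(x)=0$. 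Writing $\tilde I_{i\nu}(x)=R(x,\nu)\,e^{i\theta(x,\nu)}$ with the continuous branch of $\theta$ determined by $\theta(x,0)=0$ (well-defined for $x$ small since $\tilde I_0(x)>0$ and $\tilde I_{i\nu}(x)$ is close to $1/\Gamma(1+i\nu)\neq 0$), the zero condition becomes
\[
\Phi(x,\nu)\,:=\,\nu\log(2/x)-\theta(x,\nu)\,=\,n\pi\quad\text{for some positive integer }n.
\]

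Next I would expand $\theta$. The series $\tilde I_{i\nu}(x)=\sum_{k\geq 0}(x/2)^{2k}/[k!\,\Gamma(k+1+i\nu)]$ is dominated by its $k=0$ term for $x$ small, giving $\theta(x,\nu)=-\arg\Gamma(1+i\nu)+O(x^2)$. Taking the imaginary part of $\log\Gamma(1+z)=-\gamma z+\sum_{k\geq 2}(-1)^k\zeta(k)z^k/k$ at $z=i\nu$ yields $\arg\Gamma(1+i\nu)=-\gamma\nu+(\zeta(3)/3)\nu^3+O(\nu^5)$, so
\[
\theta(x,\nu)\,=\,\gamma\nu+O(\nu^3)+O(x^2)\,,\qquad \partial_\nu\theta(x,\nu)\,=\,\gamma+O(\nu^2)+O(x^2)\,,
\]
uniformly on bounded ranges of $\nu$ for $x$ small. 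Consequently $\partial_\nu\Phi=\log(2/x)-\gamma+O(\nu^2)+O(x^2)>0$ on the same range, so $\Phi(x,\cdot)$ is strictly increasing from $\Phi(x,0)=0$; the implicit function theorem then yields, for each $n$ not too large compared with $\log(2/x)$, a unique root $\nu_n(x)$ of $\Phi(x,\nu_n(x))=n\pi$. Rearranging gives
\[
\nu_n(x)\bigl(\log(2/x)-\gamma\bigr)\,=\,n\pi+O(\nu_n^3)+O(x^2)\,,
\]
and bootstrapping with the leading asymptotic $\nu_n\sim n\pi/\log(2/x)$ produces the claimed error $O(n^3/(\log x)^4)$.

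The main obstacle is showing that this construction captures \emph{all} zeros on the positive imaginary axis (not just the ``small'' ones) and orders them as claimed. The Taylor expansion of $\theta$, and hence the monotonicity of $\Phi$, is only effective for $\nu$ in a range bounded independently of $x$ (or at best $\nu\ll\log(2/x)$); zeros with much larger $|\alpha|$ must be handled separately. To close this gap I would combine the integral representation $K_{i\nu}(x)=\int_0^\infty e^{-x\cosh t}\cos(\nu t)\dd t$ with the uniform large-order asymptotics of $K_\alpha(x)$ from \DLMFs{10.41}, plus an argument-principle count of zeros on nested imaginary segments, matching the total count to the enumeration produced above. This technical step is insensitive to the asymptotic regime relevant for Theorem~\ref{th:MW}, which only uses fixed $n$ as $x\searrow 0$.
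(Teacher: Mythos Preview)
Your reduction is exactly the one the paper uses: writing the zero condition as $\Phi(x,\nu)=n\pi$ with $\Phi(x,\nu)=\nu\log(2/x)+\arg\Gamma(1+i\nu)+O(x^2)$ (the paper calls this $-\theta_0(\nu)$ after isolating the $k=0$ term of the $\tilde I$ series and packaging the $k\ge 1$ terms into a uniformly small remainder $S_x(\nu)$). So the architecture is right, and the asymptotic estimate you extract is the same.

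The gap you flag---completeness of the enumeration for unbounded $\nu$---is real under your derivation, but your proposed fix (large-order asymptotics of $K_\alpha$ plus an argument-principle count) is much heavier than needed. The issue is that you expanded $\arg\Gamma(1+i\nu)$ via the Taylor series $\log\Gamma(1+z)=-\gamma z+\sum_{k\ge 2}(-1)^k\zeta(k)z^k/k$, which has radius of convergence $1$ and therefore only controls bounded $\nu$. The paper instead uses the partial-fraction expansion $\psi(1+z)=-\gamma+\sum_{m\ge 1}\bigl(\tfrac{1}{m}-\tfrac{1}{m+z}\bigr)$ \DLMF{5.7}{6}, which is valid for all $z\notin\{-1,-2,\ldots\}$. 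Taking real parts at $z=i\nu$ gives
\[
\Re\psi(1+i\nu)\,=\,-\gamma+\sum_{m\ge 1}\frac{\nu^2}{m(m^2+\nu^2)}\,\in\,[-\gamma,\,-\gamma+\zeta(3)\nu^2]
\]
for \emph{every} real $\nu$. The lower bound gives $\partial_\nu\Phi\ge \log(2/x)-\gamma$ globally, so $\Phi(x,\cdot)$ is strictly increasing on all of $(0,\infty)$ and the implicit-function construction already captures every zero; the upper bound gives $|\arg\Gamma(1+i\nu)+\gamma\nu|\le\tfrac{\zeta(3)}{3}|\nu|^3$ globally, which together with the crude a priori bound $\nu_n\le \text{const}\cdot n/\log(1/x)$ yields the $O(n^3/(\log x)^4)$ error for all $n$ at once. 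With this single substitution your argument closes without any appeal to \DLMFs{10.41} or contour counting.
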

\begin{proof}
	For $x> 0$ and $\nu \in \bbR$, \eqref{eq:K_expansion1}  can be rewritten using some properties of the Gamma function (\DLMF{5.2}{5} and \DLMF{5.4}{3}) as
\begin{equation}
	K_{i\nu}(x)
	=
	-\left( \frac{\pi}{\nu \sinh \left( \pi\nu \right)} \right)^{1/2}
	\sum_{k=0}^\infty
	\frac{\left( x^2/4 \right)^k}{k!}
	\frac{\sin\left( \theta_k (\nu) \right)}{\sqrt{(1^2+\nu^2)\dots(k^2+\nu^2)}}
	,
	\label{eq:K_sin_series}
\end{equation}
where
\begin{equation}
  	\theta_k (\nu) :=  \nu \log(x/2) -  \arg \Gamma(1+k+i\nu)\, ,
	\label{eq:th_kn_def}
\end{equation}
(cf.\ \cite[(2.7-8)]{Dunster}, where this expression is used to study the $x$-zeroes and their dependence on $\nu$).

Then the solutions of $K_{i\nu}(x) = 0$ are the nonzero solutions of
\begin{equation}
	\sin \theta_0(\nu)
	=
	S_x(\nu)
	:=
	-
	\sum_{k=1}^\infty
	\frac{\left( x^2/4 \right)^k}{k!}
	\frac{\sin\left( \theta_{k}(\nu) \right)}{\sqrt{(1^2+\nu^2)\dots(k^2+\nu^2)}}.
	\label{eq:K_sin_approx1}
\end{equation}

Using the definition of the $\psi$ function and the expansion~\DLMF{5.7}{6},
\begin{equation}
  \theta'_0(\nu)
  =
  \log(x/2)
  -
  \Re \psi (1+i\nu)
  =
  \log(x/2)
  + \gamma - \sum_{m=1}^\infty \frac{\nu^2}{m(m^2+\nu^2)}
  \le 
  \log(x/2)
  +\gamma,
  \label{eq:psi_1+in}
\end{equation}
so for $ 0 < x < 2e^{-\gamma} = 1.1229\dots$ $\theta_0$ is strictly decreasing.
From its definition in \eqref{eq:K_sin_approx1}, it is apparent that $S_x$ can be bounded 
\begin{equation}
	\left|
	S_x(\nu)
	\right|
	\le 
	\sum_{k=1}^\infty
	\frac{\left( x^2/4 \right)^k}{(k!)^2}
	\le
	\sum_{k=1}^\infty
	\frac{x^{2k}}{(2k)!}
	\le \cosh x - 1,
	\label{eq:K_remainder}
\end{equation}
uniformly in $\nu$, and so
we see that for each  $0< x < \cosh^{-1}(2) = 1.317\dots$
all solutions $\nu$ of 
\eqref{eq:K_sin_approx1} satisfy
\begin{equation}
  	-\theta_0(\nu)
	\in
	[n \pi- \sin^{-1} \left( \cosh x - 1  \right),n \pi+ \sin^{-1} \left( \cosh x - 1  \right)].
	\label{eq:nu_n_range}
\end{equation}
for some integer $n$, and there is at least one solution for each $n$.
More precisely, for such solutions the derivative of the left hand side of \eqref{eq:K_sin_approx1} satisfies
\begin{equation}
	(-1)^{n+1} \frac{\partial}{\partial\nu} 
	\sin\left( \theta_{0}(\nu) \right)
	=
	(-1)^{n+1}
	\theta'_0(\nu) \cos\left( \theta_{0}(\nu) \right)
	\ge
	|\theta'_0(\nu)|\sqrt{1-(\cosh x - 1)^2},
	\label{eq:nu_n_lhs_deriv}
\end{equation}
and in light of \eqref{eq:psi_1+in}, for $X$ small enough this can be bounded from below by any positive number uniformly in $\nu$. As for the right-hand side of \eqref{eq:K_sin_approx1}, first note that from \DLMF{5.5}{2} we have
\begin{equation}
	\psi(1+k+\alpha) 
	= 
	\psi(1+\alpha) 
	+
	\sum_{m=1}^k \frac{1}{m+\alpha}\, ,
	\label{eq:psi_recursion}
\end{equation}
for all $k \in \bbN, \alpha \in \bbC$,
and so
\begin{equation}
	\begin{split}
		\left|\theta'_k(\nu) - \theta'_0(\nu)\right|
		&=
		\left|\Re \psi(k+1+\alpha)- \Re \psi(1+\alpha)\right|
		\le
		\left|\psi(k+1+\alpha)- \psi(1+\alpha)\right|
		\\ & \le
		\sum_{m=1}^k \frac{1}{m}
		\le \int_1^{k+1} \frac{dm}{m}
		= \log (k+1).
	\end{split}
	\label{eq:psi_log_bound}
\end{equation}
Using this, we have
\begin{equation}
	\begin{split}
		\left| \frac{\partial}{\partial\nu}
		\frac{\sin\left( \theta_{k}(\nu) \right)}{\sqrt{(1^2+\nu^2)\dots(k^2+\nu^2)}}
		\right|
		=
		\left|\frac{\theta'_k(\nu) \cos \theta_k(\nu) - \sum_{m=1}^k \frac{\nu}{m^2 + \nu^2} \sin \theta_k(\nu)}{\sqrt{(1^2+\nu^2)\dots(k^2+\nu^2)}}\right|
		\\
		\le
		\frac{\left|\theta'_0(\nu)\right| + 2 \log(k+1)}{k!}
		\le
		\frac{\left|\theta'_0(\nu)\right| + 2}{(k-1)!},
	\end{split}
\end{equation}
which implies  
\begin{equation}
	\left|S'_x(\nu)\right|
	\le
	\sum_{k=1}^\infty
	\frac{(x^2/4)^{2k} }{k!} 
		\frac{\left|\theta'_0(\nu)\right| + 2}{(k-1)!}
	\le 
	\left[  \left|\theta'_0(\nu)\right| + 2\right]
	\sum_{k=1}^\infty \frac{x^{2k}}{(2k-2)!}
	\le 
	\left[  \left|\theta'_0(\nu)\right| + 2\right]
	x^2 \cosh x,
\end{equation}
which, for $x$ small enough, is smaller than the right-hand side of \eqref{eq:nu_n_lhs_deriv} for all $\nu$.
This suffices to show that there is only one solution of \eqref{eq:K_sin_approx1} in each of the intervals in \eqref{eq:nu_n_range}.  The solution for $n=0$ must be the trivial solution $\nu=0$ which does not correspond to a solution of $K_{i \nu}(x) = 0$.
This uniqueness also implies that the solutions for negative and positive $n$ are related by the symmetry $K_{i\nu}(x) = K_{-i\nu}(x)$, so we see that it is possible to relate the zeros to a family of functions as desired.

To see that the functions $\nu_n$ satisfy the bound~\eqref{eq:nu_n_asymp}, 
we first note that \eqref{eq:nu_n_range} and \eqref{eq:psi_1+in} together imply that
\begin{equation}
  \nu_n (x) 
  \le
  \frac{n \pi + \cosh x - 1}{\inf_{\nu \ge 0} |\theta'_0(\nu)| }
  \le \frac{n \pi + \cosh x - 1}{\log (2/x) - \gamma}
  \le \const \frac{n}{\log 1/x}\, ,
\end{equation}
for all $x \in (0,X)$;
and also that using the same expansion as in \eqref{eq:psi_1+in} we have
\begin{equation}
  \left|
	\gamma + \psi(1+i\nu)
  \right|
  \le
\sum_{m=1}^\infty \frac{\nu^2}{m(m^2+\nu^2)}
	\le \nu^2 \zeta(3)
	\textup{ and so }
	\left|
		\gamma \nu + \arg \Gamma (1+i \nu) 
	\right|
	\le \const |\nu|^3\,,
\end{equation}
for all $\nu \in \bbR$, where $\zeta(s) := \sum_{m=1}^\infty m^{-s}$ is the Riemann zeta function \cite[Section~25.2]{DLMF}.
Then recalling the definition of $\theta_0$, this imples
\begin{equation}
  \left|
  \left[ \log (x/2) - \gamma \right] \nu_n(x)  - \theta_0 (\nu_n(x)) 
  \right|
  =
	\left|
		\gamma \nu_n(x) + \arg \Gamma (1+i \nu_n(x)) 
	\right|
	\le \const \frac{n^3}{(\log 1/x)^3}\, ,
\end{equation}
and restating \eqref{eq:nu_n_range} as $\left|\theta_0\left( \nu_n (x)  \right) - n\pi\right| \le \const \ x^2$ this gives the desired bound and the proof of Lemma~\ref{lem:nu_n} is complete.
\end{proof}

We denote the residue of $f_x$ at $\pm i \nu_{n}(x)$ by $\pm R_n(x)$.
Letting 
\begin{multline}
\label{eq:K_tilde_def}
	\tilde{K}_\alpha(x)
	:=
	\frac{\pi}{2 \sin \pi\alpha}
	\sum_{k=0}^\infty
	\bigg\{ 
		\frac{(x/2)^\alpha}{\Gamma(k+1+\alpha)} \left[ \log(x/2) - \psi(k+1+\alpha) \right]
	+
	\\
	\frac{(x/2)^{-\alpha}}{\Gamma(k+1-\alpha)} \left[ \log(x/2) - \psi(k+1-\alpha) \right]
\bigg\},
\end{multline}
we have $\tfrac{\partial}{\partial\alpha} K_\alpha(x) = \tilde{K}_\alpha(x) - \pi \cot(\pi \alpha) K_\alpha(x)$, and so
\begin{equation}
	R_n (x) = x \frac{K_{1-i \nu_n(x)}(x)}{\tilde{K}_{i\nu_n(x)}(x)}.
	\label{eq:Rn}
\end{equation}
Noting that from Lemma~\ref{lem:nu_n} and \DLMF{5.7}{4} we have 
\begin{equation}
\begin{split}
	\left( \frac{x}{2} \right)^{\pm i \nu_1(x)}
	&=
	-1 \pm i \pi \frac{\gamma}{\log(2/x) - \gamma} + O\left( \frac{1}{|\log x|^4} \right), 
	\\
	\frac{1}{\Gamma(1 \pm i \nu_1)}
	&= 
	1 \pm i \gamma \nu_1(x) + O\left( \frac{1}{|\log x|^2} \right),
	\\
	\psi(1 \pm i \nu_1 (x))
	&=
	-\gamma \pm \zeta(2) \nu_1(x) + O\left( \frac{1}{|\log x|^2} \right),
	\end{split}
\end{equation}
paying attention to cancellations, \eqref{eq:K_tilde_def} gives
\begin{equation}
	\tilde{K}_{i \nu_1(x)}(x)
	=
	i \frac{L^2(x)}{\pi} + O\left( 1 \right),
	\label{eq:Ktilde_asymp}
\end{equation}
where the $k = 1,2, \ldots$ terms in the sum are bounded in the same way as the similar sum appearing in the proof of Lemma~\ref{lem:wide_integrand_bound}.
Similarly, noting that 
\begin{equation}
1/\Gamma(-i\nu_1(x)) \,=\, -i\nu_1(x) [1 - i \gamma \nu_1(x) + O(\nu_1^2(x)]\, ,
\end{equation}
(from \DLMF{5.7}{1}) and expanding $K_{1-\alpha}(x)$ as in \eqref{eq:widelem_numerator1}, we have
\begin{equation}
	K_{1 - i \nu_1(x)}(x)
	=
	\frac{1}{x} + O\left( \frac{1}{x |\log x|^2} \right),
\end{equation}
taking advantage of a cancellation between the subleading terms in $(x/2)^{1+i\nu_1(x)}$ and $1/\Gamma(1+i\nu_1(x))$,
and so
\begin{equation}
	R_1(x) = - i \frac{\pi}{\left[ \log(2/x)-\gamma \right]^2} + O\left( \frac{1}{|\log x|^4} \right).
	\label{eq:R1_asymp}
\end{equation}

\begin{lemma}
\label{lem:5.9g}
	For any $a \in (1,2)$, There exist $X_a, C_a > 0$ such that 
	\begin{equation}
		\left|
		x \frac{K_{1-\alpha}(x)}{K_\alpha(x)}
		+ 2 i R_1(x) \frac{\nu_1(x)}{\alpha^2 + \nu_1^2(x)}
		\right|
		< 
		\frac{C_a}{|\log 2/x - \gamma|}\, ,
	\end{equation}
whenever $x \in (0,X_a]$ and $|\alpha| = a (\pi / |\log (2/x) - \gamma|)$.
\end{lemma}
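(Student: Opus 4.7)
The plan is to bound the two terms on the left-hand side separately, each by $O(1/L(x))$ on the circle $|\alpha|=a\pi/L(x)$ (with $L(x):=\log(2/x)-\gamma$), and then invoke the triangle inequality. The constants $C_a$ produced this way blow up as $a\to 1$ or $a\to 2$, which is consistent with the fact that the circle pinches the poles $\pm i\nu_1$ and $\pm i\nu_2$ of $f_x$ in those limits. Thus the lemma does not require a cancellation between its two terms; it rather records that each remains of size $1/L(x)$ on the interpolating circle.

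The main work is a two-sided estimate of $K_\alpha(x)$. Starting from \eqref{eq:K_expansion1}, I would retain only the $k=0$ term, control the higher-$k$ contributions by an $O(x^2)$ remainder (using the bounds on Gamma ratios developed in the proof of Lemma~\ref{lem:wide_integrand_bound}), Taylor-expand $1/\Gamma(1\pm\alpha)=1\mp\gamma\alpha+O(\alpha^2)$, and write $(x/2)^{\mp\alpha}=e^{\pm\alpha(L(x)+\gamma)}$. The $\gamma\alpha$ terms cancel neatly with the $e^{\pm\gamma\alpha}$ factors, leading to
\begin{equation}
K_\alpha(x)\,=\,\frac{\sinh(\alpha L(x))}{\alpha}\bigl(1+O(\alpha^2)\bigr)+O(\alpha)+O(x^2)\,,
\end{equation}
uniformly on the circle. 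The elementary identity $|\sinh(a\pi e^{i\theta})|^2=\sinh^2(a\pi\cos\theta)+\sin^2(a\pi\sin\theta)$ shows this quantity is bounded below by a constant $\kappa(a)>0$ for $a\in(1,2)$ (the zeros would require $\cos\theta=0$ and $a|\sin\theta|\in\bbZ$, forcing $a\in\bbZ$, which is excluded). Combining, $|K_\alpha(x)|\ge (\kappa(a)/(\pi a))\,L(x)\,(1-o(1))$ on the circle for $x\le X_a$ small. An analogous (simpler) expansion, exploiting the pole of $\Gamma(\alpha)$ at $0$ after absorbing $\pi/\sin\pi\alpha$, gives $xK_{1-\alpha}(x)=1+O(1/L(x))$ uniformly on the circle. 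Together these yield $|f_x(\alpha)|\le C_a'/L(x)$.

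For the partial-fraction term, \eqref{eq:R1_asymp} provides $|R_1(x)|\le C/L(x)^2$, Lemma~\ref{lem:nu_n} gives $|\nu_1(x)|\le C/L(x)$, and on the circle the elementary bound $|\alpha^2+\nu_1(x)^2|\ge \big||\alpha|^2-\nu_1(x)^2\big|\ge (a^2-1-o(1))(\pi/L(x))^2$ holds; hence the modulus of the partial fraction is also $\le C_a''/L(x)$ for $x$ small enough. The triangle inequality concludes. The main technical obstacle I expect is the uniform control of the $O(\alpha^2)$ multiplicative error in the expansion of $K_\alpha$: one needs to verify that this error is dominated by the lower bound $\kappa(a)L(x)/(\pi a)$, which is unproblematic since $|\alpha|^2=O(1/L(x)^2)$ on the circle while the leading term is of order $L(x)$, so the relative error is $O(1/L(x)^3)$.
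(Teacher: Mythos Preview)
Your approach is correct and essentially the same as the paper's. The paper also bounds the two terms separately; it phrases the key lower bound on $|K_\alpha(x)|$ as a scaling limit $K_{\tilde\alpha/L(x)}(x)/L(x)\to \sinh(\tilde\alpha)/\tilde\alpha$ (with $\tilde\alpha=\alpha L(x)$) together with a continuity/compactness argument on the circle $|\tilde\alpha|=a\pi$, but the substance is exactly your observation that $|\sinh(a\pi e^{i\theta})|$ is bounded below by some $\kappa(a)>0$ for $a\notin\bbZ$. Your explicit-estimate version is arguably cleaner, since it avoids having to argue that the convergence to the limit is uniform.

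One minor correction: the claim $xK_{1-\alpha}(x)=1+O(1/L(x))$ is not accurate. The leading term is $\tfrac{\pi\alpha}{\sin\pi\alpha}\cdot\tfrac{(x/2)^\alpha}{\Gamma(1+\alpha)}\approx e^{-\alpha L(x)}$, whose modulus ranges over $[e^{-a\pi},e^{a\pi}]$ on the circle rather than staying near $1$. This does not affect your argument, since all you actually use is that $|xK_{1-\alpha}(x)|$ is uniformly bounded on the circle, which holds for the same reason.
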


\begin{proof}
From Lemma~\ref{lem:nu_n} we see that we can choose 
$X_a < 2 e^{-\gamma}$ such that 
\begin{equation}
\nu_2 (x) \,>\, a (\pi / |\log (2/x) - \gamma|) \,>\, \nu_1(x)\, ,
\end{equation}
for all $x \in (0,X_a]$; then the quantity to be bounded is a continuous function of both $\alpha$ and $x$ for all relevant values except $x=0$,
so we need only check that
\begin{equation}
	\limsup_{x \searrow 0}
	\left|
	x L(x) \frac{K_{1- \tilde{\alpha}/ L(x)}(x)}{K_{ \tilde{\alpha}/L( x)}(x)}
	+ 2 i L(x) R_1(x) \frac{\nu_1(x)}{\left(\frac{\tilde{\alpha}}{L( x)}\right)^2 + \nu_1^2(x)}
	\right|\, ,
	\label{eq:limbound}
\end{equation}
is bounded uniformly for $|\tilde{\alpha}| = a \pi$, 
where for brevity $L(x) := \log 2/x - \gamma$; in fact we will show that both terms in the sum are suitably bounded.
In fact
\begin{equation}
	\lim_{x \searrow 0}\left( \frac{x}{2} \right)^{\tilde \alpha / \log x}
	\, =\,  e^{\tilde \alpha} \ \
	\textup{ and } \ \
	\lim_{x \searrow 0}\Gamma\left( 1 + \frac{\tilde \alpha}{\log x} \right)
	\, =\,  1\, ,
\end{equation}
and with \eqref{eq:K_expansion1} and \eqref{eq:K_remainder} this implies that
\begin{equation}
	\lim_{x \searrow 0}\frac{K_{\tilde \alpha/L(x)}(x) }{L(x)}
	\, =\, 
	\frac{e^{\tilde \alpha} - e^{-\tilde \alpha}}{2 \tilde \alpha}\, .
\end{equation}
Noting that $\lim \Gamma(\tilde \alpha/L( x)) / L( x) =1/\tilde \alpha$ we also have 
\begin{equation}
	x K_{1-\tilde\alpha/L(x)}(x)
	\stackrel{{x \searrow 0}}\sim
	x \frac{L(x)}{2 \tilde\alpha}\frac{(x/2)^{\frac{\tilde\alpha}{L(x)} - 1}}{\Gamma\left( \frac{\tilde\alpha}{L(x)} \right)}
	\sim e^{\tilde\alpha}; 
\end{equation}
then
\begin{equation}
	 \lim_{x \searrow 0}x L(x) \frac{K_{1-\tilde\alpha/ L(x)}(x)}{K_{\tilde\alpha/L(x)}(x)}
	\,=\, 
	\frac{2 \tilde\alpha}{e^{2\tilde\alpha} - 1},
	\label{eq:integrand_scaling}
\end{equation}
which, recalling $|\tilde{\alpha}| = a \pi \in (\pi,2\pi)$, is indeed uniformly bounded.

Recalling \eqref{eq:R1_asymp}, we have
\begin{equation}
	\lim_{x \searrow 0}L^2 (x) R_1(x) \,=\,  - i \pi\, ,
	\label{eq:R1_leading}
\end{equation}
and thus
\begin{equation}
	\lim_{x \searrow 0}2 i L(x) R_1(x) \frac{\nu_1(x)}{\left(\frac{\tilde{\alpha}}{L( x)}\right)^2 - \nu_1^2(x)}
	\,=\, 
	2 \frac{\pi^2}{\tilde{\alpha}^2 - \pi^2}\, ,
\end{equation}
which is also uniformly bounded in a suitable fashion. Hence \eqref{eq:limbound} is proven
and therefore also the proof of  Lemma~\ref{lem:5.9g} is complete.
\end{proof}

Noting that
\begin{equation}
	 - 2 i R_1(x) \frac{\nu_1(x)}{\alpha^2 + \nu_1^2(x)}
	=
	\frac{R_1(x)}{\alpha- i \nu_1(x)}
	-
	\frac{R_1(x)}{\alpha+ i \nu_1(x)},
\end{equation}
the expression examined above is an analytic function of $\alpha$ in the interior of the circles under consideration apart from removable singularities,
and so using Lemma~\ref{lem:5.9g} and \eqref{eq:n_Cauchy} we have
\begin{cor}
	For any $a \in (1,2)$, There exist some $C_a,X_a > 0$ and a sequence of functions $I_n:(0,X_a] \to \bbC$ such that
	\begin{equation}
		x \frac{K_{1-\alpha}(x)}{K_\alpha(x)}
		=
		\sum_{n=0}^\infty I_n(x) \alpha^n
		-2 i R_1(x) \frac{\nu_1(x)}{\alpha^2 + \nu_1^2(x)}\,,
		\label{eq:integrand_Taylor}
	\end{equation}
	whenever $x \in (0,X_a]$ and $|\alpha| \le a (\pi / |\log x|)$, and
	\begin{equation}
		|I_n(x)| \le C \left( \frac{ |\log (2/x) - \gamma|}{a \pi}\right)^{n-1}\, ,
	\end{equation}
	for all $n$.
\end{cor}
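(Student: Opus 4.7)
The plan is to show that the extra rational term $-2 i R_1(x)\nu_1(x)/(\alpha^2+\nu_1^2(x))$ cancels the singular part of $f_x(\alpha)=x K_{1-\alpha}(x)/K_\alpha(x)$ at its two nearest poles $\pm i\nu_1(x)$, making the combination analytic in a disk of radius $a\pi/L(x)$ (with $L(x):=\log(2/x)-\gamma$), and then to extract the Taylor coefficients by Cauchy's formula using the bound from Lemma~\ref{lem:5.9g}. Concretely, I would set
\begin{equation}
g_x(\alpha)\, :=\, x\frac{K_{1-\alpha}(x)}{K_\alpha(x)}\,+\,2 i R_1(x)\frac{\nu_1(x)}{\alpha^2+\nu_1^2(x)}\,.
\end{equation}
Rewriting the rational piece as $R_1(x)/(\alpha-i\nu_1(x))-R_1(x)/(\alpha+i\nu_1(x))$ and recalling that $f_x$ has simple poles at $\pm i\nu_n(x)$ with residues $\pm R_n(x)$, the singularities of the two addends at $\pm i\nu_1(x)$ cancel, so $g_x$ extends analytically across these two points.

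Next, by Lemma~\ref{lem:nu_n} one can (by possibly shrinking $X_a$) ensure that for every $x\in(0,X_a]$ the only zeros of $K_\alpha(x)$ inside the closed disk $D_x:=\{\alpha\in\bbC:\,|\alpha|\le a\pi/L(x)\}$ are $\pm i\nu_1(x)$, since $\nu_2(x)\sim 2\pi/L(x)>a\pi/L(x)$ for $a\in(1,2)$ and $x$ small enough. Therefore $g_x$ is holomorphic on a neighborhood of $D_x$ and admits a Taylor expansion $g_x(\alpha)=\sum_{n\geq 0} I_n(x)\alpha^n$ on $D_x$, which yields the claimed decomposition~\eqref{eq:integrand_Taylor}.

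Finally, Cauchy's formula gives
\begin{equation}
I_n(x)\, =\, \frac{1}{2\pi i}\oint_{\partial D_x}\frac{g_x(\zeta)}{\zeta^{n+1}}\,\dd\zeta\,,
\end{equation}
and estimating with the supremum over $\partial D_x$ together with Lemma~\ref{lem:5.9g} (which bounds $|g_x|$ on $\partial D_x$ by $C_a/L(x)$) yields
\begin{equation}
|I_n(x)|\,\le\,\frac{C_a/L(x)}{(a\pi/L(x))^n}\,=\,\frac{C_a}{a\pi}\left(\frac{L(x)}{a\pi}\right)^{n-1}\,,
\end{equation}
which is the stated bound after absorbing the factor $C_a/(a\pi)$ into the final constant $C$.

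There is no real obstacle beyond the two verifications already embedded in the argument: (i) that the subtracted rational function exactly matches the principal parts of $f_x$ at $\pm i\nu_1(x)$ (which is immediate from the definition of $R_1(x)$ as the residue), and (ii) that the choice of $X_a$ keeps the second pair of poles $\pm i\nu_2(x)$ strictly outside $D_x$ uniformly in $x\in(0,X_a]$ — this uses the asymptotics $\nu_n(x)\sim n\pi/L(x)$ from Lemma~\ref{lem:nu_n} together with the strict inequality $a<2$. Once these are in place, the result follows at once from Cauchy's estimate and Lemma~\ref{lem:5.9g}.
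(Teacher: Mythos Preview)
Your proposal is correct and follows essentially the same route as the paper: rewrite the rational term as $R_1(x)/(\alpha-i\nu_1(x))-R_1(x)/(\alpha+i\nu_1(x))$ to cancel the principal parts of $f_x$ at $\pm i\nu_1(x)$, conclude that $g_x$ is analytic on the disk of radius $a\pi/L(x)$ (using Lemma~\ref{lem:nu_n} to keep $\pm i\nu_2(x)$ outside), and then apply the Cauchy estimate \eqref{eq:n_Cauchy} together with the bound on $|g_x|$ from Lemma~\ref{lem:5.9g}. Your write-up is in fact slightly more explicit than the paper's, which compresses all of this into a single sentence.
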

We then have 
\begin{equation}
	\left.
	\frac{1}{n!} \frac{\partial^n}{\partial\alpha^n} 
	\left[ x \frac{K_{1-\alpha}(x)}{K_\alpha(x)} \right]
	\right|_{\alpha=0}
	=
	I_n(x) 
	-
	\left\{
	\begin{matrix}
		\left( -1 \right)^{n/2} 2 i \frac{R_1(x)}{\left( \nu_1(x) \right)^{n+1}}, & n \text{ even}
		\\
		0, & n \text{ odd.}
	\end{matrix}
	\right. 
	\label{eq:integrand_taylor}
\end{equation}

From \eqref{eq:nu_n_asymp} and \eqref{eq:R1_asymp} we have that for any $\Neps < 2 e^{-1-\gamma}$ there exist finite $C_R, C_\nu$ such that 
\begin{equation}
	\left| R_1(x) +  \frac{i \pi}{L^2(x)} \right|
	\le \frac{C_R}{L^4(x)}
	\text{ and }
	\left|
	\frac{1}{\nu_1(x)} - \frac{L(x)}{\pi}
	\right|
	\le 
	\frac{C_\nu}{\pi L^2(x)}\, ,
\end{equation}
for all $x \in (0,\Neps]$, and so 
\begin{multline}
	\left|
	\frac{R_1(x)}{\left( \nu_1(x) \right)^{n+1}}
	+  i  \frac{L^{n-1}(x)}{\pi^n} 
	\right|
	= 
	\left|
	R_1(x) \left( \frac{L(x)}{\pi} +
	\left( \frac 1{\nu_1(x)}- \frac{L(x)}{\pi} \right)\right)^{n+1}
	-  i  \frac{L^{n-1}(x)}{\pi^n} 
	\right| 
	\\
	\le 
	\left \vert R_1(x) \left( \frac{L(x)}{\pi}\right)^{n+1}
	-  i  \frac{L^{n-1}(x)}{\pi^n} \right\vert
	+ \left \vert R_1(x) \right\vert
	\frac 1{\pi^{n+1}}
	\sum_{m=1}^{n+1} \binom{n+1}{m} 
	L(x)^{n+1-3m} C_\nu^m
	\\
	\le
	\frac{C_R}{\pi^{n+1}}L^{n-3}(x)
	+
	\left(\pi + \frac {C_R}{L^2(x)}\right)
	\frac 1{\pi^{n+1}}
	\sum_{m=1}^{n+1} \binom{n+1}{m} 
	L(x)^{n-3m-1} C_\nu^m
	.
	\label{eq:mainpoles_error}
\end{multline}
We have 
\begin{equation}
	\int_0^\Neps L^n(x) dx
	=
	2 e^{-\gamma} \int_{\log(2/\Neps) - \gamma}^\infty L^n e^{-L} dL
	=
	2 e^{-\gamma} \Gamma(n+1, \bar \Neps)\, ,
	\label{eq:logn_integral}
\end{equation}
for $n \ge 0$,
where $\bar \Neps := \log(2/\Neps) - \gamma$ for brevity (note $\bar \Neps > 0$ since we have assumed $\Neps < 2 e^{-\gamma}$), and where $\Gamma(n,\Neps) := \int_\Neps^\infty t^{n-1} e^{-t} dt	$ is the upper incomplete Gamma function \cite[Chapter~8]{DLMF}.
For $n<0$, since we have assumed $\Neps \le 2 e^{-\gamma-1}$ we have $L(x) \ge 1$ for $x \in (0,\Neps)$, and thus
\begin{equation}
	0 \le
	\int_0^\Neps L^n(x) dx
	\le 
	\Neps\, .
\end{equation}
Note that by combining \DLMF{8.8}{2} and \DLMF{8.10}{1} we obtain
\begin{equation}
	\frac{\Gamma(n+1,\bar \Neps)}{\Gamma(n,\bar \Neps)}
	=
	n + \frac{ {\bar \Neps}^n e^{-\bar \Neps}}{\Gamma(n,\bar \Neps)}
	\ge
	n + \bar \Neps
	\ge 
	n \,,
	\label{eq:incomplete_Gamma_ratio}
\end{equation}
which can be applied iteratively to obtain
\begin{equation}
	\frac{\Gamma(n+1,\bar \Neps)}{\Gamma(n+1-m,\bar \Neps)}
	\ge 
	\frac{n!}{(n-m)!}\, .
	\label{eq:incomplete_Gamma_growth}
\end{equation}
for $m \le n$.  Then
\begin{equation}
	\begin{split}
		\sum_{m=1}^n
		\binom{n}{m}
		C_\nu^m &
		\int_0^\Neps L^{n-3m-1}(x)
		\\
		& \le
		\sum_{m=1}^{\floor{(n-1)/3}}
		\binom{n}{m}
		\frac{(n-3m)!}{(n-1)!}
		C_\nu^m
		\Gamma(n,\bar \Neps)
		+
		\Neps
		\sum_{m=\floor{(n-1)/3}+1}^n 
		\binom{n}{m} C_\nu^m
		\\ 
		&=
		n
		\sum_{m=1}^{\floor{(n-1)/3}}
		\frac{1}{m!}
		\frac{(n-3m)!}{(n-m)!}
		C_\nu^m
		\Gamma(n,\bar \Neps)
		+
		\Neps
		\sum_{m=\floor{(n-1)/3}+1}^n 
		\binom{n}{m} C_\nu^m
		\\
		&\le
		n \Gamma(n,\bar \Neps)
		\sum_{m=1}^\infty 
		\frac{(n-3m)!}{(n-m)!}
		\frac{C_\nu^m}{m!}
		+\Neps \sum_{m=0}^n \binom{n}{m} C_\nu^m
		\\
		&=
		\frac{27}{2 (2n-3)(2n-6)}\left( e^{C_\nu}-1 \right) \Gamma(n,\bar \Neps)
		+\left( C_\nu +1 \right)^n \Neps \, ,
	\end{split}
\end{equation}
using the observation that
\begin{equation}
	\frac{(n-m)!}{(n-3m)!}
	\ge
	\frac23 n \left( \frac{2n-3}{3} \right)\left( \frac{2n - 6}{3} \right)\, ,
\end{equation}
for $1 \le m \le n/3$.
Using this to bound the second term on the right-hand side of of Inequality~\eqref{eq:mainpoles_error} and bounding the other two terms similarly,
we see that the integral in $x$ from $0$ to $\Neps$ of the right-hand side of Inequality~\eqref{eq:mainpoles_error} admits a bound of order $\pi^{-n-1}\Gamma(n,\bar \Neps)/n^2$ for large $n$.
We also have 
\begin{equation}
	\left| \int_0^\Neps I_n(x) dx \right|
	\le
	\int_0^\Neps |I_n(x)| d x 
	\le 
	\frac{2 C_a e^{-\gamma}}{a^{n-1}} \frac{ \Gamma(n,\bar \Neps)}{\pi^{n-1}},
\end{equation}
for any $a \in (1,2)$,
and so the dominant behavior of the even Taylor coefficients $\tf^{(2n)}(0)/(2n)!$ for $n$ large is that of
\begin{equation}
\begin{split}
	\left( -1 \right)^{n+1}\frac{2}{\pi^{2n}} \int_0^\Neps \left( \frac{\log(2/x) - \gamma}{\pi} \right)^{n-1} d x
	\, &=\, 
	4 e^{-\gamma} \left( -1 \right)^{n+1}\frac{\Gamma(2n,\bar \Neps)}{\pi^{2n}}
	\\
	\, &
	\sim\, 
	4 e^{-\gamma} \left( -1 \right)^{n+1}\frac{(2n-1)!}{\pi^{2n}}\, ,
	\end{split}
\end{equation}
noting $\Gamma(n,\bar \Neps) \sim \Gamma(n) = (n-1)!$ \cite[8.2.3, 8.11.4]{DLMF},
while the symmetry noted in \eqref{eq:symF3} imposes that $\tf'(0) = 4 \eta$ and $\tf^{(2n+1)}(0) = 0$  for $n =1,2, \ldots$. 

The proof of Theorem~\ref{th:MW} is therefore complete.
\qed

\section{Scaling limit of matrix product: proof of Theorem~\ref{prop:scalingDH83}}
\label{sec:scaling-proofs}

As announced, we generalize  the set-up of  \eqref{eq:Deltamodel}-\eqref{eq:AgD2} in the sense that 
we prove 

\begin{theorem}
\label{prop:scalingDH83g}
Consider a family of positive random variables $\{Z^\gD\}_{\gD\in (0, \gD_0)}$ such that
$\bbP(Z^\gD=y)=0$ for every $y$ and such that for some $\gs>0$ and $\ga \in \bbR$ we have
\begin{equation}
\label{eq:condDH83-g1}
\lim_{\gD \searrow 0} \frac{\bbE \left[Z^\gD -1 \right]} \gD \, =\, \frac 12 \gs ^2 (1-\ga)
\ \ \ \ \text{ and } \ \ \ \
\lim_{\gD \searrow 0} \frac{\bbE \left[\left(Z^\gD -1\right)^2 \right]} \gD \, =\,   \gs ^2 \, .
\end{equation} 
Assume moreover that for every $c>0$
\begin{equation}
\label{eq:condDH83-g1+}
\lim_{\gD\searrow 0}
\frac 1 \gD \bbP \left( \left \vert Z^\gD -1\right \vert \, >\, c \right) \, =\, 0\,,
\end{equation}
and 
\begin{equation}
\label{eq:condDH83-g2}
\limsup_{\gD\searrow 0}\left \vert \frac{\bbE[1/Z^\gD]-1}\gD \right\vert \, <\,  \infty\, .
\end{equation}
Then if we consider the model \eqref{eq:AgD1}-\eqref{eq:AgD2} with the IID sequence $\{Z^\gD (n)\}_{n=1,2, \ldots}$
generalized to an arbitrary IID sequence with common law satisfying \eqref{eq:condDH83-g1}-\eqref{eq:condDH83-g2}, then  
\eqref{eq:approx1} and \eqref{eq:approx2} hold true. 
\end{theorem}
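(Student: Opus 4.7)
The statement contains two claims of rather different natures: the invariance principle \eqref{eq:approx1} and the convergence of Lyapunov exponents \eqref{eq:approx2}, which is subtler since one has to exchange the limits $n\to\infty$ and $\gD\searrow 0$. I would prove them separately.

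\textbf{Invariance principle.} Writing $X^\gD(n+1)-X^\gD(n)=A^\gD(n+1)X^\gD(n)$ and conditioning on $X^\gD(n)$, an elementary computation based on \eqref{eq:condDH83-g1} gives
\[
\frac 1\gD \bbE\left[X^\gD(n+1)-X^\gD(n)\,\big|\, X^\gD(n)\right]\,\longrightarrow\, \begin{pmatrix} \gep X_2^\gD(n) \\ \gep X_1^\gD(n) + \tfrac{(1-\ga)\gs^2}{2} X_2^\gD(n)\end{pmatrix},
\]
and, using $\mathrm{Var}(Z^\gD)=\bbE[(Z^\gD-1)^2]-(\bbE[Z^\gD-1])^2=\gs^2\gD+o(\gD)$, the only non-vanishing component of the conditional covariance (divided by $\gD$) converges to $\gs^2 (X_2^\gD(n))^2$. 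The Lindeberg-type bound \eqref{eq:condDH83-g1+} controls large jumps. Then \eqref{eq:approx1} follows from a standard diffusion-approximation theorem for Markov chains (e.g. Ethier-Kurtz or Stroock-Varadhan), as the SDE \eqref{eq:sys} is well-posed.

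\textbf{Lyapunov exponent via the projective chain.} The key observation is that $Y^\gD(n):=X_2^\gD(n)/X_1^\gD(n)$ is itself a Markov chain with the explicit recursion
\[
Y^\gD(n+1)\,=\, Z^\gD(n+1)\,\frac{Y^\gD(n)+\gep\gD}{1+\gep\gD\, Y^\gD(n)},
\]
while $X_1^\gD(n+1)/X_1^\gD(n)=1+\gep\gD\, Y^\gD(n)$. Mimicking the arguments in the proof of Theorem~\ref{th:Lyap}, $Y^\gD$ is eventually confined to $(0,\infty)$ when $\gep>0$ and admits a unique invariant probability $\pi^\gD$ there; the Birkhoff ergodic theorem then yields
\[
\widehat\cL_{Z^\gD}(\gep) \,=\, \int_0^\infty \log(1+\gep\gD y)\,\pi^\gD(\dd y).
\]
Dividing by $\gD$ and formally using $\log(1+u)\sim u$, one expects the right-hand side to converge to $\gep\int_0^\infty y\, p_\gep(y)\,\dd y=\cL_{\gs,\ga}(\gep)$, the second equality being \eqref{eq:mlterg}. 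The weak convergence $\pi^\gD\Rightarrow p_\gep$ would be obtained by combining the path-space convergence just established with the uniqueness of $p_\gep$ as invariant probability of the diffusion \eqref{eq:EDS}.

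\textbf{Main obstacle.} The genuinely delicate step is to justify this linearisation: one needs uniform-in-$\gD$ tail bounds on $\pi^\gD$ near $0$ and $\infty$, sharp enough to give $\int y\,\pi^\gD(\dd y)\to\int y\,p_\gep(y)\,\dd y$ and to make the contribution of $\{y\geq c/\gD\}$ of size $o(\gD)$. I expect to obtain these by constructing discrete Lyapunov functions for $Y^\gD$ that mirror the Feller scale and speed functions \eqref{eq:sandv} used in the proof of Theorem~\ref{th:Lyap}: the hypothesis \eqref{eq:condDH83-g2} controlling $\bbE[1/Z^\gD]$ is crucial here, as it provides the drift information for $1/Y^\gD$, the discrete analogue of the symmetry $z\mapsto 1/z$ exploited in the Feller computation near the origin.
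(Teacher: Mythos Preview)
Your overall architecture matches the paper's proof: diffusion approximation for \eqref{eq:approx1}, then the projective chain $Y^\gD(n)=X_2^\gD(n)/X_1^\gD(n)$, the Furstenberg-type formula $\widehat\cL_{Z^\gD}(\gep)=\int\log(1+\gep\gD y)\,\mu^\gD(\dd y)$, weak convergence $\mu^\gD\Rightarrow p_\gep$ from the invariance principle plus uniqueness of $p_\gep$, and you correctly isolate the real issue as uniform-in-$\gD$ moment control of $\mu^\gD$, with \eqref{eq:condDH83-g2} governing the small-$y$ tail via $1/Y^\gD$.

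The one place where the paper does something simpler than what you propose is the moment bounds themselves. Rather than building discrete analogues of the Feller scale and speed functions, the paper exploits a direct concavity trick: the map $u(y)=(y+\gep\gD)/(1+\gep\gD y)$ has the property that $z\mapsto u(\sqrt z)^{\,2}$ is concave, so Jensen gives the one-step inequality
\[
\bbE\big[Y^\gD(n+1)^2\big]^{1/2}\le \big(\bbE[(Z^\gD)^2]\big)^{1/2}\,u\!\left(\bbE\big[Y^\gD(n)^2\big]^{1/2}\right),
\]
whose unique positive fixed point stays bounded as $\gD\searrow 0$ by \eqref{eq:condDH83-g1}. This already yields $\sup_\gD\int y^2\,\mu^\gD(\dd y)<\infty$. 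The small-$y$ side is handled by the same argument applied to $\widetilde Y^\gD=1/Y^\gD$, which obeys the identical recursion driven by $1/Z^\gD$; plain concavity of $u$ and \eqref{eq:condDH83-g2} give $\sup_\gD\int y^{-1}\,\mu^\gD(\dd y)<\infty$. With the second moment in hand, the linearisation error is controlled pointwise by $\big|\gD^{-1}\log(1+\gep\gD y)-\gep y\big|\le C\,\gD^{1/2}y^{3/2}$, which is uniformly integrable against $\mu^\gD$ by Cauchy--Schwarz. So no region $\{y\ge c/\gD\}$ needs to be excised separately, and no discrete scale function is required.
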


\medskip 

Theorem~\ref{prop:scalingDH83g} directly implies Theorem~\ref{prop:scalingDH83}: the cases of two more classes of distributions are treated just before the proof. 
Note that with \eqref{eq:condDH83-g1} we are in reality just assuming the existence of the two limits and that the second limit is not zero.
The second assumption, i.e.  \eqref{eq:condDH83-g1+},
 barely fails to be a consequence of 
\eqref{eq:condDH83-g1}. 
The third assumption, i.e. \eqref{eq:condDH83-g2}, is used to control the  amount of the   mass of $Z^\gD$ that is  close to zero:
it is not difficult to realize that, given \eqref{eq:condDH83-g1}, replacing  \eqref{eq:condDH83-g2} with 
the stronger condition
\begin{equation}
\label{eq:condDH83-g3}
\lim_{\gD\searrow 0}\frac{\bbE[1/Z^\gD]-1}\gD \, =\, \frac 12(\ga -1) \gs^2\, ,
\end{equation} 
leads to very little loss of generality. Moreover, we have assumed that the law of $Z^\gD$ ha no mass just to be sure that we do not fall into a pathological case for the theory of product of random matrices, but all we need is a condition 
that guarantees the existence of the limit in \eqref{eq:discL}
and that the Markov chain associated to matrix product is ergodic: this is  true in greater generality    \cite{cf:BL}.
\smallskip

Before giving the proof let us show two classes of examples to which Theorem~\ref{prop:scalingDH83g} applies:
\begin{enumerate} 
\item The distribution chosen in \cite{MW1,MWbook} falls into the class
\begin{equation}
 \label{eq:densitylambda1before}
 \NMW \gl_1^{-\NMW}y^{\NMW-1} \ind_{(0, \gl_1)}(y) \, ,
 \end{equation}
with
\begin{equation}
\gl_1\, =\, \gl_1(\ga,\NMW) \, =\,   1+1/\NMW +(1-\ga)/\NMW^2+ o( 1/\NMW^2)\, .
\end{equation}
$\NMW(\to \infty)$ is the parameter that tunes the strength of the disorder and Theorem~\ref{prop:scalingDH83g} can be applied 
by setting  $\gD=\NMW^{-2}$: let us verify the hypotheses.
We compute for every $\nu$
 \begin{equation}
 \label{eq:betanu}
 \bbE \left[ \left( Z_\NMW \right)^\nu\right]\, =\, \frac{\gl_1(\ga,\NMW)^\nu}{1+ \frac \nu \NMW}
 \, =\,1+ \frac{\nu (\nu-\ga)} {2\,  \NMW^2} + o\left( \frac 1{\NMW^2} \right)\, ,
 \end{equation} 
and we directly obtain 
\begin{equation}
\lim_{{\NMW \to \infty}}
\NMW^2\bbE\left[  Z_\NMW -1
\right] \, =\,  \frac{1-\ga}2
\ \ \ \text{ and } \ \ \  \lim_{{\NMW \to \infty}}
\NMW^2\bbE\left[ \left( Z_\NMW -1\right)^2
\right] \, =\, 1\, ,
\end{equation}
and 
\begin{equation}
  \bbE\left[ Z_\NMW^{\pm 2}\right]\,  =\, 
  1+ \frac{ (2\mp\ga)} {\,  \NMW^2} + o\left( \frac 1{\NMW^2} \right)\, =\, 
  \exp \left(  \frac{ (2\mp\ga)} {\,  \NMW^2}\right) 
  + o\left( \frac 1{\NMW^2} \right)\,\, ,
  \end{equation}
Moreover or every $c\in (0,1)$ 
the event $\{\vert  Z_\NMW -1\vert >c\}= \{  Z_\NMW -1 <-c\}$ if $\NMW$ is sufficiently large, because $Z_\NMW
 \le \gl_1(\ga, \NMW)$, which tends to one for $\NMW \to \infty$. On the other hand 
 $\bbP( Z_\NMW -1 <-c)=((1-c)/ \gl_1)^\NMW$, which is bounded by $(1-c)^\NMW$ since $\gl_1>1$.
\item Choose a centered and compactly supported probability density $p(\cdot)$ and set  $\gs ^2:=\int t^2 p(t) \dd t$. Then the  random variable $Z^\gD$ with density given by 
\begin{equation}
y \mapsto \frac 1{\sqrt{\gD}} p \left( \frac{y- m_\gD}{\sqrt{\gD}} \right)
\ \ \ \ \text{ with } m_\gD:= 1 + \frac 12 \gs^2 (1-\ga) \gD\, ,
\end{equation}
with $\gD$ smaller than a suitable $\gD_0>0$, 
satisfies the hypotheses of Theorem~\ref{prop:scalingDH83g}.
\end{enumerate}

\medskip

\noindent
\emph{Proof of Theorem~\ref{prop:scalingDH83g}.}
We start with the proof of \eqref{eq:approx1}, which  is a direct application of the approximation-diffusion principle: we exploit \cite[pp. 266--272]{StroockVaradhan}, notably \cite[Assumptions (2.4)-(2.6), Theorem~11.2.3]{StroockVaradhan}.
Equivalently, one can resort to   \cite[Corollary 4.2 in Chapter 7]{EthierKurtz}. The procedure demands three steps:
\medskip 
\begin{itemize}
\item compute the local drift at $\ux \in \R^2$: uniformly for $\ux=(x_1, x_2)^{\tt t}$ in compact sets  
\begin{equation}
\begin{split}
b^\gD\left(\ux\right) \,&=\,  \gD^{-1} \bbE A^\gD \ux = 
\begin{pmatrix}
0&\gep \\
\gep \bbE\left[Z^\gD \right]
& \frac{\bbE\left[Z^\gD -1 \right]  }{\gD} 
\end{pmatrix} \ux
\\
&\stackrel{\gD \searrow 0}\longrightarrow \,  b\left(\ux\right)\, :=\, b\;  \ux , \qquad \text{ with } b:= 
\begin{pmatrix}
0&\gep \\
\gep & (1-\ga)\frac{\gs^2}{2} 
\end{pmatrix} ,
\end{split}
\end{equation}
where we have applied the first assumption in \eqref{eq:condDH83-g1};
\item
compute the diffusion matrix at $\ux$: again uniformly we have
\begin{equation}
a^\gD(\ux) \, =\, \gD^{-1} \bbE \left[A^\gD \ux \, \ux^{\tt t} (A^\gD)^{\tt t}\right]\\
\stackrel{\gD\searrow 0}\longrightarrow  a(\ux)\,:=\,
\begin{pmatrix}
0&0 \\
0 & \gs^2 x_2^2
\end{pmatrix}  \, ,
\end{equation}
where we have applied both assumptions in \eqref{eq:condDH83-g1};
\item observe that, by  \eqref{eq:condDH83-g1+}, $\gD^{-1} \bbP (  |A^\gD| \geq c) \to 0$ for every $c>0$.
\end{itemize}
\medskip

Then, since  the stochastic differential system  with drift $b(\cdot)$ and 
diffusion matrix $a(\cdot)$   has  unique (strong) solution, the 
Markov chain $X^\gD$ converges in law to the diffusion process
with drift $b(\cdot)$ and 
diffusion matrix $a(\cdot)$, which is precisely the solution $X$ to the stochastic differential system
\eqref{eq:sys}.
This completes the proof of  \eqref{eq:approx1}. 

\smallskip

In order to prove \eqref{eq:approx2} 
we start by observing that $\widehat \cL _{Z^\gD}(\gep)=\widehat \cL _{Z^\gD}(-\gep)$, in agreement with the analogous result for
$\cL_{\gs, \ga}(\cdot)$  (Theorem~\ref{th:Lyap}(2)), because $D(I+A^\gD)D$, with
$D$ the diagonal matrix with $+1$ and $-1$ on the diagonal, is equal to   $I+A^\gD$ with $\gep$ replaced by $-\gep$.
Hence we can restrict to $\gep >0$.
Moreover if
 we set
 $Y^\gD(n):= X_2^\gD(n)/X_1^\gD(n)$, we have  that $Y^\gD\left(\lfloor \cdot/\gD\rfloor\right) \longrightarrow Y(\cdot)$ in law as $\gD \searrow 0$ just because of  \eqref{eq:approx1} and because the map $(x_1,x_2) \mapsto x_2/x_1$, from $(0, \infty)^2$ to 
 $(0, \infty)$, is continuous. Denote by $T_t^\gD$ and $T_t$ the corresponding Markov operator semigroups
\begin{equation}
T_t^{\gD} f(y) \, =\,   E^\gD_y\left[ f\left( Y^\gD\left(\lfloor t/\gD\rfloor\right) \right) \right]\;,\qquad T_t f(y) =  E_y\left[ f\left( Y(t) \right) \right]\, ,
\end{equation}
acting on bounded continuous $f: (0,\infty) \to \R$. Note that we have also introduced the notation $E^\gD$ and $E$ for the expectation with respect
to the two Markov processes we consider.  We claim that:
\medskip

\begin{enumerate}
\item \label{eq:1} For bounded continuous $f: (0,\infty) \to \R$ and $t \in [0, \infty)$, we have that
\begin{equation}
T_t^{\gD}f(y) \stackrel{\gD \searrow 0}\longrightarrow T_t f(y) \quad{\rm uniformly\ for\ } y {\rm \ in \ compact\ subsets\ of\ } (0,\infty)\;.
\end{equation}
\item \label{eq:2} For all positive $\gD$ there exists a unique law $\mu^\gD$ on $ (0,\infty)$ which is  invariant  for the Markov chain $Y^\gD$, which is ergodic. 
\item \label{eq:3} 
Choosing $\gD_0 \in (0, 1/\gep)$ we have 
\begin{equation}
\sup_{\gD \in (0,\gD_0]} 
\int_0^\infty y^{2} \mu^\gD(\dd y) < \infty\;, \qquad 
\sup_{\gD \in (0,\gD_0]} \int_0^\infty y^{-1}  \mu^\gD(\dd y) < \infty\;.
\end{equation}
\end{enumerate}
\medskip

Claim \eqref{eq:1} is a byproduct of the proof of  \eqref{eq:approx1}  \cite[Theorem~11.2.3]{StroockVaradhan}.
Claim \eqref{eq:2} comes from the general theory of products of random matrices. Let us prove  \eqref{eq:3}, and start by writing
\begin{equation} 
Y^\gD(n+1)\,=\,  Z^\gD(n+1) \, u( Y^\gD(n))\;,\qquad u(y)= \frac{y+\gep \gD}{1+\gep \gD y}\;.
\end{equation}
Observing that for $\gD \in (0, 1/\gep]$
\begin{equation} 
\frac{\dd^2 }{\dd z^2} \left(u( z^{1/2} )^2\right) = -  \gep \gD (1-\gep^2 \gD^2) \frac{3z + 4 \gep \gD z^{1/2}  + 1}{2 z^{3/2} (\gep \gD z^{1/2} +1)^4}
\leq 0\, ,
\end{equation}
we obtain by the Markov property and by Jensen's inequality that for a given initial condition $y>0$
\begin{equation} 
\begin{split}
E_{y} \left[ Y^\gD(n+1)^2\right]\, &=\, \bbE \left[\left(Z^\gD(n+1)\right)^2)\right] E_{y} \left[u( Y^\gD(n))^2\right] \\
& \le \,  q_{\gD,+}^2 
 u\left( E_{y}\left[ Y^\gD(n)^2\right]^{1/2}\right)^2 \,, 
\end{split}
\end{equation}
where 
\begin{equation}
q_{\gD,+}\, :=\,  \sqrt{\bbE \left[\left(Z^\gD\right)^2)\right]}\stackrel{\eqref{eq:condDH83-g1}}= 1+ \left(1-\frac \ga 2\right)\gs ^2 \gD + o\left(\gD^2\right)
\, .
\end{equation}
Therefore if we set $x_n:=E_{y} \left[ Y^\gD(n)^2\right]^{1/2}$ we have 
$x_{n+1}\le q_{\gD,+} u(x_n)$ which directly entails that $x_n< \infty$ for every $n$  and, since $u(\cdot)$ is bounded and concave increasing with $u(0)>0$,
 the application $q_{\gD,+} u(\cdot)$ has only one  positive fixed point that attracts every positive number. 
 The fixed point $x^+_{\ga, \gep}(\gD)$ is easily computed:
 \begin{multline}
 \label{eq:defxplus}
 x^+_{\ga, \gep}(\gD)\, =\, \frac 12 \left( \frac{q_{\gD,+}-1}{\gep \gD}+ \sqrt{
 \left( \frac{q_{\gD,+}-1}{\gep \gD}\right)^2
 + 4 q_{\gD,+}
 }\right) 
 \\
 \stackrel{\gD \searrow 0}\sim
  \frac 12 \left( 
  \frac{\left(1-\frac \ga 2\right)}\gep \gs ^2
  + \sqrt{
 \left( \frac{\left(1-\frac \ga 2\right)}\gep \gs ^2\right)^2
 + 4 
 }\right)
  \, .
 \end{multline}
 Therefore 
 $\limsup_n x_n  \le x^+_{\ga, \gep}(\gD)$ 
and $x^+_{\ga, \gep}(\gD)$ is bounded for $\gD\searrow 0$.
Since $\{Y_n^\gD\}_{n=0,1, \ldots}$ converges in law to the  random variable $Y^\gD_\infty$ that is distributed according to 
$\mu^\gD$, by standard measure theory argument we infer that
$\bbE[ (Y^\gD_\infty)^2]=
\int_0^\infty y^{2}  \mu^\gD(\dd y) \le (x^+_{\ga, \gep}(\gD))^2$
 which 
 proves the first claim in \eqref{eq:3}. 
 
 For the other claim in \eqref{eq:3} it is useful to note that $\widetilde{Y}^\gD(n)=
Y^\gD(n)^{-1}$ evolves according to the similar dynamics driven by $1/Z^\gD$,
\begin{equation}
\widetilde{Y}^\gD(n+1)\, =\, \left(Z^\gD(n+1)\right)^{-1} 
\frac{\widetilde{Y}^\gD(n)+ \gep \gD}{1+\gep \gD \widetilde{Y}^\gD(n)}\, .
\end{equation}
We can now proceed in a simpler way than above and exploit directly the concavity of $u(\cdot)$ to get to
\begin{equation}
\tilde x_{n+1} \, :=\, E_{y} \left[ \widetilde{Y}^\gD(n+1)\right]\,  \le \,  q_{\gD,-} 
 u\left( E_{y}\left[ \widetilde{Y}^\gD(n)\right]\right) \,, 
\end{equation}
and $\limsup_n \tilde x_n \le x^-_{\ga,\gep}(\gD)$, with $x^-_{\ga,\gep}(\gD)$ defined replacing $q_{\gD, +}$ with 
$q_{\gD, -}$ in the definition \eqref{eq:defxplus} of 
$x^+_{\ga,\gep}(\gD)$. It is therefore clear that
\eqref{eq:condDH83-g2} tells us that $x^-_{\ga,\gep}(\gD)$ remains bounded for $\gD\searrow 0$ and 
the second claim in \eqref{eq:3} is proven.
\smallskip

\begin{rem}
Of course if we make the stronger, but in practice almost equivalent, condition 
on the second moment of $1/Z^\gD$ in \eqref{eq:condDH83-g2}, the argument for 
the first claim in \eqref{eq:3} applies and directly  yields 
$\sup_{\gD \in (0,\gD_0]} \int_0^\infty y^{-2}  \mu^\gD(\dd y) < \infty$.
\end{rem}
\medskip 

With  \eqref{eq:1}--\eqref{eq:3} at hands, we complete the proof of  \eqref{eq:approx2}.
By \eqref{eq:AgD1}-\eqref{eq:AgD2} and iterating we obtain
\begin{equation}
\label{eq:vite}
\begin{split}
 \log X_1^\gD(n) \,&=\, \log X_1^\gD(n-1) + \log \left(1+ \gep \gD Y^\gD(n-1) \right)
  \\
  &=\,  \log X_1^\gD(0) + \sum_{i=1}^n \log \left(1+ \gep \gD Y^\gD(i-1) \right)  \, .
  \end{split}
\end{equation}
Following \cite[Th.~4.3 in Ch.~III]{cf:BL},  we express the Lyapunov exponent
\begin{multline}
\widehat  \cL_{Z^\gD} (\gep)\,=\, \lim_{n \to \infty} \frac 1n \log \|X^\gD(n)\|  \, =\, \lim_{n \to \infty} \frac 1n \log X_1^\gD(n)\\
 \stackrel{\eqref{eq:vite}}{=} \lim_{n \to \infty} \frac 1n  \sum_{i=1}^n \log \left(1+ \gep \gD Y^\gD(i-1) \right)  \,
=\, \int_0^\infty \log\left( 1+ \gep \gD y\right)  \mu^\gD (\dd y) \, . 
\end{multline}
By \eqref{eq:3}, the family $\{\mu^\gD\}_{ \gD \in (0,\gD_0]}$ of probability measures is tight on $(0,\infty)$.
By \eqref{eq:1} and \cite[Th.~9.10 in Ch.~4]{EthierKurtz}, every weak limit of $\{\mu^\gD\}_{ \gD \in (0,\gD_0]}$ is invariant for $Y$, whose unique invariant measure has the density $p_\gep(\cdot)$, implies that 
$\mu^\gD(\dd y)$ converges weakly to  $p_\gep(y) \dd y$ as $\gD \searrow 0$. Then,
\begin{equation} 
\frac{ \widehat  \cL_{Z^\gD} (\gep)}{ \gD } - \cL_{\gs, \ga} (\gep)
\,=\,   \int_0^\infty  \left(\frac {\log(1+\gep \gD y) }{ \gD }\! -\! \gep y\right) \mu^\gD ( \dd y) +  \int_0^\infty \gep  y  \left( \mu^\gD (\dd y)\!  -\!  p_\gep(y)\dd y\right) 
  \end{equation}
The last term vanishes as $\gD \searrow 0$ by weak convergence and  uniform integrability from claim \eqref{eq:3}. But also  the first term in the right-hand side vanishes for the same reasons because
\begin{multline}
\left\vert 
\frac {\log \left( 1+ \gep \gD y\right) }{ \gD } -\gep y \right\vert \, =\,
  \int_0^y \frac{\gD \gep^2 z}{1+\gD \gep z} \dd z 
\, \le \,
\sqrt{\int_0^y \gep  \dd z
\,
 \int_0^y {\gD \gep^2 z} \dd z}\, =\, \frac{\gep^{3/2}}{\sqrt{2}} \gD^{1/2} y^{3/2}\,,
\end{multline}
where we have used that, for $u\ge 0$,  $u/(1+u)$ is bounded above both by $1$ and by $u$.
This completes the proof of \eqref{eq:approx2} and, therefore, also the proof of Theorem~\ref{prop:scalingDH83g}.
\qed

\appendix

\section{The McCoy-Wu model}
\label{sec:MW}
In \cite{MW1}, McCoy and Wu examined a two-dimensional Ising model with bond disorder of a particular type (subsequently known as the McCoy-Wu model): the couplings between sites in neighboring columns have a constant strength $E_1$, while the couplings between neighboring sites in the same column take  a random value $E_2(n)$ which is fixed within each row but varies independently -- keeping the same distribution -- between different rows (Figure~\ref{fig:1}).  They showed that in the thermodynamic limit the free energy per site of this model is given (up to the subtraction of an analytic function of $\beta$) by  
\begin{equation}
\label{eq:MWfe}
\tf_{\scriptscriptstyle{\textrm{MW}}}(\gb)\, :=\, 
\frac 1{4\pi}\int_{-\pi}^\pi \cL^{\scriptscriptstyle{\textrm{MW}}}_{\gb}(\theta) \dd \theta\,,
\end{equation}
where 
$\cL^{\scriptscriptstyle{\textrm{MW}}}_{\gb}(\theta)$ is the Lyapunov exponent of the random matrix
\begin{equation}
\label{eq:MWmatrix}
M_\beta(\theta)
:=
\begin{pmatrix}
    1 & \frac{a}{a^2+b^2}  \\
    \frac{a}{a^2+b^2} \gl & 
    \frac{\gl}{a^2+b^2}
  \end{pmatrix}\, ,
\end{equation}
with
\begin{equation}
a(\theta)\, =\, -2z_1 \frac{\sin(\theta)}{\left \vert 1+z_1 \exp(i\theta) \right \vert ^2}\ \ 
\text{ and } \ \
b(\theta)\, =\, \frac{1-z_1^2}{\left \vert 1+z_1 \exp(i\theta) \right \vert ^2}\, ,
\end{equation} 
where 
\begin{equation}
z_1\, =\, \tanh \left( \gb E_1\right)\, , \ \
 z_2(n)\, =\, \tanh \left( \gb E_2(n)\right)
 \ \ 
\text{ and } \ \ 
\gl=\gl(n)= z_2^2(n)\, .
\end{equation}
In \cite{ShankarMurthy} a different version of the model has been considered: vertical bounds are random
in the horizontal direction and randomness is repeated in each line. This model, that allows frustration, is richer, but 
the features that are novel with respect to the McCoy-Wu model cannot be appreciated in the weak disorder limit: our analysis applies to \cite{ShankarMurthy} as well, but we will not develop this issue here.

\begin{figure}[htbp]
\centering
\includegraphics[width=6 cm]{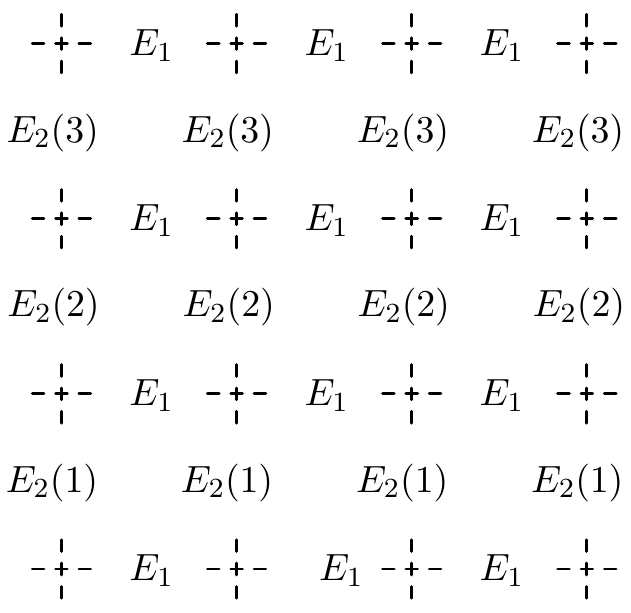}
\vskip-.2cm
\caption{\label{fig:1}  The McCoy-Wu disordered version of the two dimensional Ising model: the  disordered interactions are in the vertical direction and they are distributed in an IID fashion within one column. This disorder is just copied to all the other columns and the horizontal interactions are non random. The disorder enters the free energy formula via
independent copies of the random variable $\gl= \tanh^2 \left( \gb E_2\right)$.}
\end{figure}

To avoid trivialities we assume that $E_1\neq 0$ as well as that $E_2$ is a non degenerate random variable:
it is immediately clear that the sign of $E_2$ does not matter and just a little thought reveals that the sign of $E_1$
is irrelevant too. Therefore we assume that  $E_1\in (0, \infty)$ and that $E_2$ is a random variable taking values in $(0, \infty)$.
It is helpful (mostly to simplify the presentation) to assume that $E_2$ takes values in $[E_2^{-}, E_2^+]$, with $0<E_2^- <E_2^+< \infty$. 


Moreover one directly sees that $a(\cdot)$ is odd and $b(\cdot)$ is even, which yields
that $\cL^{\scriptscriptstyle{\textrm{MW}}}_{\gb}(\cdot)$ is even: in fact  $D_\pm M_\gb(\theta) D_\pm=
M_\gb(-\theta)$, with $D_\pm$ the diagonal matrix with $(+1,-1)$ on the diagonal. Therefore:
\begin{equation}
\tf_{\scriptscriptstyle{\textrm{MW}}}(\gb)\, :=\, 
\frac 1{2\pi}\int_{0}^\pi \cL^{\scriptscriptstyle{\textrm{MW}}}_{\gb}(\theta) \dd \theta\,.
\end{equation}

McCoy and Wu claim that for every $\upsilon \in (0, \pi)$ -- our focus is on $\upsilon$ small -- the function
\begin{equation}
\gb \mapsto \frac 1{2\pi}\int_{\upsilon}^\pi \cL^{\scriptscriptstyle{\textrm{MW}}}_{\gb}(\theta) \dd \theta\,.
\end{equation}
is real analytic on $(0, \infty)$. This can be proven by applying the main result in \cite{Ruelle} (see also \cite{cf:dubois}). We sketch the argument here by considering separately the case
$\theta$ bounded away from $0$ and $\pi$ and the case of $\theta$ near $\pi$: with $\gd>0$ small

\smallskip
\begin{itemize}
\item For $\theta \in [\gd, \pi -\gd]$ the matrix $M_\beta(\theta)$ (with positive entries) maps the closure of  the cone $Q$ -- here $Q$ is first quadrant without the axes, that is the set of vectors with positive entries -- to   $Q\cup \{0\}$. More precisely, 
by the  hypothesis we have made on the suport of $Z$,  for every $\gd\in (0, \pi/2)$ and every $\gvr\in(0,1)$ the closure of $Q$ is mapped  into a cone whose closure  is a subset of $Q\cup \{0\}$ and this subset is the same 
for every choice of $\theta
\in [\gd, \pi -\gd]$ and every $\gb\in [\gvr,1/\gvr]$. This 
uniform cone property implies the real analyticity of 
$\gb \mapsto \cL^{\scriptscriptstyle{\textrm{MW}}}_{\gb}(\theta)$ with a convergence radius that is bounded away from zero uniformly in $\theta
\in [\gd, \pi -\gd]$ and $\gb\in [\gvr,1/\gvr]$.
\item  For $\theta \in [\pi -\gd,\pi]$ we argue by observing first that
\begin{equation}
	a(\pi) = 0, 
	\quad
	b(\pi)
	=
	\frac{1-z_1^2}{(1-z_1)^2}
	=
	\frac{1+ \tanh \beta E_1}{1 - \tanh \beta E_1}
	=
	e^{2 \beta E_1},
\end{equation}
so
\begin{equation}
	M_\beta(\pi)
	\,=\,
	\begin{pmatrix}
		1 & 0 \\
		0 & e^{-4 \beta E_1} \tanh^2 \beta E_2
	\end{pmatrix}.
\end{equation}
Since $e^{-4 \beta E_1} \tanh^2 \beta E_2<1$ the action of $M_\beta(\pi)$ contracts uniformly any cone 
of the form $\{(x,y): \, y \ge \vert x\vert \}$, in the sense there exists $\gvr >0$ such that  $M_\beta(\pi)$
sends $\{(x,y): \, y \ge \vert x\vert \}$ into $\{(x,y): \, y \ge (1+ \gvr) \vert x\vert \}$, uniformly in $\gb>0$ and $E_2$.
Elementary arguments show that this result is only slightly perturbed if we consider $\theta \in [\pi -\gd,\pi]$ with $\gd$ sufficiently small. 
This uniform cone property implies the real analyticity of 
$\gb \mapsto \cL^{\scriptscriptstyle{\textrm{MW}}}_{\gb}(\theta)$ with a convergence radius that is bounded away from zero uniformly in $\theta \in [\pi -\gd,\pi]$ and $\gb>0$.
\end{itemize}
\medskip

Therefore the true issue is the regularity (or lack of it) of 
\begin{equation}
\label{eq:etamap}
\gb \mapsto \frac 1{2\pi}\int_{0}^\upsilon \cL^{\scriptscriptstyle{\textrm{MW}}}_{\gb}(\theta) \dd \theta\,,
\end{equation}
for a $\upsilon>0$ that can be chosen as small as one wishes. 
At this point McCoy and Wu claim that 
the only non analytic point of the map in \eqref{eq:etamap} can be at $\gb_c$ defined by
\begin{equation}
\label{eq:betac}
	2 \beta_c E_1 +  \bbE\left[ \log \tanh \beta_c E_2 \right] \, =\,  0\, .
\end{equation}
To see that this is the only possible candidate,
McCoy and Wu point out that
\begin{equation}
	a(0) = 0, 
	\quad
	b(0)  
	=
	\frac{1-z_1^2}{(1+z_1)^2}
	=
	\frac{1- \tanh \beta E_1}{1 + \tanh \beta E_2}
	=
	e^{-2 \beta E_1},
\end{equation}
so 
\begin{equation}
	M_\beta(0)
	=
	\begin{pmatrix}
		1 & 0 \\
		0 & e^{4 \beta E_1} \tanh^2 \beta E_2
	\end{pmatrix},
\end{equation}
and so
\begin{equation}
  \cL^{MW}_\beta(0)
	=
	\max\left( 0, 4 \beta E_1 + 2 \bbE\left[ \log \tanh \beta E_2 \right] \right)
\end{equation}
for $\beta$ real. This admits an analytic extension in a neighborhood of any positive $\beta$ except for $\beta_c$.

This is of course far from being close to a proof, since one has to control the integral over $\theta \in (0, \upsilon)$ and
not the value in zero. But McCoy and Wu  perform also a more subtle analysis that can be understood 
precisely via the diffusion limit of matrix products that is at the center of our analysis. To explain this  let us make a further manipulation 
to match more sharply our framework.

\medskip

In fact, as it stands, 
$M_\beta(\theta)$, cf. \eqref{eq:MWmatrix}, is not of the form 
\eqref{eq:matrix}. But by 
noting that 
\begin{equation}
	\frac{1}{a^2(\theta)+b^2(\theta)} 
	=
	\frac{(1+z_1)^4}{(1-z_1^2)^2} + O(\theta^2)
	=
	\left( \frac{1+z_1}{1-z_1} \right)^2+ O(\theta^2)
	=
	e^{4 \beta E_1}+ O(\theta^2)\, , 
\end{equation}
and
\begin{equation}
	\frac{a}{a^2(\theta)+b^2(\theta)} 
	= 
	-2 \left( \frac{1+z_1}{1-z_1} \right)^2 \theta + O(\theta^2) = -2
	e^{4 \beta E_1}+ O(\theta^2)\, ,
\end{equation}
if we let
\begin{equation}
\label{eq:varepsgl}
	\tilde\varepsilon
	:= 
	\frac{2 z_1}{(1-z_1)^2} \theta
	\, ,
\end{equation}
we see  that to leading order as $\theta \searrow 0$
\begin{equation}
\label{eq:MWm-1}
	\begin{pmatrix}
		1 & -\tilde\varepsilon \\
		-\tilde\varepsilon \gl & e^{4 \beta E_1} \gl
	\end{pmatrix} 
\end{equation}
is  $M_\beta(\theta)$. The matrix in \eqref{eq:MWm-1}
is of the form \eqref{eq:matrix} up to a conjugation and a change of variables: 
in fact 
\begin{equation}
\label{eq:MWm-2}
	\begin{pmatrix}
1& \gep\\ 
\gep Z  & Z
\end{pmatrix}
\,:=\, 
\begin{pmatrix}
1& \tilde \gep e^{-2 \beta E_1}\\ 
\tilde\gep e^{-2 \beta E_1} \gl  & e^{4 \beta E_1} \gl
\end{pmatrix}
	\, =\, 
	\begin{pmatrix}
		1 & 0 \\
		0 &  - e^{2 \beta E_1}
	\end{pmatrix}
	\begin{pmatrix}
		1 & -\tilde \varepsilon \\
		- \tilde \varepsilon \gl & e^{4 \beta E_1} \gl
	\end{pmatrix}
	\begin{pmatrix}
		1 & 0 \\
		0 & - e^{-2 \beta E_1}
	\end{pmatrix}\, ,
\end{equation}
and we observe -- recall \eqref{eq:varepsgl} -- that
 $\gep = c_\gb \theta$, with $c_\gb= 2 \sinh(2\gb E_1)$.
\medskip

\begin{rem}
\label{rem:gbtoga}
It is important to remark at this stage that the inverse temperature $\gb$ and our fundamental parameter
$\ga$ -- we recall that $\ga$ is the unique non zero real solution to
$\bbE Z^\ga=1$ ($Z=e^{4\gb E_1} \tanh^2(\gb E_2)$ depends on $\gb$!) when such a solution exists and otherwise $\ga=0$ -- should be seen as an analytic change of variable: this is treated in detail in Lemma~\ref{th:gagb}. In particular 
$\ga(\gb_c)=0$ and therefore
$\ga(\gb)= (\gb-\gb_c) \ga ' (\gb_c)+ O((\gb-\gb_c)^2)$, but the \emph{constant} $\ga ' (\gb_c)$ depends of the law of $Z$ (with $\gb=\gb_c$) and this expansion
should be done more carefully when the disorder is weak because, as we will see, $\ga ' (\gb_c)$ becomes large
in this limit: this is treated in \eqref{eq:gDNMW}-\eqref{eq:MWbetas}.
\end{rem}
\medskip

What McCoy and Wu do at this point is 
\smallskip
\begin{itemize}
\item
making a specific choice of 
$Z=Z^\gD=Z^\gD_\gb$ that satisfies the hypotheses of Theorem~\ref{prop:scalingDH83g} (say, with $\gs=1$ for simplicity); this  actually implements two choices:
\begin{enumerate}
 \item the first is evident and it is the fact that disorder can be made weak by making $\gD$ small;
 \item the second is that $\gb-\gb_c$ is chosen small and, precisely, of the order of $\gD$. As we will
 explain, if we set $y=(\gb-\gb_c)/ \gD$ and we keep $y\in \bbR$ fixed, then $\ga(\gb) \sim -C_{\gb_c} y $, and the constant $C_{\gb_c}>0$
will be given explicit in the specific case that we are going to develop, see \eqref{eq:MWbetas}.
\end{enumerate}
\item they choose also $\upsilon \propto \gD$: let us fix in an arbitrary fashion $\upsilon = \gD$.
\end{itemize}
\smallskip

In physical terms these choices correspond to focusing on the critical window in the limit of weak disorder. Cutting
the integral at $\theta= \gD$ is harmless (as we have discussed before), but of course only as far as $\gD$ is kept fixed.

McCoy and Wu  are in the end just dealing (recall \eqref{eq:MWm-2}) with the Lyapunov exponent $\widehat \cL_{\gD, \gb_c + y\gD} (c_{\gb_c}x \gD)$
of the matrix (we perform the change of variable 
$\theta=x \gD$)
\begin{equation}
\begin{pmatrix}
		1 & c_{\gb_c} x \gD \\
		 c_{\gb_c} x \gD Z_{\gb_c + y\gD}^\gD & Z_{\gb_c + y\gD}^\gD
	\end{pmatrix}\, .
\end{equation} 
But Theorem~\ref{prop:scalingDH83g} (see also Theorem~\ref{prop:scalingDH83}) tells
us that $\widehat \cL_{\gD, \gb_c + y\gD} (c_{\gb_c }x)$ is asymptotically equivalent for $\gD$ small to $ \gD \cL_{1, C_{\gb_c}\ga}( c_{\gb_c}x)$ so that
\begin{equation}
\label{eq:3equ}
\int_{0}^\gD \cL^{\scriptscriptstyle{\textrm{MW}}}_{\gb_c + y \gD}(\theta) \dd \theta\, \sim\, 
\gD \int_{0}^1 \widehat \cL_{\gD, \gb_c + y\gD} (c_{\gb_c }x)\dd x\, \sim\, 
\gD^2 \int_{0}^1 \cL_{1,C_{\gb_c}\ga}( c_{\gb_c}x)\dd x
\end{equation}
and we remind the reader that $\cL_{1,C_{\gb_c}\ga}( c_{\gb_c}x)$ has the explict expression \eqref{eq:formula}.
Therefore, up to two inessential  constants we arrived at \eqref{eq:toyMW}. We did not fully justify the 
equivalences in \eqref{eq:3equ}, but this is not really the main problem: the main unresolved mathematical issue is that 
what we are after is proving that, for a fixed (possibly extremely small) value of $\gD$,  the leftmost term in 
\eqref{eq:3equ} is a $C^\infty$ function of $y$ at $0$ and that the same expression is not analytic at zero. McCoy and Wu instead argue (and we prove in Theorem~\ref{th:MW}) that 
$\ga \mapsto \int_{0}^1 \cL_{1,C_{\gb_c}\ga}( c_{\gb_c}x)\dd x$ has these properties: but this second statement does not imply the first. 
\medskip

We now complement our discussion with the analysis of the specific distribution chosen for the disorder law in \cite{MW1,MWbook}. We also 
discuss more in detail the change of variable $\ga(\gb)$. 

\subsubsection{Analysis of the distribution chosen by McCoy and Wu \cite{MW1,MWbook}}
 McCoy and Wu consider the disordered variable $\gl=\tanh^2( \gb E_2)$ that depends on a parameter that they call $\NMW$ and it is large: in fact
 \begin{equation}
 \label{eq:gDNMW}
 \gD\, =\, \NMW^{-2}\, .
 \end{equation}
 The density  of $\gl$ is 
  supported on $(0, \gl_0)$   and  equal to 
 $\NMW \gl_0^{-\NMW}y^{\NMW-1}$ for $y\in (0, \gl_0)$. Necessarily  $\gl_0=\gl_0(\gb)= \tanh^2(\gb E_2^*)$, with 
 $E_2^*$ the maximum value that the random variable $E_2$ can reach. The density of $ Z=  Z_\NMW$ 
 (recall that $Z$ is defined in \eqref{eq:MWm-2})
 is therefore
 \begin{equation}
 \label{eq:densitylambda1}
 \NMW \gl_1^{-\NMW}y^{\NMW-1} \ind_{(0, \gl_1)}(y)\ \ \ \ \text{ with } \ \gl_1(\gb)\, =\, e^{4\gb E_1}\gl_0(\gb)\, .
 \end{equation}
 Note that for every $\nu \in (-\NMW, \infty)$ 
 \begin{equation}
 \label{eq:betaA}
 \bbE \left[ \left( Z_\NMW \right)^\nu\right]\, =\, \frac{\gl_1(\gb)^\nu}{1+ \frac \nu \NMW}\, ,
 \end{equation} 
 and we want to solve for $\ga=\ga(\gb)\neq 0$ the equation
 \begin{equation}
  \label{eq:alphaA}
{\bbE \left[ \left(  Z_\NMW \right)^\ga\right]\, -1} \, =\, \frac{\gl_1(\gb)^\ga - 1- \frac \ga \NMW}{1+ \frac \ga \NMW}\, =\, 0\, .
 \end{equation} 
  On one hand  we compute
 \begin{multline}
 \log (\gl_1(\gb))\, =\\
  \log \tanh ^2 (\gb_c E_2^*) - \bbE\left[\log \tanh ^2 (\gb_c E_2)\right] + 4 (\gb -\gb_c)+ 
 \log \tanh ^2 (\gb E_2^*)- \log \tanh ^2 (\gb_c E_2^*)\, ,
 \end{multline} 
 and a straightforward computation yields 
 \begin{equation}
 \bbE\left[\log \tanh ^2 (\gb E_2)\right]\, =\,  \log \tanh ^2 (\gb E_2^*) - \frac 1  \NMW \, ,
 \end{equation}
 so  for $\gb$ close $\gb_c$ we have 
 \begin{equation}
  \label{eq:gl1logE}
 \log \gl_1(\gb)\, =\, \frac 1  \NMW  + (\gb -\gb_c) \left( 4E_1 + \frac 1 {\sinh (2 \gb_c E_2^*)}\right) + O\left( ( \gb- \gb_c)^2\right)\, .
 \end{equation} 
 On the other hand from \eqref{eq:alphaA}
we see that if $\ga$ is fixed (so we look at $\gb$ as a function of $\ga$) we have 
 \begin{equation}
 \label{eq:gl1log}
 \log \gl_1(\gb) \,=\, \frac{\log\left(1+ \frac \ga \NMW\right)} \ga \stackrel{\NMW \to \infty}=\,   \frac 1\NMW- \frac \ga {2\NMW^2}+ O\left( \frac 1{\NMW^3}\right)\, .
\end{equation}
By comparing \eqref{eq:gl1logE} and \eqref{eq:gl1log} we see that if $(\gb-\gb_c) \NMW^2=O(1)$ then 
\begin{equation}
\label{eq:MWbetas}
\ga (\gb) \, =\, -(\gb -\gb_c) {\NMW^2} \left( 8E_1 + \frac 2 {\sinh (2 \gb_c E_2^*)}\right) + O\left( \NMW^{-1}\right)\, .
\end{equation}

\subsubsection{On the relation between $\gb$ and $\ga$}
Here are the details of the important map that relates $\gb$ and $\ga$: 
\medskip

\begin{lemma}
\label{th:gagb}
Assume that the support of the random variable $E_2$ is bounded away from zero, so $Z= \exp(4 \gb E_2) 
\tanh^2(\gb E_2)$ is supported on a compact subinterval of $(0, \infty)$. Assume also that $E_2$ is not constant. 
Then the  equation
\begin{equation}
\frac{\bbE\left[ Z^\ga \right]-1} \ga \, =\, 0\, ,
\end{equation}
has a unique real solution $\ga$ for every $\gb>0$. This defines a map
$\gb \mapsto \ga(\gb)$ from $(0, \infty)$ to $\bbR$. This map is decreasing, hence it is a bijection,  and it is real analytic.
\end{lemma}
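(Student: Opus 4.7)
The plan is to parametrize the problem through the moment generating function $\phi(\ga,\gb):=\bbE[Z^\ga]$, which is jointly real-analytic on $\bbR\times(0,\infty)$ because $Z$ takes values in a fixed compact subinterval of $(0,\infty)$, and through its continuously extended secant slope at the origin,
\begin{equation}
G(\ga,\gb)\,:=\,\sum_{n=1}^{\infty}\frac{\ga^{n-1}}{n!}\,\bbE\bigl[(\log Z)^{n}\bigr]\,=\,\begin{cases}(\phi(\ga,\gb)-1)/\ga,&\ga\neq 0,\\ \bbE[\log Z],&\ga=0.\end{cases}
\end{equation}
$G$ inherits joint real-analyticity on $\bbR\times(0,\infty)$ from $\phi$, and the equation in the lemma is $G(\ga,\gb)=0$.

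First I would establish the two monotonicities on which everything rests. Since $E_2$ is not a.s.\ constant and $\gb>0$, $\log Z$ is not a.s.\ constant, so Jensen's inequality makes $\phi(\cdot,\gb)$ strictly convex; a standard consequence (the secant-slope characterization of strict convexity applied at the base point $0$) is that $\partial_\ga G(\ga,\gb)>0$ for every $(\ga,\gb)$. Differentiating under the expectation, which is legitimate because $Z$ and $\partial_\gb\log Z=4E_1+4E_2/\sinh(2\gb E_2)$ are bounded on the support, yields
\begin{equation}
\partial_\gb G(\ga,\gb)\,=\,\bbE\!\left[Z^\ga\left(4E_1+\tfrac{4E_2}{\sinh(2\gb E_2)}\right)\right]\,>\,0,
\end{equation}
since the integrand is strictly positive almost surely.

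Granted these, the implicit function theorem immediately produces the desired real-analytic branch $\ga(\gb)$ at any point where $G=0$ has a real solution, and the chain rule gives $\ga'(\gb)=-\partial_\gb G/\partial_\ga G<0$, so $\ga$ is strictly decreasing; combined with continuity this yields a bijection onto its image in $\bbR$.

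The main obstacle is to secure existence and surjectivity across the whole of $(0,\infty)$. For existence one examines the limits of $G(\cdot,\gb)$ at $\pm\infty$: in the \emph{frustrated} regime where both $\bbP(Z>1)>0$ and $\bbP(Z<1)>0$, $\phi$ diverges on both sides, so $G(\ga,\gb)\to\pm\infty$ as $\ga\to\pm\infty$, and strict monotonicity then delivers a unique real zero. For surjectivity one tracks the degeneration of $Z$ at the endpoints of the $\gb$-axis: as $\gb\searrow 0$ one has $Z\sim(\gb E_2)^2\to 0$ almost surely, and for any fixed $\ga>0$ this forces $\phi(\ga,\gb)\to 0<1$, pushing the unique zero $\ga(\gb)$ to $+\infty$; symmetrically, as $\gb\to\infty$, $Z\sim e^{4\gb E_1}\to\infty$ almost surely and the same argument with $\ga<0$ forces $\ga(\gb)\to-\infty$. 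The delicate step is verifying that the frustration condition holds (or can be replaced by an appropriate limiting statement) throughout the range and that these endpoint limits are indeed attained monotonically, so that $\ga$ sweeps out all of $\bbR$.
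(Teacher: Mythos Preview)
Your approach is essentially the paper's: both define the analytically extended secant slope (the paper calls it $f$), establish $\partial_\ga G>0$ via strict convexity of $\ga\mapsto\bbE[Z^\ga]$, establish $\partial_\gb G>0$, and then invoke the analytic implicit function theorem to obtain the decreasing real-analytic branch. The paper is terser on $\partial_\gb f>0$ (it argues abstractly through the monotonicity of $Z$ in $\gb$ together with the division by $\ga$, rather than writing out your explicit formula $\bbE[Z^\ga\,\partial_\gb\log Z]$), and it says nothing at all about existence of a zero for every $\gb$ or surjectivity onto $\bbR$, simply stopping at the implicit function theorem. Your added discussion of existence (via frustration) and surjectivity (via the endpoint limits $\gb\to 0^+$ and $\gb\to\infty$) therefore goes beyond what the paper supplies, and your flagging of the frustration hypothesis as the delicate point is well placed.
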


\medskip

\noindent
\emph{Proof.} 
Let $f: \bbR \times (0,\infty) \to \bbR$ be the function defined by
$f(\ga, \gb):=\frac{\bbE\left[ Z^\ga \right]-1} \ga$, for $\ga\in \bbR \setminus \{0\}$
for $\alpha \neq 0$, and $f(0, \gb):= \bbE [\log Z]$.
It is straightforward to see, using the support properties of $E_2$, that $f$ is real analytic on its entire domain. 
Then we observe that, for fixed $\ga$, $Z^\ga$ is an increasing function of $\gb$ and, by the support properties,
this implies that $\partial_\gb f(\ga, \gb)>0$ for every $\gb>0$ and $\ga\in \bbR$.
On the other hand  
 if we set $g_\gb (\ga)= \bbE\left[ Z^\ga \right]-1$
we have that $\partial_\ga f(\ga, \gb)=( \ga g'_\gb(\ga)- g_\gb (\ga)) / \ga^2$. 
But $g_\gb (\cdot)$ is  (strictly) convex and $g_\gb (0)=0$: so $ \ga g'_\gb(\ga)- g_\gb (\ga)>0$ for $\ga\neq 0$
and therefore $\partial_\ga f(\ga, \gb)>0$ for $\ga\neq 0$. For $\ga=0$ it suffices to perform a Taylor expansion
of $g_\gb(\ga)$ at $\ga=0$ to see that $\partial_\ga f(\ga, \gb)\vert_{\ga=0}= g^{\prime \prime}_\gb (0)/2>0$.
The proof is completed by  applying the  Implicit Function Theorem for real analytic functions \cite{cf:primer}.
\qed

\section*{Acknowledgments} 
We are very grateful to  Bernard Derrida for very insightful discussions, and to an anonymous referee for pointing out significant errors in the statement and proof of Proposition~\ref{prop:asympt} in an earlier manuscript. 
Part of this work was developed while G.~G.\ and R.~L.~G.\ were visiting IHP (Paris) during the spring-summer 2017 trimester.  G.~G.\ acknowledges the support of grant ANR-15-CE40-0020.  The work of
R.~L.~G.\
was funded by the European Research Council under the European Union's Horizon 2020 Programme, ERC Consolidator Grant UniCoSM (grant agreement no 724939).


\begin{thebibliography}{99}


\bibitem{cf:AKQ}
Alberts, T.,  Khanin, K. and Quastel, J.; \emph{The intermediate disorder regime for directed polymers in dimension $1 + 1$}. Ann. Probab. {\bf 42} (2014), 1212-1256.

\bibitem{cf:Baxter} 
 Baxter, R. J.; \emph{Exactly solved models in statistical mechanics.} Academic Press,  1982.

\bibitem{cf:BL}
   Bougerol, P.  and Lacroix, J.;
    \emph{Products of random matrices with applications to {S}chr\"odinger
              operators}.
   Progress in Probability and Statistics 
    {\bf 8},
 {Birkh\"auser Boston, Inc., Boston, MA},
      {1985}.
      
\bibitem{cf:bouchaud} 
 Bouchaud, J.-P., Comtet, A., Georges, A.  and Le Doussal, P.;
\emph{Classical diffusion of a particle in a one-dimensional random force field}, 
Ann. Physics {\bf 201} (1990),   285-341.       
     
     
\bibitem{cf:CSZ}      
 Caravenna, F.,   Sun, R. and  Zygouras, N.;
\emph{Polynomial chaos and scaling limits of disordered systems}.
J. Eur. Math. Soc. {\bf 19} (2017), 1-65.

\bibitem{CattiauxChafaiGuillin}
Cattiaux, P.,  Chafa\"i, D. and Guillin, A.; 
\emph{Central limit theorems for additive functionals of ergodic Markov diffusions processes.}
ALEA Lat. Am. J. Probab. Math. Stat. \textbf{ 9} (2012), 337-382.

      
  \bibitem{cf:contemp}
   Cohen, J. E.,   Kesten, H.  and  Newman,  C. M. (Editors); 
   \emph{Random matrices and their applications.} 
Proceedings of the AMS-IMS-SIAM joint summer research conference held at Bowdoin College, Brunswick, Maine, June 17-23, 1984. Contemporary Mathematics, {\bf 50}, American Mathematical Society,  1986.

\bibitem{cf:desbois}
 Comtet, A., Desbois, J. and  Monthus, C.;  \emph{Localization properties in one-dimensional disordered supersymmetric quantum mechanics}, Ann. Physics {\bf 239} (1995),  312-350. 


\bibitem{cf:CLTT}
Comtet, A., Luck, J. M., Texier, C. and
 Tourigny, Y.;
 \emph{The Lyapunov Exponent of Products of Random $2\times 2$ Matrices Close to the Identity.}
  J. Statist. Phys. {\bf 150} (2013), 13-65.
  
 \bibitem{cf:CTT} Comtet, A., Texier, C. and  Tourigny, Y.;
\emph{Lyapunov exponents, one-dimensional Anderson localization and products of random matrices}, 
J. Phys. A {\bf 46} (2013), 254003, 20 pp.  


\bibitem{CPV}
Crisanti, A., Paladin, G. and Vulpiani, A.;
\newblock \emph{{Products of random matrices in statistical physics}}, volume
  104 of \emph{{Springer Series in Solid-State Sciences}}.
\newblock Berlin: Springer-Verlag, 1993.


 
 
\bibitem{DH}
Derrida, B. and Hilhorst, H. J.;
\emph{Singular behaviour of certain infinite products of random {$2\times 2$} matrices}.
J. Phys. A \textbf{16} (1983), 2641-2654. 


  
  \bibitem{cf:dubois}
   Dubois, L.;
    \emph{Real cone contractions and analyticity properties of the characteristic exponents},  Nonlinearity {\bf 21} (2008),  2519-2536.

\bibitem{Dunster}
Dunster, T.M.;
\emph{Bessel functions of purely imaginary order, with an application to second-order linear differential equations having a large parameter}.
SIAM J. Math. Anal. \textbf{21} (1990), 995-1018.

\bibitem{EthierKurtz}
Ethier, S. and Kurtz, T.;
\emph{Markov processes. Characterization and convergence.}
Wiley Series in Probability and Mathematical Statistics: Probability and
   Mathematical Statistics.
1986.

\bibitem{cf:dfisher} Fisher, D. S.;
 \emph{Critical behavior of random transverse-field Ising spin chains}, Phys. Rev. B {\bf 51} (1995), 6411-6461.
 

\bibitem{cf:FW}
 Freidlin, M. I. and Wentzell, A. D.;  \emph{Random perturbations of dynamical systems}.  Third edition. Grundlehren der Mathematischen Wissenschaften, \textbf{260},  Springer, 2012.

\bibitem{Friedlander}
  Friedlander, F. G.;
  \newblock{Diffraction of Pulses by a Circular Cylinder},
  \emph{Comm. Pure and Appl. Math.} \textbf{7} (1954), 705-732.
  
 \bibitem{cf:FrLl}
 Frisch, H.~L. and Lloyd, S.~P.;
 \emph{Electron levels in a one-dimensional random lattice}, Phys. Rev. {\bf 120} (1960), 1175-1189.
 

\bibitem{cf:GGG}
Genovese, G.,  Giacomin, G. and  Greenblatt, R. L.;
\emph{Singular behavior of the leading Lyapunov exponent of a product of random 2$\times$2 matrices},
{Commun. Math. Phys.} \textbf{351} (2017), 923-958. 

\bibitem{cf:GB} Giacomin, G.; \emph{Random polymer models}, Imperial College Press, World Scientific, 2007.
 
 \bibitem{cf:G} Giacomin, G.;
  \emph{Disorder and critical phenomena through basic probability models}, \'Ecole d'\'et\'e de probablit\'es de Saint-Flour XL-2010, Lecture Notes in Mathematics {\bf 2025}, Springer, 2011.

\bibitem{cf:grabsch} 
Grabsch, A., Texier, C. and   Tourigny, Y.; \emph{One-dimensional disordered quantum mechanics and Sinai diffusion with random absorbers},  J. Stat. Phys. {\bf 155} (2014),  237-276.

\bibitem{cf:benjamin} B.~Havret, \emph{Regular expansion for the characteristic exponent of a product of $2 \times 2$ random matrices}, arXiv:1804.06166

\bibitem{cf:KS} 
Karatzas, I. and  Shreve, S. E.;
\emph{Brownian Motion and Stochastic Calculus}.
Graduate Texts in Mathematics
{\bf 113}, Springer, 1998.

\bibitem{cf:primer}
 Krantz, S. G. and  Parks, H. R.;
 \emph{A primer of real analytic functions},   Birkh\"user Verlag,  1992.

\bibitem{Kaup}
Kaup, L. and Kaup, B.;
\emph{Holomorphic Functions of Several Variables}.
1983.


\bibitem{cf:Luck}
 Luck, J. M.;   \emph{Critical behavior of the aperiodic quantum Ising chain in a transverse magnetic field}, J. Statist. Phys. {\bf 72} (1993),  417-458.

\bibitem{MaruyamaTanaka}
Maruyama, G. and Tanaka, H.;
\emph{ Ergodic property of $N$-dimensional recurrent Markov processes}. Mem. Fac. Sci. Kyushu Univ. Ser. A  \textbf{13} (1959), 157-172.

\bibitem{MWbook}
McCoy, B. M. and Wu, T. T.;
\emph{The Two-Dimensional Ising Model}. Harvard University Press, 1973.

\bibitem{MW1}
McCoy, B. M. and Wu, T. T.;
\emph{Theory of a two-dimensional {I}sing model with random impurities.
  {I}. Thermodynamics}.
Phys. Rev. \textbf{176} (1968), 631-643.

\bibitem{DLMF}
Olver, F. W. J., Olde Daalhuis, A. B.,  Lozier, D. W.,  Schneider, B. I.,  Boisvert, R. F.,   Clark, C. W.,   Miller, B. R. and  Saunders, B. V. (Editors);
  \emph{NIST Digital Library of Mathematical Functions}.
  http://dlmf.nist.gov/, Release 1.0.16 of 2017-09-18. 


\bibitem{NL}
 Nieuwenhuizen, Th. M. and Luck, J. M.;  \emph{Exactly soluble random field Ising models in one dimension}. 
{J. Phys. A} \textbf{19} (1986), 1207-1227.

\bibitem{PardouxVeretennikov}
 Pardoux, E. and  Veretennikov, A. Yu.;
\emph{On the Poisson equation and diffusion approximation. I}. \emph{Ann. Probab.} \textbf{29} (2001), 1061-1085.

\bibitem{cf:ramola}
 Ramola, K. and Texier, C.; \emph{Fluctuations of random matrix products and 1D Dirac equation with random mass}, J. Stat. Phys. {\bf 157} (2014), 497-514.

\bibitem{RogersWilliams}
Rogers, L. and Williams, D.;
\emph{Diffusions, Markov processes, and martingales.} Vol. 2. 
It\^o calculus. Cambridge University Press, Cambridge, 2000.

\bibitem{Ruelle}
Ruelle, D.;
\emph{Analytic properties of the characteristic exponents of random matrix products}.
\newblock \emph{Adv. Math} \textbf{32} (1979), 68-80.

\bibitem{cf:Ruelle-book}
Ruelle, D.; \emph{Thermodynamic formalism. The mathematical structures of equilibrium statistical mechanics.} Second edition. Cambridge Mathematical Library. Cambridge University Press,  2004.

\bibitem{cf:Sadel} 
Sadel, C. and  Schulz-Baldes, H.; \emph{Random Lie group actions on compact manifolds: a perturbative analysis.}
Ann. Probab. {\bf 36} (2010), 2224-2257.

\bibitem{cf:schomerus}
 Schomerus, H.,  and  Titov, M.;
\emph{Statistics of finite-time Lyapunov exponents in a random time-dependent potential}, 
Phts. Rev. E {\bf 66} (2002), 066207 (11 pages).

\bibitem{ShankarMurthy}
Shankar, R. and Murthy, G.;
\emph{Nearest-neighbor frustrated random-bond model in \textit{d=2}: Some
  exact results}.
Phys. Rev. B \textbf{36} (1987), 536-545.


\bibitem{StroockVaradhan}
Stroock, D. and Varadhan, S.; 
\emph{Multidimensional diffusion processes}.
Classics in Mathematics. Springer-Verlag, Berlin, 2006.

\bibitem{cf:marina} Taleb, M.; 
\emph{Large deviations for a Brownian motion in a drifted Brownian potential}, 
Ann. Probab. {\bf 29} (2001), 1173-1204. 

\bibitem{cf:Zanon}
Zanon, N. and  Derrida, B.; 
\emph{Weak disorder expansion of Lyapunov exponents in a degenerate case.} J. Stat.
Phys. {\bf 50} (1988), 509-528.

\end{thebibliography}
\end{document}